\setlist{nosep}
\newtheorem{thm}{Theorem} 
\newtheorem{lem}{Lemma}
\newtheorem{cor}{Corollary}
\newtheorem{defn}{Definition}
\newcommand{\Trans}{\operatorname{Trans}}
\begin{document}

\title{Distributed Batch Matrix Multiplication: Trade-Offs in Download Rate, Randomness, and Privacy}

\author{Amirhosein~Morteza and
        R\'emi~A.~Chou
\thanks{The authors are with the Department of Computer Science and Engineering, The University of Texas at Arlington, Arlington, TX 76019. This work was supported in part by NSF grant CCF-2425371. Part of this work was presented at the  60th Annual Allerton Conference on Communication, Control, and Computing, Allerton’24 \cite{allerton2024}. 
}}

\markboth{}%
{Morteza \MakeLowercase{\textit{et al.}}: Distributed Matrix Multiplication: Download Rate, Randomness and Privacy Trade-Offs}


\maketitle

\begin{abstract}
We study the trade-off between communication rate and privacy for distributed batch matrix multiplication of two independent sequences of matrices $\mathbf{A}$ and $\mathbf{B}$ with uniformly distributed entries. In our setting, $\mathbf{B}$ is publicly accessible by all the servers while $\mathbf{A}$ must remain private. A user is interested in evaluating the product $\mathbf{AB}$ with the responses from the $k$ fastest servers. For a given parameter $\alpha \in [0, 1]$, our privacy constraint must ensure that any set of $\ell$ colluding servers cannot learn more than a fraction $\alpha$ of $\mathbf{A}$. Additionally, we study the trade-off between the amount of local randomness needed at the encoder and privacy. Finally, we establish the optimal trade-offs when the matrices are square and identify a linear relationship between information leakage and communication rate.
\end{abstract}

\begin{IEEEkeywords}
Data Privacy, Secure Distributed Matrix Multiplication, Secret Sharing.
\end{IEEEkeywords}

\section{Introduction}
\IEEEPARstart{T}{he} task of multiparty computation with security guarantees, first explored in \cite{yao1982protocols,chaum1988multiparty}, has recently expanded to include outsourcing large-scale matrix multiplication tasks to distributed servers to speed up computation. It has applications in machine learning, signal processing, data encryption, and computational efficiency in cloud computing, e.g., \cite{soleymani2020privacy,fu2017secure,zhu2020secure,kakar2019capacity}.  For instance, privacy-preserving machine learning may involve training on encrypted data and protecting private information like health records.    In several applications, only one matrix needs to be private.  \cite{jiang2018secure} and \cite{gilad2016cryptonets} explore secure outsourcing matrix computations in neural networks, requiring the privacy of input matrices ($\mathbf{A}$) while allowing computation results or model parameters ($\mathbf{B}$) to be openly handled. This setup exemplifies the principle of keeping sensitive data private while utilizing public model parameters.

The information-theoretic investigation of secure distributed matrix multiplication emerged in \cite{chang2018capacity}, where two matrices $A$ and $B$ are securely encoded and transmitted to $N$ servers by the user, who retrieves $AB$ from the data downloaded. However, a scenario where a controlled amount of information leakage is permissible can help reduce communication complexity. To study the trade-off between privacy leakage and communication rate, we consider a setting with a newly defined privacy constraint that allows a controlled amount of leakage and design a coding scheme that meets such constraint.

In this paper, we study the problem of secure batch matrix multiplication for two sequences of matrices, $\mathbf{A}$ and $\mathbf{B}$, independently and uniformly distributed over a finite field. The user is interested in distributing the computation of the product $\mathbf{AB}$ over $N$ servers. By downloading responses from the $k$ fastest servers, the user can retrieve $\mathbf{AB}$ (Recoverability Constraint). The download rate is the ratio of the number of bits required to represent the computation result to the total number of bits that the servers must transmit to the user. Unlike previous studies, which considered perfect privacy, e.g., \cite{jia2021capacity,kakar2019capacity,chang2018capacity}, in our setting, a controlled amount of information leakage is permissible, meaning that, for a given parameter $\alpha \in [0, 1]$, no more than a fraction $\alpha$ of information about $\mathbf{A}$ can be learned by any set of $\ell$ colluding servers (Privacy Constraint). The capacity is defined as the supremum of the download rate. The capacity of this model has been characterized in \cite{chang2018capacity, jia2021capacity} when $\alpha = 0$, and its characterization when the matrices do not have uniformly distributed entries is an open problem \cite{jia2021capacity}. Our main contributions are:
 \begin{itemize}
    \item [(i)] Formalizing a new problem setting that enables the study of the trade-off between download rate and privacy leakage. Our results generalize \cite[Theorem 1]{chang2018capacity} and \cite[Theorem 3]{jia2021capacity} obtained when $\alpha = 0$.  
    \item [(ii)] Understanding the trade-off between privacy leakage and local randomness to efficiently use this costly resource at the encoder. Specifically, we determine bounds for the amount of randomness needed to meet the privacy constraint, which to the best of our knowledge, no previous works had investigated, even when $\alpha = 0$. 
    \item [(iii)] Characterizing the capacity for square matrices and showing a download rate gain of \(\min \left(\frac{(k - \ell) \alpha}{k(1 - \alpha)}, \frac{\ell}{k}\right)\) compared to the case $\alpha  = 0$. We also identify a linear relationship between communication rate and privacy leakage by finding that the capacity is proportional to \(\frac{1}{1- \alpha}\) when $\alpha < \ell / k$. Finally, we establish the optimal rate of local randomness needed at the encoder and show a gain of $\min \left(\frac{\alpha k}{k - \ell}, \frac{\ell}{k - \ell}\right)$ compared to the case $\alpha = 0$.
\end{itemize}

\subsection{Related works} \label{related_works}

The information-theoretic investigation of secure distributed matrix multiplication emerged in \cite{chang2018capacity}, where two matrices $A$ and $B$ are securely encoded and transmitted to $N$ servers by the user who retrieves $AB$ from the data downloaded. The authors designed two models: a one-sided secure model where only matrix $A$ is private, and a fully secured model where both matrices are private. Recent works, e.g., \cite{zhu2020secure, aliasgari2020private, kakar2019capacity}, aimed to optimize communication overheads. Recently, \cite{jia2021capacity} considered the batch multiplication of a sequence of matrices with uniformly distributed entries and showed that the capacity of a multi-message X-secure T-private information retrieval problem provides an upper bound on the download rate. This problem is also explored in \cite{lopez2022secure, d2020gasp,yu2020coded,yu2020entangled, makkonen2023algebraic,machado2023hera,das2021efficient,chen2021gcsa,hasirciouglu2021bivariate,pradhan2021factored} to investigate the use of coding techniques to reduce communication costs. Additionally, \cite{mital2022secure} and \cite{nodehi2018limited} considered the setting with distributed nodes where data does not originate at the user requesting the computation. These works focus on reducing the download and upload rate while preserving privacy.  

Note that the above references only considered perfect security without any information leakage in their problem settings. Notably, \cite{bitar2024sparsity} has explored the trade-off between privacy and sparsity in distributed computing by examining sparse secret-sharing schemes, where increased sparsity results
in weaker privacy. We also note that related works have investigated the general research direction considered in this paper, namely communication complexity of secure function computation under information leakage, e.g., \cite{chou2022function,chou2024private}.

Compared to previous works, our study demonstrates how relaxing privacy constraints can enhance communication efficiency in distributed matrix multiplication.

\subsection{Main differences with previous works}

 Previous works consider no privacy leakage, ensuring that any set of colluding servers cannot obtain any information about private matrices. To the best of our knowledge, no previous work studies the capacity for secure distributed matrix multiplication when information leakage is allowed. In this study, we define a new problem setting that incorporates a privacy constraint with an information leakage parameter $\alpha$, which could not be addressed with previous techniques, as detailed next. With such a privacy constraint, we obtain bounds on the capacity that generalize previously found bounds in \cite[Theorem 1]{chang2018capacity} and \cite[Theorem 3]{jia2021capacity}. Another contrast between this work and previous studies is that we bound the optimal rate of local randomness needed at the encoder to satisfy the privacy constraint, making this the first study to explore such bounds, even when $\alpha =0$.

In our setting, a significant challenge lies in satisfying the new privacy constraint, preventing any set of $\ell < k$ colluding servers from learning more than a fraction \(\alpha \in [0, 1]\) of information about \(\mathbf{A}\). We cannot rely on traditional secret sharing \cite{ adi1979share}, which does not allow any information leakage.  Inspired by recent studies on the trade-offs among privacy, communication rate, and storage requirement \cite{chou2024secure,Maryam-Remi2023,morteza,chou2020secure} in the context of secret sharing, we leverage the combination of two ramp secret-sharing schemes \cite{yamamoto1986secret} —a strategy not previously investigated— to perform matrix multiplication and allow a controlled privacy leakage. 
Specifically, our proposed construction distinguishes two cases: 
(i) $\alpha \geq \frac{\ell}{k}$, where we consider a ramp secret-sharing scheme where no local randomness is needed, and 
(ii) $\alpha < \frac{\ell}{k}$, where local randomness is necessary to ensure privacy, and we use a combination of two ramp secret-sharing schemes.

 Another distinction in this study lies in the analysis of the proposed coding scheme. We require a more precise analysis to accommodate the combination of two ramp secret-sharing schemes, as the analysis of a single ramp secret-sharing scheme, such as in \cite{jia2021capacity}, is not sufficient for our case. Compared to previous studies, which only considered the case \(\alpha = 0\), where no information is leaked, our analysis reveals a trade-off among communication rate, local randomness rate, and privacy.

\subsection{Paper organization}
The remainder of the paper is organized as follows. We define the problem in Section~\ref{Problem Statement} and present our main results in Section~\ref{Main Results}. We discuss our converse and achievability in Sections~\ref{Converse} and \ref{Achievability}, respectively. Finally, we provide concluding remarks in Section~\ref{Conclusion}.

\section{Problem Statement} \label{Problem Statement}

\textbf{Notation}: Let $\mathbb{F}_{q}$ be a finite field characterized by a large prime number $q$. Let $\mathbb{Q}$, $\mathbb{N}$, and $\mathbb{R}$ be the set of rational, natural, and real numbers, respectively. For any $a, b \in \mathbb{R}$, define $[a] \triangleq [1, \lceil a \rceil] \cap \mathbb{N}$, $[a: b] \triangleq [\lfloor a \rfloor , \lceil b \rceil] \cap \mathbb{N}$, and $[a: b) \triangleq [a:b-1]$. Sets are represented by calligraphic letters, and sequences of matrices are represented by bold uppercase letters. Let $[a]^{= b} \overset{\mathrm{\Delta}}{=} \{\mathcal{I} \subseteq [a]: |\mathcal{I}| = b \}$ be the set of all the subsets of $[a]$ that have cardinality $b$; $[a]^{\leq b} \overset{\mathrm{\Delta}}{=} \{\mathcal{I} \subseteq [a]: |\mathcal{I}| \leq b \}$ be the set of all the subsets of $[a]$ that have a cardinality less than or equal to $b$. Logarithms are defined with base $q$. Also, define $[a]^{+} \triangleq \max \{0, a\}$. The transpose of a matrix $M$ is denoted by $\Trans(M)$.

\begin{defn}\label{Def_1}
 Let $N, r\in \mathbb{N}$, and $k \in [N]$. An $(N, k, r)$-coding scheme consists of 
\begin{itemize}
    \item $N\geq 2$ servers; 
    \item Two sequences of independent matrices, $\mathbf{A} \triangleq (A_{s})_{s \in [m]}$ and $\mathbf{B} \triangleq (B_{s})_{s \in [m]}$, where $m$ is a large integer. For any $s \in [m]$, the matrices $A_s$ and $B_s$ are assumed to be uniformly distributed over $\mathbb{F}_q^{C \times D}$ and $\mathbb{F}_q^{D \times E}$, respectively. $\mathbf{B}$ is public, while $\mathbf{A}$ is private and accessible only by the user;
    \item Local randomness in the form of a uniform random variable $R$ which is distributed over $\mathbb{F}_{q}^{r}$ and  independent of $(\mathbf{A},\mathbf{B})$;
    \item $N$ encoding functions $f_{i}: (\mathbf{A}, R) \mapsto \tilde{\mathbf{A}}_{i}$, $i \in [N]$, such that $\mathbf{A}$ can be recovered from encoded matrices $\tilde{\mathbf{A}}_{\mathcal{I}} \triangleq (\tilde{\mathbf{A}}_{i})_{i\in \mathcal{I}}$, $\mathcal{I} \in [N]^{\geq k}$, i.e.,
     \begin{align}
        H(\mathbf{A}|\tilde{\mathbf{A}}_{\mathcal{I}}) = 0; \label{recover_all}    
    \end{align}
    \item $N$ processing functions $ h_{i}: (\tilde{\mathbf{A}}_{i}, \mathbf{B}) \mapsto Z_{i}$, $i \in [N]$;
    \item A decoding function $d$ taking $Z_{\mathcal{I}} \triangleq (Z_i)_{i \in \mathcal{I}}$, $\mathcal{I} \subseteq [N]$, and returning an estimate of $\mathbf{AB} \triangleq (A_{s}B_{s})_{s \in [m]}$;
\end{itemize}
and operates as follows:
\begin{itemize}
    \item For all $i \in [N]$, the user sends the encoded matrices $\tilde{\mathbf{A}}_{i} \triangleq f_i(\mathbf{A}, R)$ to Server~$i$ over a private channel;
    \item For all $i \in [N]$, Server~$i$ generates a response $Z_i \triangleq h_i(\tilde{\mathbf{A}}_{i}, \mathbf{B})$;
    \item The user computes an estimate of $\mathbf{AB}$ as $d(Z_{\mathcal{I}})$, where $Z_{\mathcal{I}}$ is a sequence of received responses from the $k$ fastest~servers. 
\end{itemize}

\end{defn}

\begin{figure}[htbp] \label{Fig_1}

\centering
\tikzset {_4iaqce4wz/.code = {\pgfsetadditionalshadetransform{ \pgftransformshift{\pgfpoint{0 bp } { 0 bp }  }  \pgftransformrotate{0 }  \pgftransformscale{2 }  }}}
\pgfdeclarehorizontalshading{_3spspmajk}{150bp}{rgb(0bp)=(1,1,1);
rgb(62.5bp)=(1,1,1);
rgb(62.5bp)=(0.95,0.95,0.95);
rgb(62.5bp)=(0.93,0.93,0.93);
rgb(62.5bp)=(1,1,1);
rgb(100bp)=(1,1,1)}

  
\tikzset {_afvizhn5r/.code = {\pgfsetadditionalshadetransform{ \pgftransformshift{\pgfpoint{0 bp } { 0 bp }  }  \pgftransformrotate{0 }  \pgftransformscale{2 }  }}}
\pgfdeclarehorizontalshading{_gk7dla613}{150bp}{rgb(0bp)=(1,1,1);
rgb(37.5bp)=(1,1,1);
rgb(62.291665758405415bp)=(1,1,1);
rgb(62.5bp)=(0.95,0.95,0.95);
rgb(62.5bp)=(0.93,0.93,0.93);
rgb(100bp)=(0.93,0.93,0.93)}

  
\tikzset {_kmaaiezm6/.code = {\pgfsetadditionalshadetransform{ \pgftransformshift{\pgfpoint{0 bp } { 0 bp }  }  \pgftransformrotate{0 }  \pgftransformscale{2 }  }}}
\pgfdeclarehorizontalshading{_3ynrhjudw}{150bp}{rgb(0bp)=(1,1,1);
rgb(62.5bp)=(1,1,1);
rgb(62.5bp)=(0.95,0.95,0.95);
rgb(62.5bp)=(0.93,0.93,0.93);
rgb(62.5bp)=(1,1,1);
rgb(100bp)=(1,1,1)}

  
\tikzset {_o0ndvtxed/.code = {\pgfsetadditionalshadetransform{ \pgftransformshift{\pgfpoint{0 bp } { 0 bp }  }  \pgftransformrotate{0 }  \pgftransformscale{2 }  }}}
\pgfdeclarehorizontalshading{_8oqefxvho}{150bp}{rgb(0bp)=(1,1,1);
rgb(62.5bp)=(1,1,1);
rgb(62.5bp)=(0.9,0.9,0.9);
rgb(100bp)=(0.9,0.9,0.9)}

  
\tikzset {_6uh338bvg/.code = {\pgfsetadditionalshadetransform{ \pgftransformshift{\pgfpoint{0 bp } { 0 bp }  }  \pgftransformrotate{0 }  \pgftransformscale{2 }  }}}
\pgfdeclarehorizontalshading{_jxkzcmql9}{150bp}{rgb(0bp)=(1,1,1);
rgb(62.5bp)=(1,1,1);
rgb(62.5bp)=(0.9,0.9,0.9);
rgb(100bp)=(0.9,0.9,0.9)}

  
\tikzset {_e2xg8l0yr/.code = {\pgfsetadditionalshadetransform{ \pgftransformshift{\pgfpoint{0 bp } { 0 bp }  }  \pgftransformrotate{0 }  \pgftransformscale{2 }  }}}
\pgfdeclarehorizontalshading{_jl11j7z0x}{150bp}{rgb(0bp)=(1,1,1);
rgb(62.5bp)=(1,1,1);
rgb(62.5bp)=(0.9,0.9,0.9);
rgb(100bp)=(0.9,0.9,0.9)}

  
\tikzset {_eq40yeb6f/.code = {\pgfsetadditionalshadetransform{ \pgftransformshift{\pgfpoint{0 bp } { 0 bp }  }  \pgftransformrotate{0 }  \pgftransformscale{2 }  }}}
\pgfdeclarehorizontalshading{_meb3eig1o}{150bp}{rgb(0bp)=(1,1,1);
rgb(62.5bp)=(1,1,1);
rgb(62.5bp)=(0.9,0.9,0.9);
rgb(100bp)=(0.9,0.9,0.9)}

  
\tikzset {_idtn18m1f/.code = {\pgfsetadditionalshadetransform{ \pgftransformshift{\pgfpoint{0 bp } { 0 bp }  }  \pgftransformrotate{0 }  \pgftransformscale{2 }  }}}
\pgfdeclarehorizontalshading{_tsi6g1c3q}{150bp}{rgb(0bp)=(1,1,1);
rgb(62.5bp)=(1,1,1);
rgb(62.5bp)=(0.9,0.9,0.9);
rgb(100bp)=(0.9,0.9,0.9)}

  
\tikzset {_acwmx1e7t/.code = {\pgfsetadditionalshadetransform{ \pgftransformshift{\pgfpoint{0 bp } { 0 bp }  }  \pgftransformrotate{0 }  \pgftransformscale{2 }  }}}
\pgfdeclarehorizontalshading{_78v7e9k6w}{150bp}{rgb(0bp)=(1,1,1);
rgb(62.5bp)=(1,1,1);
rgb(62.5bp)=(0.9,0.9,0.9);
rgb(100bp)=(0.9,0.9,0.9)}
\tikzset{every picture/.style={line width=0.75pt, font=\large}} 
\scalebox{0.6}{
\begin{tikzpicture}[x=0.75pt,y=0.75pt,yscale=-1,xscale=1]

\draw  [color={rgb, 255:red, 10; green, 59; blue, 112 }  ,draw opacity=1 ] (75,261.16) .. controls (75,248.37) and (93.33,238) .. (115.94,238) .. controls (138.55,238) and (156.88,248.37) .. (156.88,261.16) .. controls (156.88,273.96) and (138.55,284.33) .. (115.94,284.33) .. controls (93.33,284.33) and (75,273.96) .. (75,261.16) -- cycle ;
\draw  [color={rgb, 255:red, 119; green, 108; blue, 4 }  ,draw opacity=1 ] (301,440.45) .. controls (301,424.74) and (313.74,412) .. (329.45,412) .. controls (345.16,412) and (357.89,424.74) .. (357.89,440.45) .. controls (357.89,456.16) and (345.16,468.89) .. (329.45,468.89) .. controls (313.74,468.89) and (301,456.16) .. (301,440.45) -- cycle ;
\draw [color={rgb, 255:red, 208; green, 2; blue, 27 }  ,draw opacity=1 ]   (149.33,69.67) -- (111.54,235.51) ;
\draw [shift={(110.88,238.44)}, rotate = 282.84] [fill={rgb, 255:red, 208; green, 2; blue, 27 }  ,fill opacity=1 ][line width=0.08]  [draw opacity=0] (8.93,-4.29) -- (0,0) -- (8.93,4.29) -- cycle    ;
\draw [color={rgb, 255:red, 208; green, 2; blue, 27 }  ,draw opacity=1 ]   (149.33,69.67) -- (300.37,244.4) ;
\draw [shift={(302.33,246.67)}, rotate = 229.16] [fill={rgb, 255:red, 208; green, 2; blue, 27 }  ,fill opacity=1 ][line width=0.08]  [draw opacity=0] (8.93,-4.29) -- (0,0) -- (8.93,4.29) -- cycle    ;
\draw [color={rgb, 255:red, 126; green, 211; blue, 33 }  ,draw opacity=1 ]   (549.88,70.33) -- (159.58,259.85) ;
\draw [shift={(156.88,261.16)}, rotate = 334.1] [fill={rgb, 255:red, 126; green, 211; blue, 33 }  ,fill opacity=1 ][line width=0.08]  [draw opacity=0] (8.93,-4.29) -- (0,0) -- (8.93,4.29) -- cycle    ;
\draw [color={rgb, 255:red, 126; green, 211; blue, 33 }  ,draw opacity=1 ]   (549.88,70.33) -- (354.12,244.44) ;
\draw [shift={(351.88,246.44)}, rotate = 318.35] [fill={rgb, 255:red, 126; green, 211; blue, 33 }  ,fill opacity=1 ][line width=0.08]  [draw opacity=0] (8.93,-4.29) -- (0,0) -- (8.93,4.29) -- cycle    ;
\draw [color={rgb, 255:red, 126; green, 211; blue, 33 }  ,draw opacity=1 ]   (549.88,70.33) -- (568.49,241.18) ;
\draw [shift={(568.82,244.16)}, rotate = 263.78] [fill={rgb, 255:red, 126; green, 211; blue, 33 }  ,fill opacity=1 ][line width=0.08]  [draw opacity=0] (8.93,-4.29) -- (0,0) -- (8.93,4.29) -- cycle    ;
\draw [color={rgb, 255:red, 208; green, 2; blue, 27 }  ,draw opacity=1 ]   (149.33,69.67) -- (529.19,257.11) ;
\draw [shift={(531.88,258.44)}, rotate = 206.26] [fill={rgb, 255:red, 208; green, 2; blue, 27 }  ,fill opacity=1 ][line width=0.08]  [draw opacity=0] (8.93,-4.29) -- (0,0) -- (8.93,4.29) -- cycle    ;
\draw [color={rgb, 255:red, 12; green, 77; blue, 159 }  ,draw opacity=1 ][line width=1.5]  [dash pattern={on 1.69pt off 2.76pt}]  (568.82,290.49) -- (357.3,420.05) ;
\draw [shift={(353.89,422.14)}, rotate = 328.51] [fill={rgb, 255:red, 12; green, 77; blue, 159 }  ,fill opacity=1 ][line width=0.08]  [draw opacity=0] (11.61,-5.58) -- (0,0) -- (11.61,5.58) -- cycle    ;
\draw  [color={rgb, 255:red, 208; green, 2; blue, 27 }  ,draw opacity=1 ] (79,29) -- (245,29) -- (245,69) -- (79,69) -- cycle ;
\draw  [color={rgb, 255:red, 65; green, 117; blue, 5 }  ,draw opacity=1 ] (464,30) -- (626,30) -- (626,70) -- (464,70) -- cycle ;
\draw  [color={rgb, 255:red, 10; green, 59; blue, 112 }  ,draw opacity=1 ] (288,265.16) .. controls (288,252.37) and (306.33,242) .. (328.94,242) .. controls (351.55,242) and (369.88,252.37) .. (369.88,265.16) .. controls (369.88,277.96) and (351.55,288.33) .. (328.94,288.33) .. controls (306.33,288.33) and (288,277.96) .. (288,265.16) -- cycle ;
\draw  [color={rgb, 255:red, 10; green, 59; blue, 112 }  ,draw opacity=1 ] (527.88,267.33) .. controls (527.88,254.53) and (546.21,244.16) .. (568.82,244.16) .. controls (591.43,244.16) and (609.76,254.53) .. (609.76,267.33) .. controls (609.76,280.12) and (591.43,290.49) .. (568.82,290.49) .. controls (546.21,290.49) and (527.88,280.12) .. (527.88,267.33) -- cycle ;
\draw [color={rgb, 255:red, 12; green, 77; blue, 159 }  ,draw opacity=1 ][line width=1.5]  [dash pattern={on 1.69pt off 2.76pt}]  (126.88,283.44) -- (300.77,422.64) ;
\draw [shift={(303.89,425.14)}, rotate = 218.68] [fill={rgb, 255:red, 12; green, 77; blue, 159 }  ,fill opacity=1 ][line width=0.08]  [draw opacity=0] (11.61,-5.58) -- (0,0) -- (11.61,5.58) -- cycle    ;
\draw [color={rgb, 255:red, 12; green, 77; blue, 159 }  ,draw opacity=1 ][line width=1.5]  [dash pattern={on 1.69pt off 2.76pt}]  (328.94,288.33) -- (329.43,408) ;
\draw [shift={(329.45,412)}, rotate = 269.77] [fill={rgb, 255:red, 12; green, 77; blue, 159 }  ,fill opacity=1 ][line width=0.08]  [draw opacity=0] (11.61,-5.58) -- (0,0) -- (11.61,5.58) -- cycle    ;
\path  [shading=_3spspmajk,_4iaqce4wz] (454.3,323.41) -- (491.4,323.41) -- (491.4,360.51) -- (454.3,360.51) -- cycle ; 
 \draw  [color={rgb, 255:red, 24; green, 90; blue, 163 }  ,draw opacity=1 ] (454.3,323.41) -- (491.4,323.41) -- (491.4,360.51) -- (454.3,360.51) -- cycle ; 

\path  [shading=_gk7dla613,_afvizhn5r] (191.86,326.33) -- (228.96,326.33) -- (228.96,363.43) -- (191.86,363.43) -- cycle ; 
 \draw  [color={rgb, 255:red, 24; green, 90; blue, 163 }  ,draw opacity=1 ] (191.86,326.33) -- (228.96,326.33) -- (228.96,363.43) -- (191.86,363.43) -- cycle ; 

\path  [shading=_3ynrhjudw,_kmaaiezm6] (312.78,324.78) -- (349.88,324.78) -- (349.88,361.88) -- (312.78,361.88) -- cycle ; 
 \draw  [color={rgb, 255:red, 24; green, 90; blue, 163 }  ,draw opacity=1 ] (312.78,324.78) -- (349.88,324.78) -- (349.88,361.88) -- (312.78,361.88) -- cycle ; 

\path  [shading=_8oqefxvho,_o0ndvtxed] (112.78,129.78) -- (149.88,129.78) -- (149.88,166.88) -- (112.78,166.88) -- cycle ; 
 \draw  [color={rgb, 255:red, 208; green, 2; blue, 27 }  ,draw opacity=1 ] (112.78,129.78) -- (149.88,129.78) -- (149.88,166.88) -- (112.78,166.88) -- cycle ; 

\path  [shading=_jxkzcmql9,_6uh338bvg] (202.78,128.78) -- (239.88,128.78) -- (239.88,165.88) -- (202.78,165.88) -- cycle ; 
 \draw  [color={rgb, 255:red, 208; green, 2; blue, 27 }  ,draw opacity=1 ] (202.78,128.78) -- (239.88,128.78) -- (239.88,165.88) -- (202.78,165.88) -- cycle ; 

\path  [shading=_jl11j7z0x,_e2xg8l0yr] (292.78,128.78) -- (329.88,128.78) -- (329.88,165.88) -- (292.78,165.88) -- cycle ; 
 \draw  [color={rgb, 255:red, 208; green, 2; blue, 27 }  ,draw opacity=1 ] (292.78,128.78) -- (329.88,128.78) -- (329.88,165.88) -- (292.78,165.88) -- cycle ; 

\path  [shading=_meb3eig1o,_eq40yeb6f] (538.78,128.78) -- (575.88,128.78) -- (575.88,165.88) -- (538.78,165.88) -- cycle ; 
 \draw  [color={rgb, 255:red, 126; green, 211; blue, 33 }  ,draw opacity=1 ] (538.78,128.78) -- (575.88,128.78) -- (575.88,165.88) -- (538.78,165.88) -- cycle ; 

\path  [shading=_tsi6g1c3q,_idtn18m1f] (380.78,129.78) -- (417.88,129.78) -- (417.88,166.88) -- (380.78,166.88) -- cycle ; 
 \draw  [color={rgb, 255:red, 126; green, 211; blue, 33 }  ,draw opacity=1 ] (380.78,129.78) -- (417.88,129.78) -- (417.88,166.88) -- (380.78,166.88) -- cycle ; 

\path  [shading=_78v7e9k6w,_acwmx1e7t] (453.78,129.78) -- (490.88,129.78) -- (490.88,166.88) -- (453.78,166.88) -- cycle ; 
 \draw  [color={rgb, 255:red, 126; green, 211; blue, 33 }  ,draw opacity=1 ] (453.78,129.78) -- (490.88,129.78) -- (490.88,166.88) -- (453.78,166.88) -- cycle ; 

\draw (313,433) node [anchor=north west][inner sep=0.75pt]   [align=center] {User};
\draw (162,49) node [anchor=center][inner sep=0.75pt]   [align=center] {$\mathbf{A} \ \triangleq \ ( A_{1} ,\ \dotsc ,\ A_{m})$};
\draw (545,49) node [anchor=center][inner sep=0.75pt]   [align=center] {$\mathbf{B} \ \triangleq \ ( B_{1} ,\ \dotsc ,\ B_{m})$};
\draw (270,144) node [anchor=north west][inner sep=0.75pt]   [align=center] {$\displaystyle f_{N}$};
\draw (300,140) node [anchor=north west][inner sep=0.75pt]   [align=center] {$\displaystyle \tilde{\mathbf{A}}_{N}$};
\draw (560,147.33) node   [align=center] {\begin{minipage}[lt]{13.52pt}\setlength\topsep{0pt}
$\displaystyle \mathbf{B}$
\end{minipage}};
\draw (120,261) node   [align=center] {\begin{minipage}[lt]{46.84pt}\setlength\topsep{0pt}
Server $\displaystyle 1$
\end{minipage}};
\draw (334,264) node   [align=center] {\begin{minipage}[lt]{46.84pt}\setlength\topsep{0pt}
Server $\displaystyle i$
\end{minipage}};
\draw (570,267) node   [align=center] {\begin{minipage}[lt]{46.84pt}\setlength\topsep{0pt}
Server $\displaystyle N$
\end{minipage}};
\draw (432,256) node [anchor=north west][inner sep=0.75pt]   [align=center] {$\displaystyle \dotsc $};
\draw (208,255) node [anchor=north west][inner sep=0.75pt]   [align=center] {$\displaystyle \dotsc $};
\draw (174,333) node [anchor=north west][inner sep=0.75pt]   [align=center] {$\displaystyle h_{1}$};
\draw (431,332) node [anchor=north west][inner sep=0.75pt]   [align=center] {$\displaystyle h_{N}$};
\draw (297,333) node [anchor=north west][inner sep=0.75pt]   [align=center] {$\displaystyle h_{i}$};
\draw (201,338) node [anchor=north west][inner sep=0.75pt]   [align=center] {$\displaystyle Z_{1}$};
\draw (323,337) node [anchor=north west][inner sep=0.75pt]   [align=center] {$\displaystyle Z_{i}$};
\draw (461,335) node [anchor=north west][inner sep=0.75pt]   [align=center] {$\displaystyle Z_{N}$};
\draw (229,483) node [anchor=north west][inner sep=0.75pt]   [align=center] {$\displaystyle \mathbf{AB} \ \triangleq \ ( A_{1} B_{1} ,\ \dotsc ,\ A_{m} B_{m})$};
\draw (253,332) node [anchor=north west][inner sep=0.75pt]   [align=center] {$\displaystyle \dotsc $};
\draw (372,334) node [anchor=north west][inner sep=0.75pt]   [align=center] {$\displaystyle \dotsc $};
\draw (179,142) node [anchor=north west][inner sep=0.75pt]   [align=center] {$\displaystyle f_{i}$};
\draw (89,143) node [anchor=north west][inner sep=0.75pt]   [align=center] {$\displaystyle f_{1}$};
\draw (120,140) node [anchor=north west][inner sep=0.75pt]   [align=center] {$\displaystyle \tilde{\mathbf{A}}_{1}$};
\draw (212,140) node [anchor=north west][inner sep=0.75pt]   [align=center] {$\displaystyle \tilde{\mathbf{A}}_{i}$};
\draw (402,148.33) node [align=center] {\begin{minipage}[lt]{13.52pt}\setlength\topsep{0pt}
$\displaystyle \mathbf{B}$
\end{minipage}};
\draw (476,148.33) node   [align=center] {\begin{minipage}[lt]{13.52pt}\setlength\topsep{0pt}
$\displaystyle \mathbf{B}$
\end{minipage}};
\end{tikzpicture}}
\caption{The matrices in \(\mathbf{A}\) are first encoded using the functions \(f_i\). These encoded matrices, denoted as \(\tilde{\mathbf{A}}_i\), are then distributed to each Server \(i \in [N]\). Server~\(i\) processes \(\tilde{\mathbf{A}}_i\) and \(\mathbf{B}\) to compute a response \(Z_i\) using the function \(h_i\). Once the computations are complete, the user collects the responses from the \(k\) fastest servers and reconstructs the product \(\mathbf{AB}\).} \label{fig}
\end{figure}
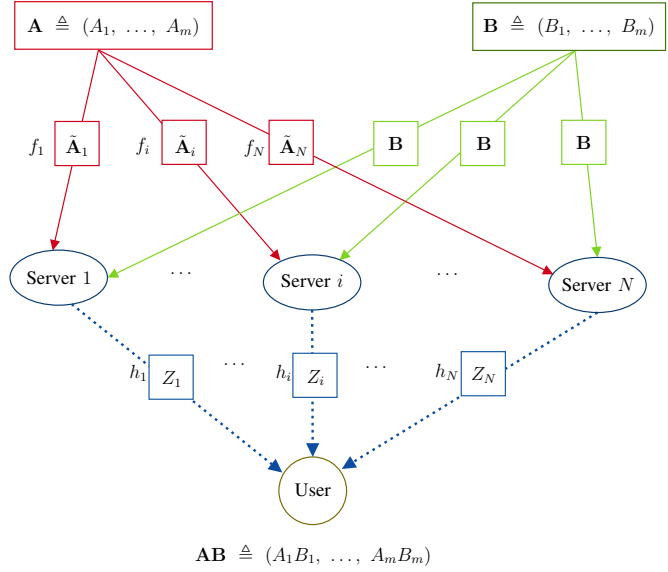

\begin{defn} \label{Def_2}
      For any  $\ell \in [0: k)$ and $\alpha \in [0, 1) \cap \mathbb{Q}$, an $(N, k, r)$-coding scheme is $(\ell, \alpha)$-private\ if
    \begin{align}
        \max_{\mathcal{I} \in [N]^{\geq k}} H(\mathbf{AB}| Z_{\mathcal{I}}) &= 0, 
        \ (Recoverability) \label{recoverability_constraint}
        \\
        \max_{\mathcal{L} \in [N]^{\leq \ell}} \frac{I(\mathbf{A};\tilde{\mathbf{A}}_{\mathcal{L}})}{H(\mathbf{A})} &\leq \alpha.\  (Privacy) \label{privacy_constraint}
    \end{align}
    An achievable rate $\Lambda_{k}$ for an $(N, k, r)$-coding scheme that satisfies \eqref{recoverability_constraint} and \eqref{privacy_constraint} is determined by the ratio 
        \begin{align}	
        	&\Lambda_{k} \triangleq \frac{H(\mathbf{AB}|\mathbf{B})}{\max_{\mathcal{I} \in [N]^{= k}} \sum_{i \in \mathcal{I}} H(Z_{i})}. \label{rate} 
        \end{align}
    The capacity $C_{(\ell, \alpha, k)}$ is the supremum of all achievable rates. Additionally, we define the optimal rate of local randomness~as
    \begin{align*}
        R_{(\ell, \alpha, k)} \triangleq \frac{\min \{ r \in \mathbb{N}: \exists (\ell, \alpha)\text{-private }(N, k, r)\text{-coding scheme}\}}{H(\mathbf{AB}|\mathbf{B})}.
    \end{align*}
\end{defn}
\eqref{recoverability_constraint} means that the responses from any subset of $k$ or more servers are sufficient to reconstruct the product $\mathbf{AB}$.  Note that since conditioning reduces entropy, it is sufficient to take the maximization over $\mathcal{I} \in [N]^{=k}$. 

\eqref{privacy_constraint} means that $\ell$ colluding servers cannot learn more than a fraction $\alpha$ of $\mathbf{A}$ from the encoded matrices $\tilde{\mathbf{A}}_{\mathcal{L}}$. 
Note that $\alpha$ is chosen as a rational number, which is not a restrictive assumption because by density of $\mathbb{Q}$ in $\mathbb{R}$, for any $\beta \in [0, 1]$, $\epsilon > 0$, there exists $\alpha \in [0, 1) \cap \mathbb{Q}$ such that $|\alpha - \beta| < \epsilon$.

In \eqref{rate}, we consider the maximum over any subset of $k$ servers in the denominator to account for the worst-case scenario. Fig.~\ref{fig} illustrates our setting.

\section{Main Results}\label{Main Results}

The following theorem establishes upper and lower bounds on the capacity.

\begin{thm}[Communication Rate] \label{th1}
For any $\alpha \in [0, 1) \cap \mathbb{Q}$ and $\ell \in [0 : k)$, the capacity $C_{(\ell, \alpha, k)}$ satisfies
    \begin{align}
        C_{(\ell, \alpha, k)} &\leq  \left\{\begin{matrix}
            \min \left(\frac{k - \ell}{k\left(1 - \alpha \max\left(1, \frac{D}{E}\right)\right)}, 1\right)& \textnormal{if }\alpha < \frac{1}{\max\left(1, \frac{D}{E}\right)}\\
               1 & \textnormal{if }\alpha \geq \frac{1}{\max\left(1, \frac{D}{E}\right)} 
        \end{matrix}\right., \label{th2_rate_upper_bound} \displaybreak[0]\\
        C_{(\ell, \alpha, k)} &\geq \left\{\begin{matrix} \frac{k - \ell}{k(1 - \alpha)} \min \left(1, \frac{D}{E}\right)\ & \textnormal{if }\alpha < \frac{\ell}{k}\\
             \min\left(1, \frac{D}{E}\right) &\textnormal{if } \alpha \geq \frac{\ell}{k} \end{matrix}\right. \ . \label{th2_rate_lower_bound}
    \end{align}
\end{thm}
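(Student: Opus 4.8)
The plan is to prove the upper bound \eqref{th2_rate_upper_bound} and the lower bound \eqref{th2_rate_lower_bound} separately. Both will use that $H(\mathbf{A})=mCD$ (each $A_s$ is uniform) and that, for $q$ large, $H(\mathbf{AB}\mid\mathbf{B})=mC\min(D,E)$ (a uniform $D\times E$ matrix has rank $\min(D,E)$ with probability tending to $1$, and conditioned on $B_s$ of rank $\rho$ the product $A_sB_s$ is uniform over a space of dimension $C\rho$); in particular the condition $\alpha H(\mathbf{A})<H(\mathbf{AB}\mid\mathbf{B})$ is equivalent to $\alpha<1/\max(1,D/E)$, which is the case split in \eqref{th2_rate_upper_bound}.

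\emph{Converse.} I would fix $\mathcal{I}\in[N]^{=k}$ and a subset $\mathcal{L}\subseteq\mathcal{I}$ with $|\mathcal{L}|=\ell$, set $\mathcal{I}'\triangleq\mathcal{I}\setminus\mathcal{L}$, and combine standard entropy inequalities with the recoverability constraint \eqref{recoverability_constraint} in the form $H(\mathbf{AB}\mid Z_{\mathcal{I}})=0$ to obtain
\begin{align*}
\sum_{i\in\mathcal{I}'}H(Z_i)\ \ge\ H(Z_{\mathcal{I}'})\ \ge\ I(\mathbf{AB};Z_{\mathcal{I}'}\mid Z_{\mathcal{L}},\mathbf{B})\ =\ H(\mathbf{AB}\mid Z_{\mathcal{L}},\mathbf{B}).
\end{align*}
To lower bound the right side I would control the leakage $I(\mathbf{AB};Z_{\mathcal{L}}\mid\mathbf{B})$: as $Z_{\mathcal{L}}$ is a deterministic function of $(\tilde{\mathbf{A}}_{\mathcal{L}},\mathbf{B})$ and $\mathbf{AB}$ of $(\mathbf{A},\mathbf{B})$, data processing conditioned on $\mathbf{B}$ gives $I(\mathbf{AB};Z_{\mathcal{L}}\mid\mathbf{B})\le I(\mathbf{A};\tilde{\mathbf{A}}_{\mathcal{L}}\mid\mathbf{B})$, and since $\mathbf{B}$ is independent of $(\mathbf{A},R)$, hence of $(\mathbf{A},\tilde{\mathbf{A}}_{\mathcal{L}})$, this equals $I(\mathbf{A};\tilde{\mathbf{A}}_{\mathcal{L}})\le\alpha H(\mathbf{A})$ by \eqref{privacy_constraint}. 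Thus $\sum_{i\in\mathcal{I}'}H(Z_i)\ge[H(\mathbf{AB}\mid\mathbf{B})-\alpha H(\mathbf{A})]^{+}$. Summing over all $\binom{k}{\ell}$ choices of $\mathcal{L}\subseteq\mathcal{I}$ and noting that each index of $\mathcal{I}$ lies in $\mathcal{I}'$ for exactly $\binom{k-1}{\ell}$ of them gives $\sum_{i\in\mathcal{I}}H(Z_i)\ge\frac{k}{k-\ell}[H(\mathbf{AB}\mid\mathbf{B})-\alpha H(\mathbf{A})]^{+}$; combined with the case $\mathcal{L}=\emptyset$, which gives $\sum_{i\in\mathcal{I}}H(Z_i)\ge H(\mathbf{AB}\mid\mathbf{B})$, and substituting the values of $H(\mathbf{A})$ and $H(\mathbf{AB}\mid\mathbf{B})$ into \eqref{rate}, this yields \eqref{th2_rate_upper_bound}.

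\emph{Achievability.} I would use ramp secret-sharing (polynomial) codes in which Server~$i$ returns its share of $\mathbf{A}$ multiplied by the public $\mathbf{B}$. Put $\tau\triangleq\bigl[\frac{\ell-\alpha k}{1-\alpha}\bigr]^{+}$, so $\tau=0$ when $\alpha\ge\ell/k$ and $k-\tau=\frac{k-\ell}{1-\alpha}$ when $\alpha<\ell/k$. For each $A_s$, I would split its rows into $k-\tau$ blocks $A_s^{(1)},\dots,A_s^{(k-\tau)}$, form a degree-$(k-1)$ matrix polynomial whose first $k-\tau$ coefficients interpolate these blocks and whose remaining $\tau$ coefficients are fresh uniform randomness, and send evaluation point $\alpha_i$ to Server~$i$; its response is the evaluation times $B_s$, i.e., the value at $\alpha_i$ of a degree-$(k-1)$ polynomial taking the values $A_s^{(j)}B_s$ at the interpolation nodes. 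Any $k$ responses interpolate that polynomial and hence recover $\mathbf{AB}$ (and the shares recover $\mathbf{A}$), while any $\ell$ shares are $\ell$ generic linear combinations of the $k-\tau$ data blocks and $\tau$ independent uniform blocks; the $\tau$ random blocks absorb $\tau$ of these combinations, so the leakage is the fraction $\bigl[\frac{\ell-\tau}{k-\tau}\bigr]^{+}$ of $\mathbf{A}$, which is at most $\alpha$. Since $H(Z_i)\le mCE/(k-\tau)$, \eqref{rate} gives the rate $\frac{k-\tau}{k}\min(1,D/E)$, equal to $\min(1,D/E)$ with $r=0$ when $\alpha\ge\ell/k$ and to $\frac{k-\ell}{k(1-\alpha)}\min(1,D/E)$ when $\alpha<\ell/k$, matching \eqref{th2_rate_lower_bound}. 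When $\frac{\ell-\alpha k}{1-\alpha}$ is not an integer I would instead combine (time-share) the two ramp schemes with $\lfloor\tau\rfloor$ and $\lceil\tau\rceil$ random coefficients, splitting $\mathbf{A}$ in the ratio making the average leakage exactly $\alpha$; a short identity shows this forces the total download to the same value, and standard sub-blocking arguments (using $m$ large and $\alpha\in\mathbb{Q}$) make all parameters integral.

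\emph{Main obstacle.} I expect the privacy analysis of the combined scheme to be the crux: one must show $\max_{\mathcal{L}\in[N]^{\le\ell}}I(\mathbf{A};\tilde{\mathbf{A}}_{\mathcal{L}})$ equals the target \emph{exactly} --- that the leakages of the two ramp components add and that each leaks precisely its generic value to \emph{every} set of $\ell$ shares, not merely on average or more for a worst-case $\mathcal{L}$ --- which reduces to verifying that the relevant Vandermonde-type submatrices have full rank, and is exactly where the analysis of a single ramp scheme (as in \cite{jia2021capacity}) does not suffice. On the converse side the only delicate point is keeping the data-processing chain $\mathbf{AB}\to\mathbf{A}\to\tilde{\mathbf{A}}_{\mathcal{L}}\to Z_{\mathcal{L}}$ valid although $\mathbf{B}$ is public, which is handled by conditioning on $\mathbf{B}$ and exploiting its independence from $(\mathbf{A},R)$.
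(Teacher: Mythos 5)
Your proposal is correct and reaches both bounds of Theorem~\ref{th1}; it follows the paper's strategy in substance, with two variations worth recording. On the converse, your chain $\sum_{i\in\mathcal{I}'}H(Z_i)\ge H(\mathbf{AB}\mid Z_{\mathcal{L}},\mathbf{B})\ge H(\mathbf{AB}\mid\mathbf{B})-\alpha H(\mathbf{A})$ is exactly the paper's inequality \eqref{converse_1}, and your data-processing step $I(\mathbf{AB};Z_{\mathcal{L}}\mid\mathbf{B})\le I(\mathbf{A};\tilde{\mathbf{A}}_{\mathcal{L}})\le\alpha H(\mathbf{A})$ is the content of its Lemmas~\ref{lm0} and~\ref{lm1}; the only difference is that you aggregate over the $\binom{k}{\ell}$ choices of $\mathcal{L}$ by direct counting (each $i$ lies in $\mathcal{I}'$ for $\binom{k-1}{\ell}$ of them), where the paper averages and invokes Han's inequality --- these are interchangeable. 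On the achievability, both constructions realize leakage $\alpha$ by convex combination of ramp (coefficient-encoded polynomial) codes, but with different endpoints: the paper always splits $\mathbf{A}$ into a fraction $\alpha k/\ell$ encoded with zero noise symbols and the remainder encoded with $\ell$ noise symbols, while you use a single ramp code with $\tau=(\ell-\alpha k)/(1-\alpha)$ noise symbols when integral and time-share $\lfloor\tau\rfloor$ with $\lceil\tau\rceil$ otherwise. Your identity --- that forcing the weighted leakages to average to $\alpha$ forces the weighted per-server download to $CE(1-\alpha)/(k-\ell)$ per matrix --- is correct and is precisely what the paper's computation \eqref{rate_lowerbound} verifies for its particular pair, so the two schemes are different points on the same convexification and yield the same rate. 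The one caveat is that what you defer as the ``main obstacle'' is where most of the paper's achievability proof actually lives: establishing that \emph{every} set of at most $\ell$ shares of the combined scheme leaks exactly the generic amount (so that worst case equals average case and the two components' leakages add). The paper carries this out in Lemmas~\ref{Lemma_3}--\ref{Lemma0} and the appendices, via an induction on the number of missing shares together with invertibility of Vandermonde-type blocks such as $G^{\mathcal{S}_b}_{\mathcal{I}}$; your reduction of the obstacle to full rank of those submatrices is the right one, but it is a substantive step that a complete proof must supply rather than a routine check.
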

\begin{proof}
We prove the converse and achievability in Sections~\ref{Converse} and~\ref{Achievability}, respectively. 
\end{proof}

\begin{figure*}[!t]
\centering
   \begin{minipage}{0.5\textwidth}
   \includegraphics[width=\textwidth]{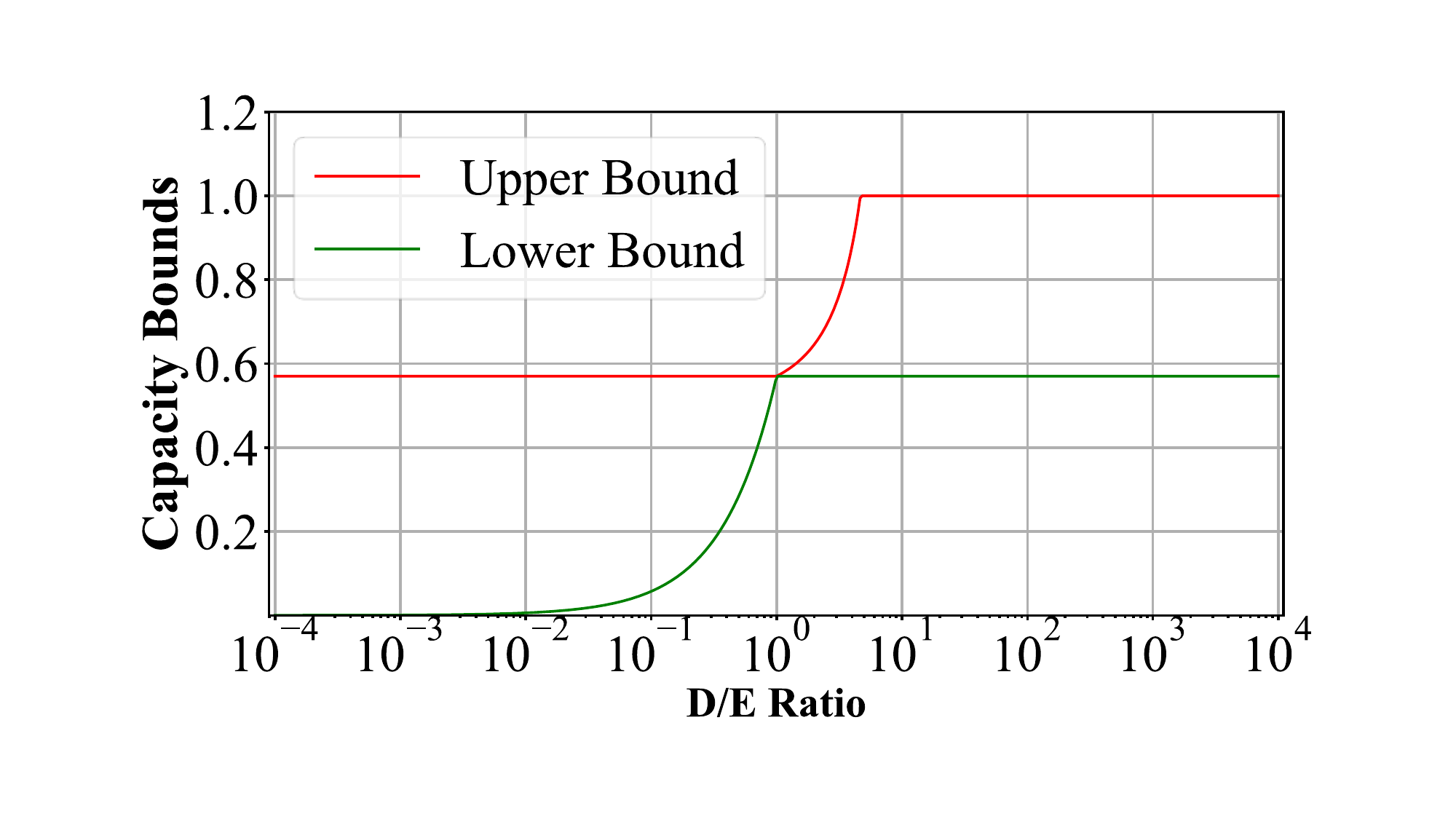}
   \end{minipage}
   \hspace{-9mm}
   \begin{minipage}{0.5\textwidth}
   \includegraphics[width=\textwidth]{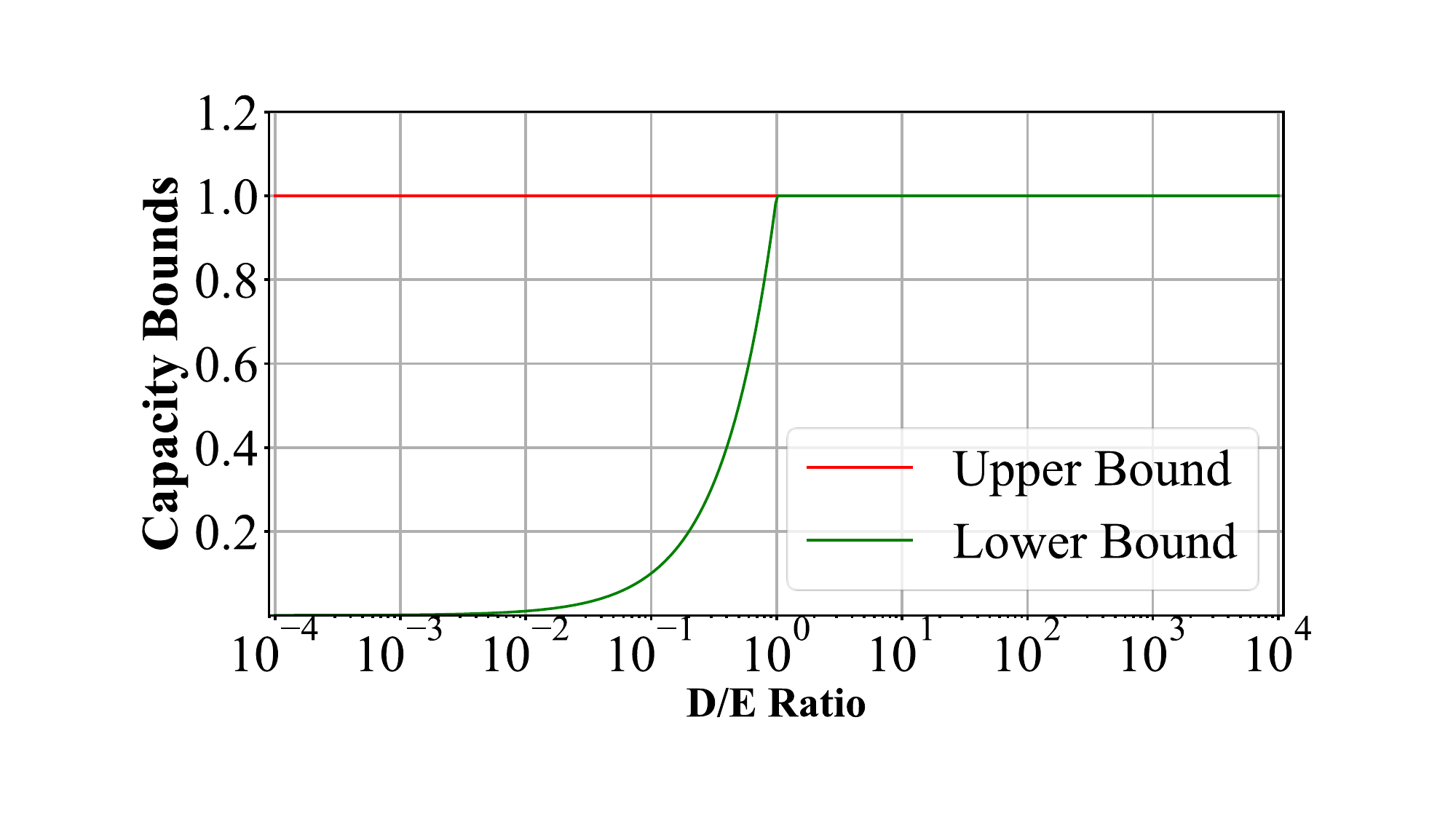}
   \end{minipage}
   \vspace{-3mm}
   \caption{Capacity upper and lower bounds vs. D/E ratio. Left: Case \(\alpha < \ell/k\). Right:  Case \(\alpha \geq \ell/k\).}
   \label{fig:cap_bounds} 
\end{figure*}

\begin{cor} \label{cor_matching_bounds}
When \(D \geq E\) and \(\alpha \geq \ell/k\), then \(C_{(\ell, \alpha, k)} = 1\). This follows from the upper and lower bounds matching in Theorem \ref{th1}. 
\end{cor}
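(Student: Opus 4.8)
The argument is a direct specialization of the two bounds in Theorem~\ref{th1} to the regime $D \geq E$, $\alpha \geq \ell/k$, showing that both collapse to the common value $1$.

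First I would simplify the elementary quantities appearing in the bounds: since $D \geq E$ we have $\max\!\left(1, \tfrac{D}{E}\right) = \tfrac{D}{E}$ and $\min\!\left(1, \tfrac{D}{E}\right) = 1$. For the lower bound \eqref{th2_rate_lower_bound}, the hypothesis $\alpha \geq \ell/k$ puts us in the second branch, so $C_{(\ell, \alpha, k)} \geq \min\!\left(1, \tfrac{D}{E}\right) = 1$.

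Next I would treat the upper bound \eqref{th2_rate_upper_bound}. With $\max\!\left(1, \tfrac{D}{E}\right) = \tfrac{D}{E}$, the threshold separating the two cases is $1 / \max\!\left(1, \tfrac{D}{E}\right) = E/D \leq 1$. If $\alpha \geq E/D$, the upper bound is exactly $1$; if $\alpha < E/D$, the upper bound equals $\min\!\left(\tfrac{k-\ell}{k(1 - \alpha D/E)}, 1\right) \leq 1$ because of the inner minimization with $1$. In both sub-cases $C_{(\ell, \alpha, k)} \leq 1$, and combining with the lower bound established above yields $C_{(\ell, \alpha, k)} = 1$.

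There is essentially no obstacle here: the only care needed is to track how $D \geq E$ collapses the $\max$ and $\min$ expressions and to verify both sub-cases of the upper bound. It is worth noting that the equality does not in fact require the case split on $\alpha \geq E/D$ — the constant $1$ already appearing inside the upper bound caps it regardless — and that the hypothesis $\alpha \geq \ell/k$ is used solely to activate the favorable branch of the lower bound.
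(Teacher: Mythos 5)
Your proposal is correct and matches the paper's (one-line) argument: the lower bound's second branch gives $C_{(\ell,\alpha,k)} \geq \min(1, D/E) = 1$ when $D \geq E$ and $\alpha \geq \ell/k$, while the upper bound is at most $1$ in every case, so the bounds meet. Your added observation that the sub-case split on $\alpha$ versus $E/D$ is unnecessary for the upper bound is also accurate.
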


\begin{thm}[Local Randomness] \label{th2}
For any $\alpha \in [0, 1) \cap \mathbb{Q}$ and $\ell \in [1: k)$, the optimal rate of local randomness necessary at the encoder satisfies
    \begin{align}
        R_{(\ell, \alpha, k)} &\leq \left\{\begin{matrix} \frac{\ell - \alpha k}{k - \ell}  \max\left(1, \frac{D}{E}\right) &\textnormal{if } \ \alpha < \frac{\ell}{k} \\
        0 &\textnormal{if } \ \alpha \geq \frac{\ell}{k} \end{matrix}\right. \ , \label{th2_optimal_randomness_upper_bound} \\
        R_{(\ell, \alpha, k)} &\geq \ \frac{{\left[\ell - \alpha k \max\left(1, \frac{D}{E}\right)\right]}^{+}}{k - \ell}.\label{th2_optimal_randomness_lower_bound}
    \end{align}
\end{thm}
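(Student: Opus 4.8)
The plan is to prove the two bounds separately: the upper bound \eqref{th2_optimal_randomness_upper_bound} by accounting for the local randomness consumed by the achievability scheme of Theorem~\ref{th1}, and the lower bound \eqref{th2_optimal_randomness_lower_bound} by an information-theoretic converse.

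\emph{Upper bound.} I would revisit the construction of Section~\ref{Achievability}. When $\alpha\geq\ell/k$, a single ramp secret-sharing scheme for $\mathbf A$ with no masking symbols suffices: any $\ell$ of the $k$ deterministic shares leak a fraction $\ell/k\leq\alpha$ of $\mathbf A$, so $r=0$ is feasible and $R_{(\ell,\alpha,k)}=0$. When $\alpha<\ell/k$, the scheme combines two ramp secret-sharing schemes so that any $\ell$ colluding servers see exactly a fraction $\alpha$ of $\mathbf A$; the internal ramp parameter is $t=(\ell-\alpha k)/(1-\alpha)$, which becomes an integer after grouping the $m$ symbols (using $\alpha\in\mathbb Q$ and $m$ large), leaving $k-t=(k-\ell)/(1-\alpha)$ randomness-free dimensions and $t$ masking dimensions per codeword. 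Hence $r=\tfrac{\ell-\alpha k}{k-\ell}H(\mathbf A)$, and since $H(\mathbf A)=mCD$ while $H(\mathbf{AB}\mid\mathbf B)=mC\min(D,E)$ for $q$ large, one has $H(\mathbf A)=\max(1,D/E)\,H(\mathbf{AB}\mid\mathbf B)$, which yields \eqref{th2_optimal_randomness_upper_bound}.

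\emph{Lower bound.} Fix any $(\ell,\alpha)$-private $(N,k,r)$-coding scheme and a recovery set $\mathcal K$ with $|\mathcal K|=k$. Because $\tilde{\mathbf A}_{\mathcal K}$ is a deterministic function of $(\mathbf A,R)$ and, by \eqref{recover_all}, $\mathbf A$ is a deterministic function of $\tilde{\mathbf A}_{\mathcal K}$, we get $r=H(R)\geq H(\tilde{\mathbf A}_{\mathcal K}\mid\mathbf A)=H(\tilde{\mathbf A}_{\mathcal K})-H(\mathbf A)$, so it suffices to lower bound $H(\tilde{\mathbf A}_{\mathcal K})$. For every $\ell$-subset $\mathcal L\subseteq\mathcal K$, combining recoverability with \eqref{privacy_constraint} gives $H(\tilde{\mathbf A}_{\mathcal K\setminus\mathcal L}\mid\tilde{\mathbf A}_{\mathcal L})\geq I(\mathbf A;\tilde{\mathbf A}_{\mathcal K\setminus\mathcal L}\mid\tilde{\mathbf A}_{\mathcal L})=H(\mathbf A\mid\tilde{\mathbf A}_{\mathcal L})\geq(1-\alpha)H(\mathbf A)$, so $H(\tilde{\mathbf A}_{\mathcal L})\leq H(\tilde{\mathbf A}_{\mathcal K})-(1-\alpha)H(\mathbf A)$. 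Averaging this over the $\ell$-subsets and invoking Shearer's lemma (Han's inequality) for $\{\tilde{\mathbf A}_i\}_{i\in\mathcal K}$, $\tfrac{\ell}{k}H(\tilde{\mathbf A}_{\mathcal K})\leq\tfrac{1}{\binom{k}{\ell}}\sum_{\mathcal L}H(\tilde{\mathbf A}_{\mathcal L})\leq H(\tilde{\mathbf A}_{\mathcal K})-(1-\alpha)H(\mathbf A)$, hence $H(\tilde{\mathbf A}_{\mathcal K})\geq\tfrac{k(1-\alpha)}{k-\ell}H(\mathbf A)$ and $r\geq\tfrac{\ell-\alpha k}{k-\ell}H(\mathbf A)$. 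Using also $r\geq0$ and $H(\mathbf A)/H(\mathbf{AB}\mid\mathbf B)\geq\max(1,D/E)$, we obtain $R_{(\ell,\alpha,k)}\geq\tfrac{[\ell-\alpha k]^+}{k-\ell}\max(1,D/E)\geq\tfrac{[\ell-\alpha k\max(1,D/E)]^+}{k-\ell}$ (the last inequality because $\max(1,D/E)\geq1$), which is \eqref{th2_optimal_randomness_lower_bound}; this bound is tight when $D\leq E$.

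\emph{Main obstacle.} The crux is running the converse on the encoded matrices $\tilde{\mathbf A}_i$ via recoverability of $\mathbf A$, rather than on the responses $Z_i$ via recoverability of $\mathbf{AB}$: the $Z_i$-route loses the factor $\max(1,D/E)$, since the leakage $I(\mathbf A;Z_{\mathcal K}\mid\mathbf B)$ of a full recovery set is unconstrained and no useful lower bound survives when $D>E$. The second delicate point is that the Shearer/Han averaging step must be combined with the privacy-derived inequality holding \emph{simultaneously} for all $\ell$-subsets of a fixed recovery set, together with the correct evaluation of $H(\mathbf A)$ and $H(\mathbf{AB}\mid\mathbf B)$.
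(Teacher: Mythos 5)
Your proposal is correct. The upper bound follows the paper's own route (exhibiting the two-ramp construction of Section~\ref{Achievability}, whose $\left\lceil\frac{m-p}{k-\ell}\right\rceil\ell$ masking matrices give $r=\frac{\ell-\alpha k}{k-\ell}H(\mathbf{A})$, i.e.\ \eqref{th2_optimal_randomness_upper_bound} after normalizing by $H(\mathbf{AB}|\mathbf{B})$), but your converse is genuinely different from the paper's and in fact stronger. The paper proves Lemma~\ref{LR-lm2} by passing through the responses $Z$ and the recoverability of $\mathbf{AB}$, which replaces $H(\mathbf{A}|\tilde{\mathbf{A}}_{\mathcal{W}\cup\mathcal{V}\cup\mathcal{S}},\mathbf{B})$ by the weaker bound $H(\mathbf{A})-H(\mathbf{AB}|\mathbf{B})$ rather than by $0$, and then telescopes the per-step gain along a nested chain $\mathcal{V}_0\subseteq\cdots\subseteq\mathcal{V}_{\ell}$ of maximizers $i^{*}(\mathcal{V}_j)$ with the double-counting argument of Appendix~\ref{appendix-1}. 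You instead work entirely on the shares: $r\geq H(\tilde{\mathbf{A}}_{\mathcal{K}})-H(\mathbf{A})$ via \eqref{recover_all}; the privacy constraint together with $H(\mathbf{A}|\tilde{\mathbf{A}}_{\mathcal{K}})=0$ gives $H(\tilde{\mathbf{A}}_{\mathcal{L}})\leq H(\tilde{\mathbf{A}}_{\mathcal{K}})-(1-\alpha)H(\mathbf{A})$ for every $\ell$-subset $\mathcal{L}$ of a fixed recovery set $\mathcal{K}$; and a single application of Han's inequality closes the argument. Every step checks out, and because your bound is denominated in $H(\mathbf{A})$ rather than $H(\mathbf{AB}|\mathbf{B})$, you obtain $R_{(\ell,\alpha,k)}\geq\frac{[\ell-\alpha k]^{+}}{k-\ell}\max\left(1,\frac{D}{E}\right)$, which dominates \eqref{th2_optimal_randomness_lower_bound} (so the theorem follows) and moreover matches \eqref{th2_optimal_randomness_upper_bound} for all $D,E$: under the recoverability requirement \eqref{recover_all} of Definition~\ref{Def_1} — which the paper's converse also invokes, so it is fair game — your argument characterizes $R_{(\ell,\alpha,k)}$ exactly even for rectangular matrices, whereas the stated bounds only match for $D\leq E$. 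Your closing diagnosis is exactly right: the factor is lost in the paper precisely because its converse routes recoverability through $Z_{\mathcal{K}}$ and $\mathbf{AB}$ instead of through $\tilde{\mathbf{A}}_{\mathcal{K}}$ and $\mathbf{A}$.
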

\begin{proof}
We prove the converse and achievability in Sections~\ref{local_randomness_converse} and~\ref{Randomness}, respectively.
\end{proof}
\begin{cor} 
     From Theorem~\ref{th2}, when $\alpha \geq \frac{\ell}{k}$, then $R_{(\ell, \alpha, k)} = 0$. When $\alpha < \frac{\ell}{k}$ and $\frac{D}{E} \leq 1$, then $R_{(\ell, \alpha, k)} =  \frac{\ell - \alpha k}{k - \ell}  $.
\end{cor}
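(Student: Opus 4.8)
The plan is to obtain both statements directly from the two bounds of Theorem~\ref{th2}, by specializing the operators $\max(1, D/E)$ and $[\,\cdot\,]^{+}$ on the two indicated subdomains and observing that the resulting lower and upper bounds on $R_{(\ell,\alpha,k)}$ coincide there.

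First I would treat the regime $\alpha \geq \ell/k$. Here the second branch of \eqref{th2_optimal_randomness_upper_bound} gives $R_{(\ell,\alpha,k)} \leq 0$. For the matching lower bound, note that $\alpha \geq \ell/k$ implies $\alpha k \max(1, D/E) \geq \alpha k \geq \ell$, so $\ell - \alpha k \max(1, D/E) \leq 0$ and the positive part in \eqref{th2_optimal_randomness_lower_bound} vanishes, giving $R_{(\ell,\alpha,k)} \geq 0$. Combining the two, $R_{(\ell,\alpha,k)} = 0$.

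Next I would treat the regime $\alpha < \ell/k$ with $D/E \leq 1$. Then $\max(1, D/E) = 1$, so the first branch of \eqref{th2_optimal_randomness_upper_bound} reads $R_{(\ell,\alpha,k)} \leq \frac{\ell - \alpha k}{k - \ell}$. In \eqref{th2_optimal_randomness_lower_bound}, the hypothesis $\alpha < \ell/k$ gives $\ell - \alpha k > 0$, so the positive part is inactive and the lower bound equals $\frac{\ell - \alpha k}{k - \ell}$ as well. The two bounds coincide, hence $R_{(\ell,\alpha,k)} = \frac{\ell - \alpha k}{k - \ell}$.

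There is no substantive obstacle here: the corollary is simply the observation that, on each of the two subdomains in the statement, the $\max$ and positive-part operators appearing in Theorem~\ref{th2} collapse, so that its lower and upper bounds on $R_{(\ell,\alpha,k)}$ become equal. The only point deserving a careful sentence is the sign analysis of $\ell - \alpha k \max(1, D/E)$ in each case, which is exactly what makes the two bounds pinch to a single value.
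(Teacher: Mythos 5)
Your proposal is correct and matches the paper's intent exactly: the corollary is stated without proof as a direct specialization of Theorem~\ref{th2}, and your case analysis (the positive part vanishing when $\alpha \geq \ell/k$, and $\max(1,D/E)=1$ together with $\ell-\alpha k>0$ when $\alpha<\ell/k$ and $D/E\leq 1$) is precisely the reasoning that makes the upper and lower bounds coincide. Nothing is missing.
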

The bounds established in the previous theorems match when the matrices are square.
\begin{thm}[Optimality Results] \label{th3}
    If $\mathbf{A}$ and $\mathbf{B}$ are two sequences of independent and square matrices with uniformly distributed entries, then the capacity is
    \begin{align}
        C_{(\ell, \alpha, k)} = \min\left(\frac{k - \ell}{k(1 - \alpha)}, 1\right) = \left\{\begin{matrix} \frac{k - \ell}{k(1 - \alpha)} &\textnormal{if } \alpha < \frac{\ell}{k} \\
            1 &\textnormal{if } \alpha \geq \frac{\ell}{k} \end{matrix}\right. \ , \nonumber
    \end{align}
     and the optimal rate of local randomness is
    \begin{align}
        R_{(\ell, \alpha, k)} = \frac{{[\ell - \alpha k]}^{+}}{k - \ell} = \left\{\begin{matrix} \frac{\ell - \alpha k}{k - \ell} &\textnormal{if } \alpha < \frac{\ell}{k} \\
            0 &\textnormal{if } \alpha \geq \frac{\ell}{k} \end{matrix}\right. \ . \nonumber
    \end{align}
\end{thm}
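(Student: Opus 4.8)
The plan is to obtain Theorem~\ref{th3} as a specialization of Theorems~\ref{th1} and~\ref{th2} to the square case. For square matrices one has $C = D = E$, hence $D/E = 1$ and $\max(1, D/E) = \min(1, D/E) = 1$. I would first substitute this into the capacity bounds of Theorem~\ref{th1}. Since every $\alpha \in [0,1) \cap \mathbb{Q}$ satisfies $\alpha < 1 = 1/\max(1, D/E)$, the upper bound collapses its two branches into $C_{(\ell, \alpha, k)} \le \min\!\left(\frac{k-\ell}{k(1-\alpha)}, 1\right)$, while the lower bound becomes $C_{(\ell, \alpha, k)} \ge \frac{k-\ell}{k(1-\alpha)}$ for $\alpha < \ell/k$ and $C_{(\ell, \alpha, k)} \ge 1$ for $\alpha \ge \ell/k$.

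The second step is to check that these two bounds coincide, which rests on the elementary equivalence $\frac{k-\ell}{k(1-\alpha)} \le 1 \iff k - \ell \le k(1-\alpha) \iff \alpha \le \ell/k$. Consequently, when $\alpha < \ell/k$ the quantity $\min\!\left(\frac{k-\ell}{k(1-\alpha)}, 1\right)$ equals $\frac{k-\ell}{k(1-\alpha)}$, matching the lower bound; and when $\alpha \ge \ell/k$ it equals $1$, again matching the lower bound. This also yields the claimed piecewise form of the capacity.

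For the local-randomness statement I would proceed identically: substituting $\max(1, D/E) = 1$ into Theorem~\ref{th2} gives the upper bound $R_{(\ell, \alpha, k)} \le \frac{\ell - \alpha k}{k-\ell}$ when $\alpha < \ell/k$ (and $0$ otherwise) together with the lower bound $R_{(\ell, \alpha, k)} \ge \frac{[\ell - \alpha k]^+}{k-\ell}$. Since $[\ell - \alpha k]^+$ equals $\ell - \alpha k$ precisely when $\alpha \le \ell/k$ and equals $0$ otherwise, the upper and lower bounds agree in both regimes, giving the stated value of $R_{(\ell, \alpha, k)}$; the $\alpha \ge \ell/k$ part and the $D \le E$ part of this claim are in fact already recorded in the corollary following Theorem~\ref{th2}.

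I do not anticipate a genuine obstacle: the substance of Theorem~\ref{th3} is simply that the general bounds of Theorems~\ref{th1} and~\ref{th2} happen to be tight when $D = E$. The only place deserving a line of care is the threshold $\alpha = \ell/k$, where one should confirm that the strict/non-strict split in the piecewise expressions is consistent and that $\min\!\left(\frac{k-\ell}{k(1-\alpha)}, 1\right)$ and $[\ell - \alpha k]^+$ are continuous there, both of which follow from the inequality displayed in the second step.
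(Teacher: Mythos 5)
Your proposal is correct and is exactly the paper's argument: the paper proves Theorem~\ref{th3} by specializing Theorems~\ref{th1} and~\ref{th2} to $D=E$, and your write-up simply makes explicit the routine verification (via $\frac{k-\ell}{k(1-\alpha)} \leq 1 \iff \alpha \leq \ell/k$) that the resulting upper and lower bounds coincide in both regimes.
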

\begin{proof}
One can deduce these results from the bounds in Theorems~\ref{th1} and~\ref{th2} with $D = E$.    
\end{proof}
Fig.~\ref{fig:cap_bounds}  illustrates Theorem \ref{th1} and shows that the capacity is characterized when $D/E=1$, as stated in Theorem \ref{th3}, or when \(D \geq E\) and \(\alpha \geq \ell/k\), as stated in Corollary \ref{cor_matching_bounds}. Let $gap$ be the difference between the upper bound in \eqref{th2_rate_upper_bound} and the lower bound in \eqref{th2_rate_lower_bound}. When $\alpha < \frac{\ell}{k}$, if $\frac{D}{E} \rightarrow 0$, then $gap \rightarrow \frac{k-\ell}{k(1 - \alpha)}$, and if $\frac{D}{E} \rightarrow \infty$, then $gap \rightarrow 1 - \frac{k-\ell}{k(1 - \alpha)}$. When $\alpha \geq \frac{\ell}{k}$, if $\frac{D}{E} \rightarrow 0$, then $gap \rightarrow 1$, and if $\frac{D}{E} \geq 1$, then $gap = 0$. 
\section{Converse Proof} \label{Converse}
Define $\mathcal{L} \subseteq [N]$ such that  $|\mathcal{L}| = \ell $.
\subsection{Rate}
Consider $\alpha < \frac{1}{\max(1, \frac{D}{E})}$. We will use the following lemmas. 
\begin{lem} \label{lm0}
    For any $\mathcal{I} \subseteq [N]$, we have
    \begin{align}
    I(\mathbf{A}; \tilde{\mathbf{A}}_{\mathcal{I}}, \mathbf{B}) &\geq I( Z_{\mathcal{I}};\mathbf{A}|\mathbf{B}), \label{converse_proof_lm0-1} \\
    I(\mathbf{A}; Z_{\mathcal{I}}|\mathbf{B}) &\geq I(\mathbf{AB}; Z_{\mathcal{I}}|\mathbf{B}). \label{converse_proof_lm0-2} 
    \end{align} 
\end{lem}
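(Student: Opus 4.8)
The plan is to prove both inequalities in Lemma~\ref{lm0} by standard manipulations of mutual information, exploiting the Markov structure imposed by the coding scheme in Definition~\ref{Def_1}. The key structural facts are: (i) for each $i$, $Z_i = h_i(\tilde{\mathbf{A}}_i, \mathbf{B})$ is a deterministic function of $(\tilde{\mathbf{A}}_i, \mathbf{B})$, hence $Z_{\mathcal{I}}$ is a deterministic function of $(\tilde{\mathbf{A}}_{\mathcal{I}}, \mathbf{B})$; and (ii) $\mathbf{A}$ and $\mathbf{B}$ are independent, and $(\tilde{\mathbf{A}}_i)_{i}$ are functions of $(\mathbf{A}, R)$ with $R$ independent of $(\mathbf{A}, \mathbf{B})$, so $\mathbf{B}$ is independent of $(\mathbf{A}, \tilde{\mathbf{A}}_{[N]})$.

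For \eqref{converse_proof_lm0-1}, I would start from $I(\mathbf{A}; \tilde{\mathbf{A}}_{\mathcal{I}}, \mathbf{B})$ and use the chain rule to write it as $I(\mathbf{A}; \mathbf{B}) + I(\mathbf{A}; \tilde{\mathbf{A}}_{\mathcal{I}} \mid \mathbf{B}) = I(\mathbf{A}; \tilde{\mathbf{A}}_{\mathcal{I}} \mid \mathbf{B})$, using independence of $\mathbf{A}$ and $\mathbf{B}$. Then, since $Z_{\mathcal{I}}$ is a deterministic function of $(\tilde{\mathbf{A}}_{\mathcal{I}}, \mathbf{B})$, the data processing inequality applied conditionally on $\mathbf{B}$ gives $I(\mathbf{A}; \tilde{\mathbf{A}}_{\mathcal{I}} \mid \mathbf{B}) \geq I(\mathbf{A}; Z_{\mathcal{I}} \mid \mathbf{B}) = I(Z_{\mathcal{I}}; \mathbf{A} \mid \mathbf{B})$, which is the claim. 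For \eqref{converse_proof_lm0-2}, I would observe that $\mathbf{AB}$ is a deterministic function of $(\mathbf{A}, \mathbf{B})$, so conditioned on $\mathbf{B}$ it is a deterministic function of $\mathbf{A}$; therefore $Z_{\mathcal{I}} - \mathbf{A} - \mathbf{AB}$ need not hold, but rather $\mathbf{AB}$ is a function of $\mathbf{A}$ given $\mathbf{B}$, so $H(\mathbf{AB} \mid \mathbf{A}, \mathbf{B}) = 0$ and hence $I(\mathbf{A}; Z_{\mathcal{I}} \mid \mathbf{B}) \geq I(\mathbf{AB}; Z_{\mathcal{I}} \mid \mathbf{B})$ follows because $\mathbf{AB}$ is a function of $\mathbf{A}$ (given $\mathbf{B}$): formally, $I(\mathbf{A}; Z_{\mathcal{I}} \mid \mathbf{B}) = I(\mathbf{A}, \mathbf{AB}; Z_{\mathcal{I}} \mid \mathbf{B}) \geq I(\mathbf{AB}; Z_{\mathcal{I}} \mid \mathbf{B})$, where the first equality uses that $\mathbf{AB}$ adds nothing beyond $\mathbf{A}$ given $\mathbf{B}$, and the inequality is the chain rule dropping a nonnegative term.

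I do not anticipate a serious obstacle here, as both parts are essentially data processing arguments; the only care needed is to correctly justify the conditional independences from the problem setup — specifically that conditioning on $\mathbf{B}$ is legitimate because $\mathbf{B}$ is independent of everything generated from $(\mathbf{A}, R)$, and that the deterministic-function relations ($Z_i = h_i(\tilde{\mathbf{A}}_i,\mathbf{B})$ and $\mathbf{AB}$ a function of $(\mathbf{A},\mathbf{B})$) are used in the right direction. The mild subtlety is making sure the conditional data processing inequality is invoked with the correct Markov chain $\mathbf{A} - (\tilde{\mathbf{A}}_{\mathcal{I}}, \mathbf{B}) - Z_{\mathcal{I}}$, which holds trivially since $Z_{\mathcal{I}}$ is a function of $(\tilde{\mathbf{A}}_{\mathcal{I}}, \mathbf{B})$ alone.
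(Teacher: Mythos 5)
Your proposal is correct and uses essentially the same ingredients as the paper's proof: for \eqref{converse_proof_lm0-1}, data processing (since $Z_{\mathcal{I}}$ is a function of $(\tilde{\mathbf{A}}_{\mathcal{I}},\mathbf{B})$) combined with the chain rule, and for \eqref{converse_proof_lm0-2}, the fact that $\mathbf{AB}$ is a deterministic function of $(\mathbf{A},\mathbf{B})$, which your two-way chain-rule expansion of $I(\mathbf{A},\mathbf{AB};Z_{\mathcal{I}}\mid\mathbf{B})$ renders equivalently to the paper's entropy expansion with ``conditioning reduces entropy.'' The minor differences (invoking independence of $\mathbf{A}$ and $\mathbf{B}$ to get an equality in the first step, and applying the DPI conditionally rather than unconditionally) do not change the substance of the argument.
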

\begin{proof}
    For \eqref{converse_proof_lm0-1}, we have
    \begin{align}
        I(\tilde{\mathbf{A}}_{\mathcal{I}}, \mathbf{B}; \mathbf{A}) &\geq I(Z_{\mathcal{I}}, \mathbf{B};\mathbf{A}) \geq I(Z_{\mathcal{I}};\mathbf{A}|\mathbf{B}), \label{converse_proof_b_1}
    \end{align}
     where the first inequality holds because $Z_{\mathcal{I}}$ is a function of $\mathbf{B}$ and $\tilde{\mathbf{A}}_{\mathcal{I}}$, and the second inequality holds by the chain rule and non-negativity of mutual information. For \eqref{converse_proof_lm0-2}, we have
    \begin{align}
        I(Z_{\mathcal{I}}; \mathbf{AB}|\mathbf{B}) &= H(Z_{\mathcal{I}}|\mathbf{B}) - H(Z_{\mathcal{I}}|\mathbf{AB},\mathbf{B}) \nonumber 
        \\ &\leq H(Z_{\mathcal{I}}|\mathbf{B}) - H(Z_{\mathcal{I}}|\mathbf{AB},\mathbf{B}, \mathbf{A}) \nonumber 
        \\ &= H(Z_{\mathcal{I}}|\mathbf{B}) - H(Z_{\mathcal{I}}|\mathbf{B}, \mathbf{A}) \nonumber 
        \\ &= I(Z_{\mathcal{I}};\mathbf{A}|\mathbf{B}), \nonumber
    \end{align}
where the inequality holds because conditioning reduces entropy.
\end{proof}
The following lemma bounds the mutual information between the observations of a set $\mathcal{L}$ of colluding servers and the product $\mathbf{AB}$.  
\begin{lem} \label{lm1}
 For any $\alpha \in [0, 1) \cap \mathbb{Q}$, we have
    \begin{align}
         I(Z_{\mathcal{L}}; \mathbf{AB}|\mathbf{B}) \leq \alpha H(\mathbf{A}). \nonumber
    \end{align} 
\end{lem}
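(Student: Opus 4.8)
The plan is to bound $I(Z_{\mathcal{L}}; \mathbf{AB}|\mathbf{B})$ by first relating it, via Lemma~\ref{lm0}, to the mutual information $I(\mathbf{A};\tilde{\mathbf{A}}_{\mathcal{L}})$ that the privacy constraint~\eqref{privacy_constraint} controls, and then invoking the privacy constraint itself. Concretely, by \eqref{converse_proof_lm0-2} we have $I(Z_{\mathcal{L}};\mathbf{AB}|\mathbf{B}) \leq I(Z_{\mathcal{L}};\mathbf{A}|\mathbf{B})$, and by \eqref{converse_proof_lm0-1} we have $I(Z_{\mathcal{L}};\mathbf{A}|\mathbf{B}) \leq I(\mathbf{A};\tilde{\mathbf{A}}_{\mathcal{L}},\mathbf{B})$. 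So it suffices to show $I(\mathbf{A};\tilde{\mathbf{A}}_{\mathcal{L}},\mathbf{B}) \leq \alpha H(\mathbf{A})$.

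The key remaining step is to remove $\mathbf{B}$ from the conditioning without cost. Since $\mathbf{A}$ and $\mathbf{B}$ are independent (Definition~\ref{Def_1}) and the local randomness $R$ is independent of $(\mathbf{A},\mathbf{B})$, the encoded matrices $\tilde{\mathbf{A}}_{\mathcal{L}} = (f_i(\mathbf{A},R))_{i\in\mathcal{L}}$ are a function of $(\mathbf{A},R)$ only, hence $(\mathbf{A},\tilde{\mathbf{A}}_{\mathcal{L}})$ is independent of $\mathbf{B}$. Therefore, by the chain rule,
\begin{align}
I(\mathbf{A};\tilde{\mathbf{A}}_{\mathcal{L}},\mathbf{B}) = I(\mathbf{A};\mathbf{B}) + I(\mathbf{A};\tilde{\mathbf{A}}_{\mathcal{L}}|\mathbf{B}) = I(\mathbf{A};\tilde{\mathbf{A}}_{\mathcal{L}}|\mathbf{B}) = I(\mathbf{A};\tilde{\mathbf{A}}_{\mathcal{L}}), \nonumber
\end{align}
where the last equality uses that conditioning on the independent variable $\mathbf{B}$ does not change the mutual information between $\mathbf{A}$ and $\tilde{\mathbf{A}}_{\mathcal{L}}$. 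Combining the three displays gives $I(Z_{\mathcal{L}};\mathbf{AB}|\mathbf{B}) \leq I(\mathbf{A};\tilde{\mathbf{A}}_{\mathcal{L}})$, and since $|\mathcal{L}| = \ell$, the privacy constraint~\eqref{privacy_constraint} yields $I(\mathbf{A};\tilde{\mathbf{A}}_{\mathcal{L}}) \leq \alpha H(\mathbf{A})$, which completes the proof.

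The main obstacle, such as it is, is the bookkeeping around the conditioning on $\mathbf{B}$: one must be careful that $\tilde{\mathbf{A}}_{\mathcal{L}}$ depends only on $(\mathbf{A},R)$ and that $R \perp (\mathbf{A},\mathbf{B})$, so that the pair $(\mathbf{A},\tilde{\mathbf{A}}_{\mathcal{L}})$ is genuinely independent of $\mathbf{B}$; everything else is a direct chaining of Lemma~\ref{lm0} with the definition of $(\ell,\alpha)$-privacy. I expect the author's proof to be essentially this short chain of inequalities.
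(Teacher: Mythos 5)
Your proof is correct and follows essentially the same route as the paper: chain \eqref{converse_proof_lm0-2} and \eqref{converse_proof_lm0-1} from Lemma~\ref{lm0}, drop $\mathbf{B}$ using the independence of $(\mathbf{A},\tilde{\mathbf{A}}_{\mathcal{L}})$ and $\mathbf{B}$, and conclude with \eqref{privacy_constraint}. The only (immaterial) difference is the order of the chain-rule expansion used to eliminate $\mathbf{B}$; the paper writes $I(\tilde{\mathbf{A}}_{\mathcal{L}},\mathbf{B};\mathbf{A}) = I(\tilde{\mathbf{A}}_{\mathcal{L}};\mathbf{A}) + I(\mathbf{B};\mathbf{A}|\tilde{\mathbf{A}}_{\mathcal{L}})$ and shows the second term vanishes.
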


\begin{proof}
    We have
    \begin{align} 
    	I(Z_{\mathcal{L}}; \mathbf{AB}|\mathbf{B})
            & \overset{\mathrm{(a)}}{\leq} I(Z_{\mathcal{L}}; \mathbf{A}|\mathbf{B}) \nonumber\\
    	& \overset{\mathrm{(b)}}{\leq} I(\tilde{\mathbf{A}}_{\mathcal{L}},\mathbf{B}; \mathbf{A}) \nonumber \\
    	& = I(\tilde{\mathbf{A}}_{\mathcal{L}}; \mathbf{A}) + I(\mathbf{B}; \mathbf{A}|\tilde{\mathbf{A}}_{\mathcal{L}}) \nonumber \\
    	& \overset{\mathrm{(c)}}{=} I(\tilde{\mathbf{A}}_{\mathcal{L}}; \mathbf{A}) \nonumber\displaybreak[0] \\
    	& \overset{\mathrm{(d)}}{\leq} \alpha H(\mathbf{A}), \nonumber
    \end{align} 
where
    \begin{enumerate}
        \item [(a)] holds by \eqref{converse_proof_lm0-2};
        \item [(b)] holds by \eqref{converse_proof_lm0-1};
        \item [(c)] holds because 
        \begin{align}
            0 &= I(\mathbf{B}; \mathbf{A}, \tilde{\mathbf{A}}_{\mathcal{L}}) \nonumber \\
            &= I(\mathbf{B}; \tilde{\mathbf{A}}_{\mathcal{L}}) + I(\mathbf{B};\mathbf{A}|\tilde{\mathbf{A}}_{\mathcal{L}}) \nonumber \\
            &\geq I(\mathbf{B};\mathbf{A}|\tilde{\mathbf{A}}_{\mathcal{L}}), \nonumber
        \end{align}
        where the first equality holds by the independence between $\mathbf{A}$ and $\mathbf{B}$ and the definition of $\tilde{\mathbf{A}}_{\mathcal{L}}$;
        \item [(d)] holds by \eqref{privacy_constraint}.
    \end{enumerate}
\end{proof}   

Then, for all $\mathcal{I} \in [N]^{= k}$ such that $\mathcal{L} \subseteq \mathcal{I}$, we have
\begin{align} 
    H(\mathbf{AB}|\mathbf{B})
    &= H(\mathbf{AB}|\mathbf{B}) -  H(\mathbf{AB}|Z_{\mathcal{I}},\mathbf{B}) + H(\mathbf{AB}|Z_{\mathcal{I}},\mathbf{B}) \nonumber \\
    &\overset{\mathrm{(a)}}{=} I(Z_{\mathcal{I}}; \mathbf{AB}|\mathbf{B}) \nonumber \\
    &= H(Z_{\mathcal{I}}|\mathbf{B}) - H(Z_{\mathcal{I}}|\mathbf{AB},\mathbf{B}) \nonumber \\
    &\overset{\mathrm{(b)}}{\leq} H(Z_{\mathcal{I}}|\mathbf{B}) - H(Z_{\mathcal{L}}|\mathbf{AB},\mathbf{B}) \label{Redo_Converse}\\
    &\overset{\mathrm{(c)}}{\leq} H(Z_{\mathcal{I}}|\mathbf{B}) - H(Z_{\mathcal{L}}|\mathbf{B}) + {\alpha}H(\mathbf{A}),\label{converse_1}
\end{align}
where
\begin{enumerate}
    \item [(a)] holds because $ H(\mathbf{AB}|Z_{\mathcal{I}},\mathbf{B}) \leq  H(\mathbf{AB}|Z_{\mathcal{I}}) = 0$ by \eqref{recoverability_constraint};
    \item [(b)] holds because $\mathcal{L} \subseteq \mathcal{I}$;
    \item [(c)] holds by Lemma~\ref{lm1}.     
\end{enumerate}
Then, we have 
\begin{align} 
    H(\mathbf{AB}|\mathbf{B})
    &\overset{\mathrm{(a)}}{\leq} H(Z_{\mathcal{I}}|\mathbf{B}) - \ell \frac{1}{\binom{k}{\ell}}\sum_{\mathcal{L} \in \mathcal{I} ^{= \ell}} \frac{H(Z_{\mathcal{L}}|\mathbf{B})}{\ell} + {\alpha}H(\mathbf{A}) \nonumber \displaybreak[0] \\
    &\overset{\mathrm{(b)}}{\leq} H(Z_{\mathcal{I}}|\mathbf{B}) - \ell \frac{H(Z_{\mathcal{I}}|\mathbf{B})}{k} + {\alpha}H(\mathbf{A}) \nonumber \displaybreak[0] \\
    &= \left(1 - \frac{\ell}{k}\right) H(Z_{\mathcal{I}}|\mathbf{B}) + {\alpha}H(\mathbf{A}) \nonumber\\
    &\leq \left(1 - \frac{\ell}{k}\right) H(Z_{\mathcal{I}}) + {\alpha}H(\mathbf{A}) \nonumber \\
    &\leq \left(1 - \frac{\ell}{k}\right) \sum_{i \in \mathcal{I}}H(Z_i) + {\alpha}H(\mathbf{A}), \label{converse_proof}
\end{align}
where
\begin{enumerate}
    \item[(a)] holds by averaging \eqref{converse_1} over all possible subsets $\mathcal{L}$ of servers of size $\ell$ in $\mathcal{I}$;
    \item[(b)] holds by Han's inequality \cite[Section 17.6]{cover1999elements}.
\end{enumerate}

Then, from \eqref{converse_proof}, we have
\begin{align}
    \Lambda_{k}
    &= \frac{H(\mathbf{AB}|\mathbf{B})}{{\max_{\mathcal{I} \in [N]^{= k}}}\sum_{i \in \mathcal{I}} H(Z_i)} \nonumber \\
    &\leq \frac{1 - \frac{\ell}{k}}{1 - {\alpha}\frac{H(\mathbf{A})}{H(\mathbf{AB}|\mathbf{B})}}. \label{converse_rate_1}
\end{align}
Since the matrices in $\mathbf{A}$ and $\mathbf{B}$ are independent and uniformly distributed over $\mathbb{F}_{q}^{C \times D}$ and $\mathbb{F}_q^{D \times E}$, by \cite[Lemma~2]{jia2021capacity}, we have
    \begin{align}
    q \rightarrow \infty \Rightarrow H(\mathbf{AB}|\mathbf{B}) =  m \times \min (CD, CE).\label{use_lemma_2_b}
    \end{align}
Also, for any $\alpha \in [0,1)$,  we have
    \begin{align} 
        H(\mathbf{AB}|\mathbf{B})\ &\overset{\mathrm{(a)}}{\leq}\ H(Z_{\mathcal{I}}|\mathbf{B}) - H(Z_{\mathcal{L}}|\mathbf{AB},\mathbf{B}) \nonumber \displaybreak[0]\\
        &\leq H(Z_{\mathcal{I}}|\mathbf{B}) \nonumber \\
        &\overset{\mathrm{(b)}}{\leq} \sum_{i \in \mathcal{I}}H(Z_i), \label{converse_proof_0}
    \end{align}
    where 
    \begin{enumerate}
        \item [(a)] holds by \eqref{Redo_Converse};
        \item [(b)] holds because conditioning reduces entropy.
    \end{enumerate}
    Hence, from \eqref{converse_proof_0}, we have 
    \begin{align}
        \Lambda_{k}
        &=\frac{H(\mathbf{AB}|\mathbf{B})}{{\max_{\mathcal{I} \in [N]^{=k}}}\sum_{i \in \mathcal{I}} H(Z_i)} \nonumber \\
        &{\leq} \frac{ \sum_{i \in \mathcal{I}}H(Z_i)}{{\max_{\mathcal{I} \in [N]^{=k}}}\sum_{i \in \mathcal{I}} H(Z_i)} \nonumber \\
        &{\leq}\ 1. \label{rate_upperbound_3}
    \end{align}
Finally, we obtain \eqref{th2_rate_upper_bound} from \eqref{converse_rate_1}, \eqref{use_lemma_2_b} and \eqref{rate_upperbound_3}.

\subsection{Local Randomness} \label{local_randomness_converse}
Consider $\alpha \in [0, 1) \cap \mathbb{Q}$, $\ell \in [1: k)$, and an $(\ell, \alpha)$-private $(N, k, r)$-coding scheme. 

\begin{lem} \label{LR-lm2}
Let $\mathcal{V} \subseteq \mathcal{L}$ and $v \triangleq |\mathcal{V}|$. For $\mathcal{W} \subseteq [N] \setminus \mathcal{V}$ and $S \subseteq [N] \setminus (\mathcal{W} \cup \mathcal{V})$ such that $|\mathcal{W}| = \ell - v$ and $|\mathcal{S}| = k - \ell$, we have
    \begin{align}
        &\frac{1 -\alpha \frac{H(\mathbf{A})}{H(\mathbf{AB}|\mathbf{B})}}{k - \ell}H(\mathbf{AB}|\mathbf{B}) \nonumber \\
        &\leq H(\tilde{\mathbf{A}}_{[N]}|\tilde{\mathbf{A}}_{\mathcal{V}}) - H(\tilde{\mathbf{A}}_{[N]}|\tilde{\mathbf{A}}_{\mathcal{V} \cup \{i ^{*} (\mathcal{V})\}}), \label{randomness_converse_2}
    \end{align}
where $i^{*}(\mathcal{V}) \in \arg\max_{i \in [N] \setminus \mathcal{V}} H(\tilde{\mathbf{A}}_{i}|\tilde{\mathbf{A}}_{\mathcal{V}})$.
\end{lem}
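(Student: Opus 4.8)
The plan is to reduce \eqref{randomness_converse_2} to a lower bound on the conditional entropy of a single encoded share, and then to obtain that bound by chaining the recoverability constraint, Lemma~\ref{lm0}, and the privacy constraint through the auxiliary sets $\mathcal{W}$ and $\mathcal{S}$.

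First I would rewrite the right-hand side of \eqref{randomness_converse_2}. Since $i^{*}(\mathcal{V}) \in [N] \setminus \mathcal{V}$, the chain rule gives $H(\tilde{\mathbf{A}}_{[N]}|\tilde{\mathbf{A}}_{\mathcal{V}}) - H(\tilde{\mathbf{A}}_{[N]}|\tilde{\mathbf{A}}_{\mathcal{V} \cup \{i^{*}(\mathcal{V})\}}) = H(\tilde{\mathbf{A}}_{i^{*}(\mathcal{V})}|\tilde{\mathbf{A}}_{\mathcal{V}}) = \max_{i \in [N]\setminus\mathcal{V}} H(\tilde{\mathbf{A}}_{i}|\tilde{\mathbf{A}}_{\mathcal{V}})$. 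Hence it suffices to show $H(\tilde{\mathbf{A}}_{i^{*}(\mathcal{V})}|\tilde{\mathbf{A}}_{\mathcal{V}}) \geq \frac{1 - \alpha H(\mathbf{A})/H(\mathbf{AB}|\mathbf{B})}{k-\ell}\, H(\mathbf{AB}|\mathbf{B})$.

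Next, since $\mathcal{V}, \mathcal{W}, \mathcal{S}$ are pairwise disjoint with $|\mathcal{V} \cup \mathcal{W} \cup \mathcal{S}| = v + (\ell - v) + (k-\ell) = k$, recoverability \eqref{recoverability_constraint} gives $H(\mathbf{AB}|\mathbf{B}) = I(\mathbf{AB}; Z_{\mathcal{V}\cup\mathcal{W}\cup\mathcal{S}}|\mathbf{B})$. Applying \eqref{converse_proof_lm0-2} and then \eqref{converse_proof_lm0-1} of Lemma~\ref{lm0}, together with $I(\mathbf{A};\mathbf{B}|\tilde{\mathbf{A}}_{\mathcal{I}}) = 0$ (which holds by the independence of $R$ from $(\mathbf{A},\mathbf{B})$ and of $\mathbf{A}$ from $\mathbf{B}$, exactly as in step (c) of the proof of Lemma~\ref{lm1}), yields $H(\mathbf{AB}|\mathbf{B}) \leq I(\mathbf{A}; \tilde{\mathbf{A}}_{\mathcal{V}\cup\mathcal{W}\cup\mathcal{S}})$. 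I would then split this via the chain rule as $I(\mathbf{A}; \tilde{\mathbf{A}}_{\mathcal{V}\cup\mathcal{W}}) + I(\mathbf{A}; \tilde{\mathbf{A}}_{\mathcal{S}}|\tilde{\mathbf{A}}_{\mathcal{V}\cup\mathcal{W}})$: the first term is at most $\alpha H(\mathbf{A})$ by \eqref{privacy_constraint} since $|\mathcal{V}\cup\mathcal{W}| = \ell$, and the second is at most $H(\tilde{\mathbf{A}}_{\mathcal{S}}|\tilde{\mathbf{A}}_{\mathcal{V}\cup\mathcal{W}}) \leq \sum_{i \in \mathcal{S}} H(\tilde{\mathbf{A}}_{i}|\tilde{\mathbf{A}}_{\mathcal{V}}) \leq (k-\ell)\, H(\tilde{\mathbf{A}}_{i^{*}(\mathcal{V})}|\tilde{\mathbf{A}}_{\mathcal{V}})$, where I expand $\tilde{\mathbf{A}}_{\mathcal{S}}$ one share at a time, drop the conditioning on $\tilde{\mathbf{A}}_{\mathcal{W}}$ and on the previously revealed shares of $\mathcal{S}$ (conditioning reduces entropy, and each share of $\mathcal{S}$ lies in $[N]\setminus\mathcal{V}$ because $\mathcal{S}\cap\mathcal{V}=\emptyset$), and invoke the definition of $i^{*}(\mathcal{V})$. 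Rearranging $H(\mathbf{AB}|\mathbf{B}) \leq \alpha H(\mathbf{A}) + (k-\ell)\, H(\tilde{\mathbf{A}}_{i^{*}(\mathcal{V})}|\tilde{\mathbf{A}}_{\mathcal{V}})$ then gives the desired inequality.

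The only delicate point is the single-share expansion of $H(\tilde{\mathbf{A}}_{\mathcal{S}}|\tilde{\mathbf{A}}_{\mathcal{V}\cup\mathcal{W}})$: one has to ensure that every conditional term can be upper bounded by the common quantity $H(\tilde{\mathbf{A}}_{i^{*}(\mathcal{V})}|\tilde{\mathbf{A}}_{\mathcal{V}})$, which is precisely why $i^{*}$ is defined through a maximum over $[N]\setminus\mathcal{V}$ and why $\mathcal{W}$ and $\mathcal{S}$ are taken disjoint from $\mathcal{V}$ and from each other. Everything else is a routine application of results already established in the excerpt.
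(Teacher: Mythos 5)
Your proof is correct. The core chain is the same as the paper's: take the $k$-server set $\mathcal{V}\cup\mathcal{W}\cup\mathcal{S}$, invoke recoverability to get $H(\mathbf{AB}|\mathbf{B})=I(\mathbf{AB};Z_{\mathcal{V}\cup\mathcal{W}\cup\mathcal{S}}|\mathbf{B})$, pass to $I(\mathbf{A};\tilde{\mathbf{A}}_{\mathcal{V}\cup\mathcal{W}\cup\mathcal{S}})$ via Lemma~\ref{lm0} and the independence of $\mathbf{B}$ from $(\mathbf{A},R)$, and then split off the $\ell$-set $\mathcal{V}\cup\mathcal{W}$ to apply the privacy constraint; the paper performs the equivalent manipulation in terms of conditional entropies $H(\mathbf{A}|\cdot)$ rather than mutual informations, which is only a cosmetic difference. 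Where you genuinely diverge is the last step: the paper first proves the lower bound \eqref{randomness_converse_1} on $\sum_{i\in\mathcal{S}}H(\tilde{\mathbf{A}}_i|\tilde{\mathbf{A}}_{\mathcal{V}})$ and then passes to $H(\tilde{\mathbf{A}}_{i^*(\mathcal{V})}|\tilde{\mathbf{A}}_{\mathcal{V}})$ by a double averaging over all admissible pairs $(\mathcal{W},\mathcal{S})$ with explicit binomial-coefficient counting (the constant $T$ and steps (a)--(e) of Appendix~A), whereas you simply bound each of the $k-\ell$ summands by the maximum over $[N]\setminus\mathcal{V}$, which is exactly $H(\tilde{\mathbf{A}}_{i^*(\mathcal{V})}|\tilde{\mathbf{A}}_{\mathcal{V}})$. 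Since the single-pair bound \eqref{randomness_converse_1} already holds for every admissible $(\mathcal{W},\mathcal{S})$ and $\mathcal{S}\subseteq[N]\setminus\mathcal{V}$, your direct max bound is valid and shortcuts the combinatorial averaging entirely; it buys a visibly shorter argument, while the paper's averaging yields the (slightly stronger, but unused) intermediate statement that the \emph{average} of $H(\tilde{\mathbf{A}}_i|\tilde{\mathbf{A}}_{\mathcal{V}})$ over all $i\in[N]\setminus\mathcal{V}$ already meets the bound.
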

\begin{proof}
     See Appendix~\ref{appendix-1}.   
\end{proof}
Next, we define $\mathcal{V}_{0} \triangleq \emptyset$, and for $j \in \mathcal{L}$, $\mathcal{V}_{j} \triangleq \mathcal{V}_{j-1} \cup {\{i^{*}(\mathcal{V}_{j-1})}\}$. Then, we have 
\begin{align}
    &\frac{\ell - k\alpha \frac{H(\mathbf{A})}{H(\mathbf{AB}|\mathbf{B})}}{k - \ell}H(\mathbf{AB}|\mathbf{B})  \nonumber \\
    &= \left(\ell\frac{1 - \alpha \frac{H(\mathbf{A})}{H(\mathbf{AB}|\mathbf{B})}}{k - \ell} - \alpha \frac{H(\mathbf{A})}{H(\mathbf{AB}|\mathbf{B})} \right)H(\mathbf{AB}|\mathbf{B}) \nonumber \displaybreak[0]\\
    &= -\alpha H(\mathbf{A}) + \ell \frac{1 - \alpha \frac{H(\mathbf{A})}{H(\mathbf{AB}|\mathbf{B})}}{k - \ell}H(\mathbf{AB}|\mathbf{B}) \nonumber \displaybreak[0]\\
    &= -\alpha H(\mathbf{A}) + \sum_{i = 0}^{\ell - 1}\frac{1 - \alpha \frac{H(\mathbf{A})}{H(\mathbf{AB}|\mathbf{B})}}{k - \ell}H(\mathbf{AB}|\mathbf{B}) \nonumber \displaybreak[0]\\
    &\overset{(a)}{\leq} -\alpha H(\mathbf{A}) + \sum_{i = 0}^{\ell - 1}\left[ H(\tilde{\mathbf{A}}_{[N]}|\tilde{\mathbf{A}}_{\mathcal{V}_{i}}) - H(\tilde{\mathbf{A}}_{[N]}|\tilde{\mathbf{A}}_{\mathcal{V}_{i+1}})\right] \nonumber \displaybreak[0]\\
    &= -\alpha H(\mathbf{A}) + H(\tilde{\mathbf{A}}_{[N]}) - H(\tilde{\mathbf{A}}_{[N]}|\tilde{\mathbf{A}}_{\mathcal{V}_{\ell}}) \nonumber \displaybreak[0]\\
    &\overset{(b)}{\leq} -\alpha H(\mathbf{A}) + H(\mathbf{A}, R) - H(\tilde{\mathbf{A}}_{[N]}|\tilde{\mathbf{A}}_{\mathcal{V}_{\ell}}) \nonumber \displaybreak[0]\\
    &\overset{(c)}{=} (1 - \alpha)H(\mathbf{A}) + H(R) - H(\tilde{\mathbf{A}}_{[N]}|\tilde{\mathbf{A}}_{\mathcal{V}_{\ell}}) \nonumber \displaybreak[0]\\
    &\overset{(d)}{=} (1 - \alpha)H(\mathbf{A}) + H(R) - H(\mathbf{A},\tilde{\mathbf{A}}_{[N]}|\tilde{\mathbf{A}}_{\mathcal{V}_{\ell}}) \nonumber \displaybreak[0]\\
    &\leq (1 - \alpha)H(\mathbf{A}) + H(R) - H(\mathbf{A}|\tilde{\mathbf{A}}_{\mathcal{V}_{\ell}}) \nonumber \displaybreak[0]\\
    &\overset{(e)}{\leq} H(R), \label{randomness_converse_3}
\end{align}
where
\begin{enumerate}
    \item [(a)] holds by applying $\ell$ times \eqref{randomness_converse_2} and the definition of $\mathcal{V}_{j}$, $j \in \mathcal{L}$;
    \item [(b)] holds because $\tilde{\mathbf{A}}_{[N]}$ is a deterministic function of $(\mathbf{A}, R)$;
    \item [(c)] holds by independence between $\mathbf{A}$ and $R$;
    \item [(d)] holds because by \eqref{recover_all} we have 
\begin{align}   H(\mathbf{A},\tilde{\mathbf{A}}_{[N]}|\tilde{\mathbf{A}}_{\mathcal{V}_{\ell}})
&= H(\mathbf{A}|\tilde{\mathbf{A}}_{[N]}) + H(\tilde{\mathbf{A}}_{[N]}|\tilde{\mathbf{A}}_{\mathcal{V}_{\ell}}) \nonumber \\
&= H(\tilde{\mathbf{A}}_{[N]}|\tilde{\mathbf{A}}_{\mathcal{V}_{\ell}}); \nonumber
\end{align}
    \item [(e)] holds because $-H(\mathbf{A}|\tilde{\mathbf{A}}_{\mathcal{V_{\ell}}}) \leq -(1 - \alpha)H(\mathbf{A})$ by \eqref{privacy_constraint}.
\end{enumerate}
Using \eqref{randomness_converse_3} and \eqref{use_lemma_2_b}, we have
    \begin{align}
         \frac{\ell -k\alpha \max(1, \frac{D}{E})}{k - \ell}H(\mathbf{AB}|\mathbf{B}) \leq H(R). \label{randomness_converse_4}
    \end{align}
Finally, \eqref{th2_optimal_randomness_lower_bound} holds by \eqref{randomness_converse_4}, and since $H(R) \geq 0$ is always true.

\section{Achievability Proof}\label{Achievability}

For the achievability, the idea is to design a coding scheme with the following leakage symmetry, 
\begin{align}
&\forall t \in [N], \exists E_t \in [0, 1], \forall\ \mathcal{I} \subseteq [N],\nonumber\\&|\mathcal{I}| = t \Rightarrow \frac{I(\mathbf{A};\tilde{\mathbf{A}}_{\mathcal{I}})}{H(\mathbf{A})} = E_t, \label{access_function_def}
\end{align}
which means that the leakage of any set of encoded matrices $\tilde{\mathbf{A}}_{\mathcal{I}} \triangleq (\tilde{\mathbf{A}}_{i})_{i\in \mathcal{I}}$ only depends on the cardinality of $\mathcal{I}$. Consequently, the amount of information leakage of $\mathbf{A}$ can be fully described by the following function.
\begin{align} 
    g :[N] \to [0, 1],\ t \mapsto E_t. \label{leakage_symmetry}
\end{align} 
The recoverability and privacy constraints  \eqref{recoverability_constraint} and \eqref{privacy_constraint} impose the following constraints on $g$: 
\begin{align}
    &\forall t \in [\ell],\ g(t) \leq \alpha, \label{g_less_than_alpha} \\
    & \forall t \in [k: N],\  g(t) = 1. \nonumber
\end{align}
Consider two cases: $\alpha \geq \frac{\ell}{k}$ and $\alpha < \frac{\ell}{k}$. The first case will be handled with a modified ramp secret-sharing scheme that will yield $g$ defined as
\begin{align}
	& g: i \mapsto \left\{\begin{matrix}
		& \frac{i}{k}, & \ i\ \in [0: k)\\
		& 1 ,& i\ \in [k: N] 
	\end{matrix} \right.\ .\label{case_1_g} 
\end{align}
For the case $\alpha < \frac{\ell}{k}$, we define $g$ as \begin{align}
    	&g = g_1 + g_2\label{g_def},
\end{align} with
    \begin{align}
    	&g_1: i \mapsto \left\{\begin{matrix}
    		& \frac{\alpha}{\ell}i,\ &i\ \in [0: k)\\
    		& \frac{\alpha}{\ell}k,\ &i\ \in [k: N]
    	\end{matrix}\right.\ ,\label{g1_def}
        \\
    	&g_2: i \mapsto \left\{\begin{matrix}
    		& 0,\ & i \in [0: \ell]\\
    		& \frac{1- \alpha}{k - \ell}(i - \ell) + \alpha - \frac{\alpha}{\ell}i,\ & i \in (\ell:\ k)\\
    		& 1 - \frac{\alpha}{\ell}k,\ & i \in [k: N]
    	\end{matrix}\right. \ , \label{g2_def}
    \end{align}
and construct a coding scheme by combining two modified ramp secret-sharing schemes that will yield $g_1$ and $g_2$.

In Section~\ref{CSC}, we present our coding scheme. Then, in Section~\ref{ACS}, we analyze our coding scheme and show that it satisfies our setting constraints, \eqref{recoverability_constraint} and \eqref{privacy_constraint}. 

\subsection{Coding Scheme}\label{CSC}

\subsubsection{Case 1} $\alpha \geq \frac{\ell}{k}$. \label{CSC_1} 
We divide the sequences of $m$ matrices \(\mathbf{A}\) and \(\mathbf{B}\) into $J  \triangleq \left\lceil\frac{m}{k}\right\rceil$ blocks of size $k$.  For any $b \in \left[J\right]$, define
\begin{align}
    \mathcal{S}_b \triangleq [(b-1)k + 1 : b k] , \label{S_b_def_0}
\end{align}
and define the corresponding block of $k$ matrices from $\mathbf{A}$ as  
\begin{align}
    A^{ \mathcal{S}_b} \triangleq \left[A_{(b-1)k+1},\ \ldots,\ A_{bk}\right] \in {\left(\mathbb{F}_{q}^{C \times D}\right)}^{1 \times k}. \label{A_S_b_0}
\end{align}
Consider distinct non-zero constants $x_i \in  \mathbb{F}_q, i \in [N]$, and define 
\begin{align}
             M_{i}^{\mathcal{S}_b} \triangleq \begin{bmatrix}
                x_{i}^{(b-1)k} \\
             \vdots \\
               x_{i}^{bk - 1}
            \end{bmatrix} \in \mathbb{F}_{q}^{k \times 1},\nonumber
    \end{align} 
and for $\mathcal{I} \in [N]^{=k}$, \begin{align}
     M_{\mathcal{I}}^{\mathcal{S}_b} \triangleq   (M_{i}^{\mathcal{S}_b})_{i \in \mathcal{I}} \in \mathbb{F}_{q}^{k \times |\mathcal{I}|}. \label{M_I_S0}
\end{align}
\textcolor{black}{We divide  $A_s$, $s \in \mathcal{S}_b$, into $k$ submatrices vertically, i.e.,}
\begin{align}
    \textcolor{black}{A_s \triangleq \Trans(\left[(A_{(s, j)})_{j \in [k]} \right]),} \nonumber
\end{align}
\textcolor{black}{where $A_{(s, j)} \in \mathbb{F}_{q}^{\frac{C }{k}\times D}$.} Then, for any $b \in [J]$, define
\begin{align}
    \textcolor{black}{\forall s \in \mathcal{S}_b, \tilde{A}_{(i,s)}\triangleq \sum_{j \in [k]}x_i^{(b-1)k+j - 1}  A_{(s, j)}.}  \label{A_S_bar_0}
\end{align}
For any $s \in \mathcal{S}_b$, $\tilde{A}_{(i,s)}$ is transmitted to Server~$i \in [N]$. 

Then, define 
  \begin{align}
    	\textcolor{black}{\forall b \in \left[  J \right], s \in \mathcal{S}_b, Z_i^{s}} & \textcolor{black}{\triangleq  \tilde{A}_{(i,s)}B_s \in \mathbb{F}_{q}^{\frac{C }{k}\times E}.}\label{Z_S_0}
    \end{align}
The response for each block $\mathcal{S}_b$ from Server~$i \in [N]$ is
\begin{align}
    \forall b \in \left[  J \right], Z_i^{\mathcal{S}_b} \textcolor{black}{\triangleq (Z_i^{s})_{s \in \mathcal{S}_b} \in \left(\mathbb{F}_{q}^{\frac{C}{k} \times E}\right)^{1 \times k},} \label{Z_S_0_1}
\end{align}
and the total response from Server~$i \in [N]$ is
    \begin{align}
          Z_i &\triangleq (Z_i^{\mathcal{S}_b})_{b \in \left[J\right]} \label{Z_i_0} .
    \end{align}

For any $b \in [J]$, the user downloads the answers from $k$ servers, then recovers $(A_s \times B_s)_{s \in \mathcal{S}_b}$ from

\begin{align}
    	{Z}^{\mathcal{S}_b}_{\mathcal{I}} &\triangleq\ ({Z_i}^{\mathcal{S}_b})_{i \in \mathcal{I}} =  \textcolor{black}{\left(\begin{bmatrix}
    			Z_{1}^{s}\\
    			Z_{2}^{s} \\
    			\vdots\\
    			Z_{k}^{s} 
    		\end{bmatrix}\right)_{s \in \mathcal{S}_b}} =\Trans\left(M_{\mathcal{I}}^{\mathcal{S}_b}\right) \textcolor{black}{\left(\begin{bmatrix}
    			A_{(s,1)}B_{s}\\
    			A_{(s,2)}B_{s} \\
    			\vdots\\
    			A_{(s,k)}B_{s} 
    		\end{bmatrix}\right)_{s \in \mathcal{S}_b},} \label{Z_def_0} 
\end{align}
where we consider $\mathcal{I} = [k]$, without loss of generality. The matrix equation in \eqref{Z_def_0} has a unique solution because $ M_{\mathcal{I}}^{\mathcal{S}_b}$  defined in \eqref{M_I_S0} is invertible since its determinant is a minor of a Vandermonde matrix, for which each column $i$ can be factored by $x_{i}^{(b-1)k}$, and the $(x_i)_{i \in [N]}$ are non-zero and distinct.  
 
\subsubsection{Case 2} $\alpha < \frac{\ell}{k}$. \label{CSC_2} 
 Define the following index sets
\begin{align}
	&\mathcal{P} \triangleq [p], p \triangleq \left\lceil \alpha\frac{k}{\ell}m\right\rceil   \label{divide_index},\\
	&\overline{\mathcal{P}} \triangleq [m] \setminus \mathcal{P}, |\overline{\mathcal{P}}| = m - p. \label{divide_index_2}
\end{align}
We partition $\mathbf{A}$ into two sub-sequences $A^{\mathcal{P}} \triangleq (A_j)_{j \in \mathcal{P}}$ and ${A}^{\overline{\mathcal{P}}} \triangleq (A_{j})_{j \in \overline{\mathcal{P}}}$, then proceed as follows.

\begin{enumerate}
    \item[(i)] Break down the sequences  $A^{\mathcal{P}}$ and ${B}^{\mathcal{P}}$ into blocks of $k$ matrices. 
The encoding is similar to \eqref{A_S_bar_0} but with $A^{\mathcal{P}}$ divided into $J  \triangleq \left\lceil\frac{p}{k}\right\rceil$ blocks. 
Similarly, the response for each block $\mathcal{S}_b$ from Server~$i \in [N]$ is  
\begin{align}
    	\forall b \in \left[  J \right],  \textcolor{black}{ Z_i^{\mathcal{S}_b} = (\tilde{A}_{(i,s)}B_s)_{s \in \mathcal{S}_b}.}\label{Z_S_1}
    \end{align}
The response for the sub-sequence ${\mathcal{P}}$ from Server~$i \in [k]$~is
    \begin{align}
          Z_i^{{\mathcal{P}}} &\overset{\mathrm{\Delta}}{=} (Z_i^{\mathcal{S}_b})_{b \in \left[J\right]} \label{Z_i_1} .
    \end{align}
    The decoding is the same as in the previous case. 
    \item [(ii)]  Break down the sequence of matrices in ${A}^{\overline{\mathcal{P}}}$ and ${B}^{\overline{\mathcal{P}}}$ into $J \triangleq \left\lceil\frac{m - p}{L}\right\rceil$ blocks of size $L \triangleq k -\ell$ matrices.  For any $b \in \left[J\right]$, define
\begin{align}
    \mathcal{S}_b &\triangleq [(b-1)L + 1 : bL].\label{S_b_def}
\end{align}
Then, we define the corresponding block of $L$ matrices from the sequence ${A}^{\overline{\mathcal{P}}}$ as
\begin{align}
    A^{ \mathcal{S}_b} \triangleq \left[A_{(b-1)L+1},\ \ldots,\ A_{bL}\right] \in {\left(\mathbb{F}_{q}^{C \times D}\right)}^{1 \times L}.  \label{A_P_bar}
\end{align}

Consider distinct non-zero constants $x_i \in  \mathbb{F}_q, i \in [N]$, and define 

\begin{align}
        M_{i}^{\mathcal{S}_b} \triangleq \begin{bmatrix}
                x_{i}^{(b-1)L} \\
                  x_{i}^{(b-1)L+1} \\
                 \vdots  \\
                x_{i}^{(b-1)L + k -1} \\
            \end{bmatrix} \in \mathbb{F}_{q}^{k \times 1}.\label{M_i}
    \end{align} 
Then, for any $\mathcal{I} \subseteq [N]$, $b \in [J]$, define
\begin{align}
    M_{\mathcal{I}}^{\mathcal{S}_b} &\triangleq (M_{i}^{\mathcal{S}_b})_{i \in \mathcal{I}}\in \mathbb{F}_{q}^{k\times |\mathcal{I}|}. \label{M_I_Sb}
\end{align}

\textcolor{black}{For any $s \in \mathcal{S}_b$, let's divide each $A_s$ into $L$ submatrices vertically, i.e.,}
\textcolor{black}{\begin{align}
    A_s &\triangleq \Trans(\left[(A_{(s, j)})_{j \in [L]}\right]), \nonumber 
\end{align}}
\textcolor{black}{where $A_{(s, j)} \in \mathbb{F}_{q}^{\frac{C}{L}\times D}$.  
For any $b \in [J]$, let $R^{\mathcal{S}_b} \triangleq \left(R_s\right)_{s \in \mathcal{S}_b}$ be a sequence of independent matrices such that}
\begin{align}
    R_{s} &\triangleq \Trans(\left[(R_{(s, r)})_{r \in [\ell]}\right]), \label{random_mat}
\end{align}
\textcolor{black}{where $R_{(s, r)} \in \mathbb{F}_{q}^{\frac{C}{L}\times D}$}.

 Then, for any $b \in [J]$, define 
\begin{align}
   &\forall s \in \mathcal{S}_b, \tilde{A}_{(i,s)} \triangleq \textcolor{black}{\sum_{j \in [L]} x_{i}^{(b-1)L + j - 1}A_{(s,j)} + \sum_{r \in [\ell]}{x_{i}}^{bL + r - 1}R_{(s, r)}}, \label{A_S_bar}
\end{align}
For any $s \in \mathcal{S}_b$, $\tilde{A}_{(i,s)}$ is transmitted to Server~$i \in [N]$. 

The response from Server~$i \in [N]$ is  
  \begin{align}
    	\textcolor{black}{\forall b \in \left[  J \right], s \in \mathcal{S}_b, Z_i^{s}  \triangleq \tilde{A}_{(i,s)}B_s \in \mathbb{F}_{q}^{\frac{C }{L}\times E}.} \nonumber
    \end{align}
We also define
\begin{align}
     \textcolor{black}{ Z_i^{\mathcal{S}_b} \overset{\mathrm{\Delta}}{=} (Z_i^{s})_{s \in \mathcal{S}_b}\in \left(\mathbb{F}_{q}^{\frac{C}{L} \times E}\right)^{1 \times L},} \label{Z_S_2} 
\end{align}
and
\begin{align}
      Z_i^{\overline{\mathcal{P}}} &\overset{\mathrm{\Delta}}{=} (Z_i^{\mathcal{S}_b})_{b \in \left[J\right]} \label{Z_i_2} .
\end{align}

For any block $b \in [J]$, upon receiving answers from $k$ servers, the user recovers $(A_s \times B_s)_{s \in \mathcal{S}_b}$ from  

\begin{align}
    	{Z}^{\mathcal{S}_b}_{\mathcal{I}} &\triangleq\  \textcolor{black}{({Z_i}^{\mathcal{S}_b})_{i \in \mathcal{I}} =  \left(\begin{bmatrix}
                Z_{1}^{s}\\
                \vdots \\
                Z_{L}^{s} \\
                Z_{L+1}^{s} \\
                \vdots \\
                Z_{k}^{s}
            \end{bmatrix}\right)_{s \in \mathcal{S}_b}}
             =  \Trans\left(M_{\mathcal{I}}^{\mathcal{S}_b}\right) \textcolor{black}{\left(\begin{bmatrix}
                A_{(s, 1)}B_{s}\\
                \vdots \\
                A_{(s, L)}B_{s} \\
                R_{(s, 1)}B_{s} \\
                \vdots \\
                R_{(s, \ell)}B_{s}
            \end{bmatrix}\right)_{s \in \mathcal{S}_b},}
      \label{Z_def_2} 
\end{align}
 where $\mathcal{I} = [k]$, without loss of generality. The matrix equation in \eqref{Z_def_2} has a unique solution because $M_{\mathcal{I}}^{\mathcal{S}_b}$  defined in \eqref{M_I_Sb} is invertible since its determinant is a minor of a Vandermonde matrix, for which each column $i$ can be factored by $x_{i}^{(b-1)L}$, and $(x_i)_{i \in [N]}$ are non-zero and distinct.    
\end{enumerate}

Finally, from \eqref{Z_i_1} and \eqref{Z_i_2} the total response from Server~$i \in [N]$ is
\begin{align}
    Z_i \overset{\mathrm{\Delta}}{=} (Z_i^{\mathcal{P}}, Z_i^{\overline{\mathcal{P}}})\label{Z_def} .
\end{align}

\subsection{Analysis of the Coding Scheme} \label{ACS}
\subsubsection{Additional Definitions}

\begin{defn}
If $\alpha \geq \frac{\ell}{k}$, then define 
\begin{align}
    \textcolor{black}{\sigma^{\mathcal{S}_b}_{\mathcal{I}}} &\textcolor{black}{\triangleq  \Trans(M_{\mathcal{I}}^{\mathcal{S}_b}) \times A^{\mathcal{S}_b}}\label{A_S_0} \\ 
    &\textcolor{black}{= \Trans(M_{\mathcal{I}}^{\mathcal{S}_b}) \times \left[\Trans(\left[(A_{(s, j)})_{j \in [k]} \right])\right]_{s \in \mathcal{S}_b}}  \nonumber \\
    &\textcolor{black}{= \Trans(M_{\mathcal{I}}^{\mathcal{S}_b}) \times \begin{bmatrix}
A_{((b-1)k+1,1)} & \cdots & A_{(bk,1)} \\
A_{((b-1)k+1,2)}  & \cdots & A_{(bk,2)} \\
\vdots & \ddots & \vdots \\
A_{((b-1)k+1,k)} & \cdots & A_{(bk,k)}
\end{bmatrix}} \nonumber \\ 
&\textcolor{black}{= {\left[\sum_{j \in [k]} A_{(s,j)} x_i^{(b-1)k+j-1}\right]}_{i \in \mathcal{I}, s \in \mathcal{S}_b}} \nonumber \\
    &\textcolor{black}{= {\left[\tilde{A}_{(i,s)}\right]}_{ i \in \mathcal{I}, s \in \mathcal{S}_b},} \nonumber
\end{align}
where $\mathcal{S}_b$ is defined in \eqref{S_b_def_0}, ${A}^{\mathcal{S}_b}$ is defined in \eqref{A_S_b_0},  $M_{\mathcal{I}}^{\mathcal{S}_b}$ is defined in \eqref{M_I_S0}, and $\tilde{A}_{(i,s)}$ is defined in \eqref{A_S_bar_0}.
\end{defn}

\begin{defn}
If $\alpha < \frac{\ell}{k}$, then define 
\begin{align}
    \textcolor{black}{U_b \triangleq \left(\begin{bmatrix}  A_s \\ R_s
         \end{bmatrix}\right)_{s \in \mathcal{S}_b}} ,\label{U_b_mat}
\end{align}
where $b \in \left[\frac{m - p}{L} \right]$, $\mathcal{S}_b$ is defined in \eqref{S_b_def} and $ A^{ \mathcal{S}_b}$ is defined in~\eqref{A_P_bar}. Then, define
\begin{align}
    \textcolor{black}{\sigma^{\mathcal{S}_b}_{\mathcal{I}}} &\textcolor{black}{\triangleq \Trans(M_{\mathcal{I}}^{\mathcal{S}_b}) \times U_b} 
     \label{sigma_Ub} \\
    & \textcolor{black}{= \Trans(M_{\mathcal{I}}^{\mathcal{S}_b}) \times \left[ \begin{array}{c} \Trans([(A_{(s,j)})_{j \in [L]}]) \\  \Trans([(R_{(s, r)})_{r \in [\ell]}]) \end{array} \right]_{s \in \mathcal{S}_b}} \nonumber\\
    &\textcolor{black}{= \Trans(M_{\mathcal{I}}^{\mathcal{S}_b}) \times \begin{bmatrix}
    A_{((b-1)L+1,1)} & \cdots & A_{(bL,1)} \\
    \vdots & \ddots & \vdots \\
    A_{((b-1)L+1,L)} & \cdots & A_{(bL,L)} \\
    R_{((b-1)L+1, 1)} & \cdots & R_{(bL,1)} \\
    \vdots & \ddots & \vdots \\
    R_{((b-1)L+1, \ell)} & \cdots & R_{(bL, \ell)}
    \end{bmatrix}} \nonumber \\
    &\textcolor{black}{= \left[ \sum_{j \in [L]}  x_i^{(b-1)L + j - 1} A_{(s,j)} + \sum_{r \in [\ell]} x_i^{bL + r - 1} R_{(s, r)} \right]_{i \in \mathcal{I}, s \in \mathcal{S}_b}} \nonumber \\
      &\textcolor{black}{= {\left[\tilde{A}_{(i,s)}\right]}_{i \in \mathcal{I}, s \in \mathcal{S}_b},} \label{A_I_P_bar}
\end{align}

where $M_{\mathcal{I}}^{\mathcal{S}_b}$ is defined in \eqref{M_I_Sb} and $\tilde{A}_{(i,s)}$ is defined in \eqref{A_S_bar}. Also, define
\begin{align}
        V\ &\triangleq\ \begin{bmatrix}
			I \\ O
		\end{bmatrix} \in\ \mathbb{F}_{q}^{{k \times L}}, \label{V_def}
\end{align}
where $I \in \mathbb{F}_{q}^{{L \times L}}$ is the identity matrix and $O \in \textcolor{black}{\mathbb{F}_{q}^{(k - L) \times L}}$ has all entries equal to zero such that, for any block $b \in [\frac{m - p}{L}]$, one can recover $A^{ \mathcal{S}_b}$ from $U_b$, as

\begin{align} 
  \textcolor{black}{A^{ \mathcal{S}_b} = \Trans(V) \times U_{b}.} \label{*}
\end{align}
\end{defn}

\subsubsection{Rate}
For the case $\alpha \geq \frac{\ell}{k}$, we first prove the following lemma.
\begin{lem} \label{rate-lm0}
 For the response $Z_i$, $i \in [N]$, defined in \eqref{Z_i_0}, we~have
\begin{align}
    H(Z_i)\ {\leq}\ \left\lceil  \frac{m}{k} \right\rceil{CE}. \label{Z_i_b}
\end{align}
\end{lem}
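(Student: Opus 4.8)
The plan is to bound $H(Z_i)$ by counting the number of field symbols in $Z_i$, since entropy over $\mathbb{F}_q$ is at most the number of coordinates. From \eqref{Z_i_0}, $Z_i = (Z_i^{\mathcal{S}_b})_{b \in [J]}$ with $J = \lceil m/k \rceil$, so it suffices to bound $H(Z_i^{\mathcal{S}_b})$ for a single block and multiply by $J$. First I would note that, by \eqref{Z_S_0_1}, $Z_i^{\mathcal{S}_b} = (Z_i^s)_{s \in \mathcal{S}_b}$ is a tuple of $k$ matrices (since $|\mathcal{S}_b| = k$), and by \eqref{Z_S_0} each $Z_i^s = \tilde A_{(i,s)} B_s \in \mathbb{F}_q^{\frac{C}{k} \times E}$ has $\frac{C}{k} \cdot E$ entries. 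Hence $Z_i^{\mathcal{S}_b}$ has at most $k \cdot \frac{C}{k} \cdot E = CE$ field symbols, giving $H(Z_i^{\mathcal{S}_b}) \leq CE$ (recall logarithms are base $q$, so a uniform symbol of $\mathbb{F}_q$ contributes at most $1$).

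Then I would assemble the pieces:
\begin{align}
H(Z_i) \leq \sum_{b \in [J]} H(Z_i^{\mathcal{S}_b}) \leq J \cdot CE = \left\lceil \frac{m}{k} \right\rceil CE, \nonumber
\end{align}
where the first inequality is subadditivity of entropy and the second uses the per-block bound just established. This yields \eqref{Z_i_b}.

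The argument is essentially bookkeeping, so there is no real obstacle; the only point requiring a little care is confirming the dimensions: that $A_s$ is split into exactly $k$ vertical submatrices $A_{(s,j)} \in \mathbb{F}_q^{\frac{C}{k} \times D}$, that $\tilde A_{(i,s)}$ defined in \eqref{A_S_bar_0} therefore lies in $\mathbb{F}_q^{\frac{C}{k}\times D}$, and that multiplying on the right by $B_s \in \mathbb{F}_q^{D \times E}$ gives a $\frac{C}{k}\times E$ matrix, so that the $k$ of them together have exactly $CE$ entries per block. One should also note this implicitly assumes $k \mid C$, consistent with the construction in Section~\ref{CSC_1}; the entropy bound holds regardless since $H(Z_i^{\mathcal{S}_b})$ is at most the number of symbols transmitted, which is $CE$ per block by construction.
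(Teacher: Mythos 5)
Your proposal is correct and follows essentially the same route as the paper: decompose $Z_i$ into its $\lceil m/k\rceil$ blocks, then bound each block's entropy by the number of $\mathbb{F}_q$-symbols it contains, namely $k\cdot\frac{C}{k}\cdot E = CE$. The only cosmetic difference is that you use subadditivity across blocks where the paper invokes independence to get equality; for an upper bound this is immaterial.
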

\begin{proof} 
\begin{align}
    H(Z_i)
    &\overset{\mathrm{(a)}}{=} H\left((Z_i^{\mathcal{S}_b})_{b \in \left[\frac{m}{k}\right]}\right) \nonumber  \displaybreak[0]\\
    &\overset{\mathrm{(b)}}{=} \left\lceil  \frac{m}{k} \right\rceil H(Z_i^{\mathcal{S}_b}) \nonumber  \displaybreak[0]\\
    &\overset{\mathrm{(c)}}{=} \left\lceil  \frac{m}{k} \right\rceil \textcolor{black}{H\left((\tilde{A}_{(i,s)}B_s)_{s \in \mathcal{S}_b}\right)} \nonumber  \displaybreak[0]\\
     &\overset{\mathrm{(d)}}{\leq} \textcolor{black}{\left\lceil  \frac{m}{k}\right\rceil \sum_{s \in \mathcal{S}_b} H\left(\tilde{A}_{(i,s)}B_s\right)} \nonumber  \displaybreak[0]\\
    &\overset{\mathrm{(e)}}{\leq}\left\lceil  \frac{m}{k} \right\rceil{CE}, \nonumber
\end{align}
where
\begin{enumerate}
    \item [(a)] holds by \eqref{Z_i_0};
    \item [(b)] holds by independence of the blocks;
    \item [(c)] holds by \eqref{Z_S_0_1} and  \eqref{Z_S_0};
    \item [(d)] holds by the property of joint entropy;
    \item [(e)] \textcolor{black}{holds because $|\mathcal{S}_b| = k$ and given  \eqref{A_S_bar_0} for any $s \in \mathcal{S}_b$, $\tilde{A}_{(i,s)}B_s  \in \mathbb{F}^{\frac{C}{k} \times E}_{q}$}.
\end{enumerate}    
\end{proof}

Then, we have
\begin{align}
    \Lambda_{k} 
    &=\frac{H(\mathbf{AB}|\mathbf{B})}{\max_{\mathcal{I} \in [N]^{= k}} \sum_{i \in \mathcal{I}}H(Z_i)} \nonumber \displaybreak[0]\\
    &\overset{\mathrm{(a)}}{\geq} \frac{H(\mathbf{AB}|\mathbf{B})}{\left\lceil  \frac{m}{k} \right\rceil kCE} \nonumber  \displaybreak[0]\\
    &\overset{\mathrm{(b)}}{=} \frac{m \cdot \min(\frac{D}{E}, 1)}{k\left\lceil  \frac{m}{k} \right\rceil} \nonumber \displaybreak[0]\\
    &\overset{m \to \infty}{\longrightarrow}  \min\left(\frac{D}{E}, 1\right),\nonumber
\end{align}
 where
\begin{enumerate}
    \item [(a)] holds by \eqref{Z_i_b};
    \item [(b)] holds by \eqref{use_lemma_2_b}.
\end{enumerate}

For the case $\alpha < \frac{\ell}{k}$, from \eqref{Z_def}, we have
\begin{align}
    	&\forall i \in [N],\ H(Z_i) = H(Z_i^{\mathcal{P}}) + H(Z_i^{\overline{\mathcal{P}}}).\label{responses}
\end{align}

Then, we have
\begin{align}
    H(Z_i^{\mathcal{P}})
     &\overset{\mathrm{(a)}}{=} H\left((Z_i^{\mathcal{S}_b})_{b \in \left[\frac{p}{k}\right]}\right) \nonumber  \displaybreak[0]\\
    &\overset{\mathrm{(b)}}{=} {\left\lceil  \frac{p}{k} \right\rceil}H(Z_i^{\mathcal{S}_b}) \nonumber  \displaybreak[0]\\
    &\overset{\mathrm{(c)}}{=} {\left\lceil  \frac{p }{k} \right\rceil}\textcolor{black}{H\left((\tilde{A}_{(i,s)}B_s)_{s \in \mathcal{S}_b}\right)}
 \nonumber  \displaybreak[0]\\
    &\overset{\mathrm{(d)}}{\leq}{\left\lceil  \frac{p }{k} \right\rceil}{CE}, \label{Z_1_upperbound}
\end{align}
where
\begin{enumerate}
    \item [(a)] holds by \eqref{Z_i_1};
    \item [(b)] holds by independence of the blocks;
    \item [(c)] holds by \eqref{Z_S_1};
    \item [(d)] \textcolor{black}{holds because $|\mathcal{S}_b| = k$ and given  \eqref{A_S_bar_0} for any $s \in \mathcal{S}_b$, $\tilde{A}_{(i,s)}B_s  \in \mathbb{F}^{\frac{C}{k} \times E}_{q}$}.
\end{enumerate}
Similarly, we have
\begin{align}
    H(Z_i^{\overline{\mathcal{P}}})\
    {\leq}\ \left\lceil \frac{m - p}{k -\ell}\right\rceil {CE}.\label{Z_2_upperbound} 
\end{align}

Then, from \eqref{responses}, \eqref{Z_1_upperbound}, and \eqref{Z_2_upperbound}, we have
\begin{align}	
    H(Z_i)
    &\leq \left\lceil  \frac{p}{k} \right\rceil CE + \left\lceil \frac{m - p}{k -\ell}\right\rceil CE .\label{Z_i_2_upperbound}
 \end{align} 
 For the communication rate, we have
\begin{align}
    \Lambda_{k} 
    &=\frac{H(\mathbf{AB}|\mathbf{B})}{\max_{\mathcal{I} \in [N]^{= k}} \sum_{i \in \mathcal{I}}H(Z_i)} \nonumber \displaybreak[0]\\
     &\overset{\mathrm{(a)}}{\geq} \frac{H(\mathbf{AB}|\mathbf{B})}{k\left(\left\lceil  \frac{p }{k} \right\rceil + \left\lceil \frac{m - p}{k -\ell}\right\rceil\right) CE} \nonumber	\displaybreak[0]\\
      &\overset{\mathrm{(b)}}{=} 	\frac{m\left(\min(1 , \frac{D}{E})\right)}{k\left(\left\lceil  \frac{p }{k} \right\rceil + \left\lceil \frac{m - p}{k -\ell}\right\rceil\right)} \nonumber \displaybreak[0]\\
       &\overset{m \to \infty}{\longrightarrow} \frac{\min(1 , \frac{D}{E})}{k\left(\frac{1 - \alpha}{k - \ell}\right)},
      \label{rate_lowerbound} 
\end{align}
where 
\begin{enumerate}
    \item [(a)] holds by \eqref{Z_i_2_upperbound};
    \item [(b)] holds by \eqref{use_lemma_2_b};
    \item[]\!\!\!\!\!\!\!\!\!\!\! and in the limit we used \eqref{divide_index}.
\end{enumerate}

\subsubsection{Local Randomness} \label{Randomness}

When $\alpha \geq \frac{\ell}{k}$, we use no randomness in the coding scheme, hence  $R_{(\ell,\alpha, k)} = 0$. However, when $\alpha < \frac{\ell}{k}$, we use randomness \textcolor{black}{in \eqref{A_S_bar}}, where for each of the $\left\lceil  \frac{m - p}{k - \ell} \right\rceil$ blocks, we use a sequence of $\ell$ random matrices that are uniformly distributed over $\textcolor{black}{\mathbb{F}_{q}^{\frac{C}{L} \times D}}$.
\begin{lem} We have
    \begin{align}
        H\left((R^{\mathcal{S}_b})_{b \in \left[  \frac{m - p}{k - \ell} \right]}\right)= \left\lceil  \frac{m - p}{k - \ell}\right\rceil  {\ell CD}. \label{randomness_upperbound}
    \end{align}
\end{lem}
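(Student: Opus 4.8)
The plan is to compute the joint entropy of the randomness matrices used across all blocks directly, exploiting the independence structure built into the construction. Recall that in Case 2 we introduced, for each block $b \in \left[\frac{m-p}{k-\ell}\right]$, a sequence $R^{\mathcal{S}_b} \triangleq (R_s)_{s \in \mathcal{S}_b}$ of independent matrices, where each $R_s$ is formed by stacking $\ell$ submatrices $R_{(s,r)} \in \mathbb{F}_q^{\frac{C}{L} \times D}$, $r \in [\ell]$, that are themselves independent and uniformly distributed, as described around \eqref{random_mat}. The blocks $\mathcal{S}_b$ partition $\overline{\mathcal{P}}$, so the total number of indices $s$ appearing across all blocks is $m - p$, and hence the total number of independent uniform submatrices is $\left\lceil \frac{m-p}{k-\ell} \right\rceil \cdot (k - \ell) \cdot \ell$ after rounding up to whole blocks.

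First I would invoke the independence of the blocks to write $H\left((R^{\mathcal{S}_b})_{b}\right) = \sum_b H(R^{\mathcal{S}_b})$. Second, within each block, I would use the independence of the matrices $R_s$, $s \in \mathcal{S}_b$, and then the independence of the submatrices $R_{(s,r)}$, $r \in [\ell]$, to get $H(R^{\mathcal{S}_b}) = \sum_{s \in \mathcal{S}_b} \sum_{r \in [\ell]} H(R_{(s,r)})$. Third, since each $R_{(s,r)}$ is uniform over $\mathbb{F}_q^{\frac{C}{L} \times D}$ and logarithms are base $q$, we have $H(R_{(s,r)}) = \frac{C}{L} D = \frac{CD}{k - \ell}$. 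Combining, and using $|\mathcal{S}_b| = L = k - \ell$, each block contributes $(k-\ell) \cdot \ell \cdot \frac{CD}{k-\ell} = \ell C D$, and summing over the $\left\lceil \frac{m-p}{k-\ell} \right\rceil$ blocks yields $\left\lceil \frac{m-p}{k-\ell} \right\rceil \ell C D$, which is exactly \eqref{randomness_upperbound}.

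I do not anticipate a serious obstacle here; the statement is essentially a bookkeeping consequence of the explicit construction. The only point requiring mild care is the indexing convention: the lemma ranges $b$ over $\left[\frac{m-p}{k-\ell}\right]$, which by the paper's notation means $\left[1, \left\lceil \frac{m-p}{k-\ell} \right\rceil\right] \cap \mathbb{N}$, consistent with the $J \triangleq \left\lceil \frac{m-p}{L} \right\rceil$ blocks defined in the construction; I would make sure the count of blocks and the ceiling match. A secondary subtlety is that the construction implicitly assumes $L = k - \ell$ divides $C$ so that $\frac{C}{L}$ is an integer and $\frac{C}{L} D$ is a genuine entropy in base $q$; this divisibility is part of the standing assumptions of the scheme (matrices are split evenly), so I would simply note it rather than belabor it.
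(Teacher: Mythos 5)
Your proposal is correct and matches the paper's proof essentially step for step: decompose by independence of blocks and of the submatrices $R_{(s,r)}$, then count $|\mathcal{S}_b|\cdot \ell = L\ell$ uniform matrices per block, each with entropy $\frac{CD}{L}$ in base $q$, giving $\ell CD$ per block. The remarks on the ceiling convention and the divisibility of $C$ by $L$ are sensible but not needed beyond what the paper already assumes.
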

\begin{proof}
We have 
\begin{align}
   H\left((R^{\mathcal{S}_b})_{b \in \left[  \frac{m - p}{k - \ell} \right]}\right) &\overset{\mathrm(a)}{=} \textcolor{black}{\sum_{b \in \left[  \frac{m - p}{k - \ell} \right]} \sum_{s \in \mathcal{S}_b}\sum_{ r \in [\ell]} H\left(R_{(s,r)}\right)} \nonumber  \displaybreak[0]\\
     &\overset{\mathrm(b)}{=} \left\lceil  \frac{m - p}{k - \ell}\right\rceil  {\ell \log|\mathbb{F}_{q}^{ C \times D }|} \nonumber  \displaybreak[0]\\
    &=\left\lceil  \frac{m - p}{k - \ell}\right\rceil  {\ell CD}, \nonumber
\end{align}
where
\begin{enumerate}
    \item [(a)] \textcolor{black}{holds by \eqref{random_mat} and} independence of the blocks and independence of $\ell$ random matrices in each block;
    \item [(b)] \textcolor{black}{holds because $|\mathcal{S}_b|=L$ and for any $s \in \mathcal{S}_b$, $r \in [\ell]$, $R_{(s,r)} \in \mathbb{F}_{q}^{ \frac{C}{L} \times D }$.} 
\end{enumerate}   
\end{proof}
Then, for the rate of local randomness, we have 
\begin{align}
&\frac{H\left((R^{\mathcal{S}_b})_{b \in \left[  \frac{m - p}{k - \ell} \right]}\right)}{H(\mathbf{AB}|\mathbf{B})} \nonumber \displaybreak[0]\\
&\overset{\mathrm(a)}{=}\frac{\left\lceil  \frac{m - p}{k - \ell} \right\rceil \ell CD}{H(\mathbf{AB}|\mathbf{B})} \nonumber \displaybreak[0]\\
    &\overset{\mathrm(b)}{=} \frac{ \left\lceil  \frac{m - p}{k - \ell}\right\rceil \ell CD}{m \cdot \min(CD, CE)} \nonumber \displaybreak[0]\\
    &\overset{\mathrm{(c)}}{=} \frac{\ell  \left\lceil \frac{m - \left\lceil  \alpha \frac{k}{\ell}m \right\rceil}{k -\ell}\right\rceil}{m} \max\left(1, \frac{D}{E}\right)\nonumber  \displaybreak[0]\\
    &\overset{m \to \infty}{\longrightarrow} \frac{\ell - \alpha k}{k - \ell} \max\left(1, \frac{D}{E}\right), \nonumber 
\end{align}
where
\begin{enumerate}
    \item [(a)] holds by \eqref{randomness_upperbound};
    \item [(b)] holds by \eqref{use_lemma_2_b};
    \item [(c)] holds by \eqref{divide_index}.
\end{enumerate}

\subsubsection{Recoverability Constraint}

For each block $S_b$, the user can recover $(A_sB_s)_{s \in S_b}$ from \eqref{Z_def_0} and \eqref{Z_def_2} because the matrices $M_{\mathcal{I}}^{\mathcal{S}_b}$ defined in \eqref{M_I_S0} and \eqref{M_I_Sb} are invertible, as explained in Sections~\ref{CSC_1} and~\ref{CSC_2}, respectively. Thus, the coding scheme designed in Section~\ref{CSC} satisfies~\eqref{recoverability_constraint}. 

\subsubsection{Privacy Constraint}
First, consider the case $\alpha < \frac{\ell}{k}$. 
 \begin{lem} \label{Lemma_3}

 For any $\mathcal{I} \subseteq [N]$, define
\begin{align}
        \textcolor{black}{\tilde{A}_{\mathcal{I}}^{{\mathcal{P}}}} & \textcolor{black}{\overset{\mathrm{\Delta}}{=} ( \tilde{A}_{(i,s)})_{i \in \mathcal{I},\ s \in \mathcal{S}_b,\ b \in \left[\frac{p}{k}\right]},}\nonumber \displaybreak[0]\\
      \textcolor{black}{\tilde{A}_{\mathcal{I}}^{\overline{\mathcal{P}}}} &\textcolor{black}{\overset{\mathrm{\Delta}}{=} (\tilde{A}_{(i,s)})_{i \in \mathcal{I},\ s \in \mathcal{S}_b,\ b \in \left[\frac{m - p}{k - \ell}\right]}.}\label{A_P_bar_I_def}
 \end{align}
Then, we have 
     \begin{align}
         g_1(|\mathcal{I}|)
        &= \frac{H(A^{\mathcal{P}}) - H(A^{\mathcal{P}}|\tilde{A}^{\mathcal{P}}_{\mathcal{I}})}{{(\frac{\alpha}{\ell}k)}^{-1}H(A^{\mathcal{P}})},\label{g1_eq}
    \end{align} \begin{align}
        g_2(|\mathcal{I}|) = \frac{H({A}^{\overline{\mathcal{P}}}) - H({A}^{\overline{\mathcal{P}}}|\tilde{A}^{\overline{\mathcal{P}}}_{\mathcal{I}})}{{\left(1 - \frac{\alpha}{\ell}k \right)}^{-1}H({A}^{\overline{\mathcal{P}}})},\label{g2_eq}
    \end{align}
 where $g_1$ and $g_2$ are defined in \eqref{g1_def} and \eqref{g2_def}, respectively.
 \end{lem}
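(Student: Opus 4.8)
The plan is to compute the two mutual-information leakages $I(A^{\mathcal P};\tilde A^{\mathcal P}_{\mathcal I})$ and $I(A^{\overline{\mathcal P}};\tilde A^{\overline{\mathcal P}}_{\mathcal I})$ explicitly as functions of $|\mathcal I|$, and then check that dividing by the appropriate normalizers in \eqref{g1_eq} and \eqref{g2_eq} reproduces the piecewise formulas \eqref{g1_def} and \eqref{g2_def}. No limit in $q$ is needed here, since the statement concerns leakage about $\mathbf A$ only; one only needs $q>N$ so the nodes $(x_i)_{i\in[N]}$ are distinct and nonzero. The single elementary fact used throughout is this: if $G\in\mathbb F_q^{t\times n}$ and $X_1,\dots,X_n$ are i.i.d.\ uniform over $\mathbb F_q^{a\times b}$, then the $t$-tuple $Y$ with $Y_\iota=\sum_{j=1}^n G_{\iota j}X_j$ satisfies $H(Y)=\operatorname{rank}(G)\cdot ab$ (entropies in $q$-ary units), because $Y$ is the image of a uniform vector under a linear map of rank $\operatorname{rank}(G)\cdot ab$; in particular $H(Y)=H(X_1,\dots,X_n)$ iff $G$ has full column rank $n$, and $H(Y)=t\cdot ab$ iff $G$ has full row rank.

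For \eqref{g1_eq}: $A^{\mathcal P}$ is cut into blocks $\mathcal S_b$ of $k$ matrices, and by \eqref{A_S_bar_0} each share $\tilde A_{(i,s)}$ depends only on the vertical submatrices $(A_{(s,j)})_{j\in[k]}$ of the single matrix $A_s$; since distinct $A_s$ (across and within blocks) involve disjoint independent submatrices, the pairs $(A_s,(\tilde A_{(i,s)})_{i\in\mathcal I})$ are mutually independent, so $I(A^{\mathcal P};\tilde A^{\mathcal P}_{\mathcal I})=\sum_b\sum_{s\in\mathcal S_b}I\big(A_s;(\tilde A_{(i,s)})_{i\in\mathcal I}\big)$. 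For fixed $s\in\mathcal S_b$, $(\tilde A_{(i,s)})_{i\in\mathcal I}=\Trans(M_{\mathcal I}^{\mathcal S_b})\cdot\Trans([(A_{(s,j)})_{j\in[k]}])$, and after factoring $x_i^{(b-1)k}$ out of row $i$, $\Trans(M_{\mathcal I}^{\mathcal S_b})$ becomes the $|\mathcal I|\times k$ Vandermonde matrix in the distinct nonzero nodes $(x_i)_{i\in\mathcal I}$, hence has rank $\min(|\mathcal I|,k)$. Since $(\tilde A_{(i,s)})_{i\in\mathcal I}$ is a deterministic function of $A_s$ and the $A_{(s,j)}$ are i.i.d.\ uniform over $\mathbb F_q^{\frac Ck\times D}$, the elementary fact gives $I(A_s;(\tilde A_{(i,s)})_{i\in\mathcal I})=H((\tilde A_{(i,s)})_{i\in\mathcal I})=\tfrac{\min(|\mathcal I|,k)}{k}H(A_s)$. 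Summing over $s,b$ yields $I(A^{\mathcal P};\tilde A^{\mathcal P}_{\mathcal I})=\tfrac{\min(|\mathcal I|,k)}{k}H(A^{\mathcal P})$, so the right-hand side of \eqref{g1_eq} equals $\tfrac{\alpha k}{\ell}\cdot\tfrac{\min(|\mathcal I|,k)}{k}=\tfrac{\alpha}{\ell}\min(|\mathcal I|,k)$, which is exactly $g_1(|\mathcal I|)$ from \eqref{g1_def}.

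For \eqref{g2_eq}: the same decomposition applies with blocks $\mathcal S_b$ of size $L\triangleq k-\ell$, now with $\ell$ fresh independent uniform submatrices $R_{(s,r)}$ per matrix (each over $\mathbb F_q^{\frac CL\times D}$), so $I(A^{\overline{\mathcal P}};\tilde A^{\overline{\mathcal P}}_{\mathcal I})=\sum_b\sum_{s\in\mathcal S_b}I(A_s;(\tilde A_{(i,s)})_{i\in\mathcal I})$. For fixed $s$, by \eqref{A_S_bar} one has $(\tilde A_{(i,s)})_{i\in\mathcal I}=[\,G_A\mid G_R\,]\cdot\Trans[(A_{(s,j)})_{j\in[L]},(R_{(s,r)})_{r\in[\ell]}]$ with $[\,G_A\mid G_R\,]=\Trans(M_{\mathcal I}^{\mathcal S_b})\in\mathbb F_q^{|\mathcal I|\times k}$; factoring $x_i^{(b-1)L}$ out of row $i$ turns $[\,G_A\mid G_R\,]$ into the $|\mathcal I|\times k$ Vandermonde in $(x_i)_{i\in\mathcal I}$ (rank $\min(|\mathcal I|,k)$), and factoring $x_i^{bL}$ out of row $i$ of $G_R$ turns it into the $|\mathcal I|\times\ell$ Vandermonde in $(x_i)_{i\in\mathcal I}$ (rank $\min(|\mathcal I|,\ell)$). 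Since given $A_s$ the tuple $(\tilde A_{(i,s)})_{i\in\mathcal I}$ is an additive shift of $G_R(R_{(s,r)})_{r\in[\ell]}$, the elementary fact gives $H((\tilde A_{(i,s)})_{i\in\mathcal I})=\min(|\mathcal I|,k)\tfrac{CD}{L}$ and $H((\tilde A_{(i,s)})_{i\in\mathcal I}\mid A_s)=H(G_R(R_{(s,r)})_{r\in[\ell]})=\min(|\mathcal I|,\ell)\tfrac{CD}{L}$, hence $I(A_s;(\tilde A_{(i,s)})_{i\in\mathcal I})=\big(\min(|\mathcal I|,k)-\min(|\mathcal I|,\ell)\big)\tfrac{CD}{L}=\tfrac{[\min(|\mathcal I|,k)-\ell]^{+}}{L}H(A_s)$. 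Summing gives $I(A^{\overline{\mathcal P}};\tilde A^{\overline{\mathcal P}}_{\mathcal I})=\tfrac{[\min(|\mathcal I|,k)-\ell]^{+}}{k-\ell}H(A^{\overline{\mathcal P}})$, so the right-hand side of \eqref{g2_eq} equals $\big(1-\tfrac{\alpha k}{\ell}\big)\tfrac{[\min(|\mathcal I|,k)-\ell]^{+}}{k-\ell}$; this is $0$ for $|\mathcal I|\in[0:\ell]$ and $1-\tfrac{\alpha}{\ell}k$ for $|\mathcal I|\in[k:N]$ immediately, while for $|\mathcal I|\in(\ell:k)$ a one-line rearrangement shows $\big(1-\tfrac{\alpha k}{\ell}\big)\tfrac{|\mathcal I|-\ell}{k-\ell}=\tfrac{1-\alpha}{k-\ell}(|\mathcal I|-\ell)+\alpha-\tfrac{\alpha}{\ell}|\mathcal I|$, matching \eqref{g2_def}.

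I expect the main obstacle to be organizing the rank/entropy bookkeeping cleanly: isolating once and for all the ``$H(Y)=\operatorname{rank}(G)\cdot(\text{per-submatrix entropy})$'' fact, and reducing the generalized Vandermonde blocks $M_{\mathcal I}^{\mathcal S_b}$ and their sub-blocks $G_R$ to ordinary Vandermonde matrices so that the leakage of each individual $A_s$ is pinned down exactly. Once that is in place, the block-wise additivity of the leakage (from the disjoint-independent-submatrices structure) and the final algebraic matching with \eqref{g1_def}--\eqref{g2_def} are routine.
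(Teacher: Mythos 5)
Your proposal is correct, and it takes a genuinely more direct route than the paper's. The paper reduces \eqref{g1_eq} and \eqref{g2_eq} to exact formulas for the conditional entropies $H(A^{\overline{\mathcal P}}\mid\tilde A^{\overline{\mathcal P}}_{\mathcal I})$ in three regimes of $|\mathcal I|$ and then sandwiches each one: for $|\mathcal I|\le\ell$ an entropy-counting argument gives \eqref{t_L_k}, and for $\ell<|\mathcal I|<k$ it proves the upper bound \eqref{appendix_lemma_1} by induction on $t=k-|\mathcal I|$ (peeling off one share costs at most $\tfrac1L H(A^{\mathcal S_b})$) and the lower bound \eqref{appendix_lemma_2} by introducing the stacked variable $U_b$ and the invertible augmented matrix $G^{\mathcal S_b}_{\mathcal I}$ of \eqref{G_b_def}. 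You instead compute the joint law of the shares exactly: each per-$s$ block of shares is a linear image $G\cdot X$ of independent uniform submatrices, so its entropy equals $\operatorname{rank}(G)$ times the per-submatrix entropy, and the ranks of $\Trans(M^{\mathcal S_b}_{\mathcal I})$ and of its randomness sub-block reduce to ordinary Vandermonde ranks $\min(|\mathcal I|,k)$ and $\min(|\mathcal I|,\ell)$ after factoring a power of $x_i$ out of each row. This collapses the paper's several sub-appendices into one computation, yields the leakage as an equality rather than as matching one-sided bounds, and makes both the additivity over $s$ (from the independence of the per-$s$ triples $(A_s,R_s,(\tilde A_{(i,s)})_i)$) and the final algebraic match with \eqref{g1_def}--\eqref{g2_def} transparent; the underlying linear algebra is the same, since the paper's lower bound also hinges on Vandermonde invertibility. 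The only caveats, shared with the paper, are the implicit treatment of a possibly partial last block and the standing assumption that $q>N$ so that distinct nonzero evaluation points exist.
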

 
 \begin{proof}
 See Appendix~\ref{appendix_A}. 
 \end{proof}
\begin{lem}\label{Lemma 2}
For any $\mathcal{I} \subseteq [N]$, let $\tilde{\mathbf{A}}_{\mathcal{I}} \triangleq (\tilde{A}_{\mathcal{I}}^{\mathcal{P}}, \tilde{A}^{\overline{\mathcal{P}}}_{\mathcal{I}})$. We have 
    \begin{align}
        &H(\mathbf{A}|\tilde{\mathbf{A}}_{\mathcal{I}}) = \left(1 - g(|\mathcal{I}|)\right)H({\mathbf{A}}).\label{lemma2_result}
    \end{align}
\end{lem}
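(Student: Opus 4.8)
The idea is to exploit the independent ``product'' structure of the Case-2 construction ($\alpha < \ell/k$). Recall that $\mathbf{A}$ is split into the sub-sequences $A^{\mathcal{P}} = (A_j)_{j\in\mathcal{P}}$ and $A^{\overline{\mathcal{P}}} = (A_j)_{j\in\overline{\mathcal{P}}}$, which are independent since the matrices of $\mathbf{A}$ are mutually independent. The encoding producing $\tilde{A}^{\mathcal{P}}_{\mathcal{I}}$ (part (i), which mimics \eqref{A_S_bar_0}) uses \emph{no} local randomness, so $(A^{\mathcal{P}}, \tilde{A}^{\mathcal{P}}_{\mathcal{I}})$ is a deterministic function of $A^{\mathcal{P}}$ alone; the encoding producing $\tilde{A}^{\overline{\mathcal{P}}}_{\mathcal{I}}$ (part (ii), cf.\ \eqref{A_S_bar}) makes $(A^{\overline{\mathcal{P}}}, \tilde{A}^{\overline{\mathcal{P}}}_{\mathcal{I}})$ a deterministic function of $A^{\overline{\mathcal{P}}}$ together with the fresh random matrices of \eqref{random_mat}, which are independent of $\mathbf{A}$. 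Since $A^{\mathcal{P}}$, $A^{\overline{\mathcal{P}}}$, and that randomness are mutually independent, the pairs $(A^{\mathcal{P}}, \tilde{A}^{\mathcal{P}}_{\mathcal{I}})$ and $(A^{\overline{\mathcal{P}}}, \tilde{A}^{\overline{\mathcal{P}}}_{\mathcal{I}})$ are independent; I would state and justify this as the first step.

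Given that independence, I would expand $H(\mathbf{A}\mid\tilde{\mathbf{A}}_{\mathcal{I}}) = H(A^{\mathcal{P}}, A^{\overline{\mathcal{P}}} \mid \tilde{A}^{\mathcal{P}}_{\mathcal{I}}, \tilde{A}^{\overline{\mathcal{P}}}_{\mathcal{I}})$ via the chain rule and drop the cross terms, using $H(A^{\mathcal{P}}\mid \tilde{A}^{\mathcal{P}}_{\mathcal{I}}, \tilde{A}^{\overline{\mathcal{P}}}_{\mathcal{I}}) = H(A^{\mathcal{P}}\mid\tilde{A}^{\mathcal{P}}_{\mathcal{I}})$ and $H(A^{\overline{\mathcal{P}}}\mid A^{\mathcal{P}}, \tilde{A}^{\mathcal{P}}_{\mathcal{I}}, \tilde{A}^{\overline{\mathcal{P}}}_{\mathcal{I}}) = H(A^{\overline{\mathcal{P}}}\mid\tilde{A}^{\overline{\mathcal{P}}}_{\mathcal{I}})$, to get $H(\mathbf{A}\mid\tilde{\mathbf{A}}_{\mathcal{I}}) = H(A^{\mathcal{P}}\mid\tilde{A}^{\mathcal{P}}_{\mathcal{I}}) + H(A^{\overline{\mathcal{P}}}\mid\tilde{A}^{\overline{\mathcal{P}}}_{\mathcal{I}})$. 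Then I would invoke Lemma~\ref{Lemma_3}: solving \eqref{g1_eq} and \eqref{g2_eq} for the conditional entropies gives $H(A^{\mathcal{P}}\mid\tilde{A}^{\mathcal{P}}_{\mathcal{I}}) = \bigl(1 - (\tfrac{\alpha k}{\ell})^{-1} g_1(|\mathcal{I}|)\bigr) H(A^{\mathcal{P}})$ and $H(A^{\overline{\mathcal{P}}}\mid\tilde{A}^{\overline{\mathcal{P}}}_{\mathcal{I}}) = \bigl(1 - (1-\tfrac{\alpha k}{\ell})^{-1} g_2(|\mathcal{I}|)\bigr) H(A^{\overline{\mathcal{P}}})$.

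To close, I would evaluate the two marginal entropies. Since $A^{\mathcal{P}}$ (resp.\ $A^{\overline{\mathcal{P}}}$) is a sequence of $p$ (resp.\ $m-p$) i.i.d.\ uniform matrices over $\mathbb{F}_q^{C\times D}$, we have $H(A^{\mathcal{P}}) = \tfrac{p}{m} H(\mathbf{A})$ and $H(A^{\overline{\mathcal{P}}}) = \tfrac{m-p}{m} H(\mathbf{A})$; with $p = \alpha\tfrac{k}{\ell} m$ from \eqref{divide_index} these become $\tfrac{\alpha k}{\ell} H(\mathbf{A})$ and $\tfrac{\ell - \alpha k}{\ell} H(\mathbf{A})$. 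Substituting, the factors $(\tfrac{\alpha k}{\ell})^{-1}$ and $(1-\tfrac{\alpha k}{\ell})^{-1}$ cancel exactly against $H(A^{\mathcal{P}})$ and $H(A^{\overline{\mathcal{P}}})$, and the sum reduces to $\bigl(\tfrac{\alpha k}{\ell} - g_1(|\mathcal{I}|)\bigr) H(\mathbf{A}) + \bigl(\tfrac{\ell - \alpha k}{\ell} - g_2(|\mathcal{I}|)\bigr) H(\mathbf{A}) = \bigl(1 - g_1(|\mathcal{I}|) - g_2(|\mathcal{I}|)\bigr) H(\mathbf{A}) = \bigl(1 - g(|\mathcal{I}|)\bigr) H(\mathbf{A})$ by \eqref{g_def}, which is the claim.

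The step I expect to need the most care is the independence decomposition: one must argue that, because $(A^{\mathcal{P}}, \tilde{A}^{\mathcal{P}}_{\mathcal{I}})$ is a deterministic function of $A^{\mathcal{P}}$ while $(A^{\overline{\mathcal{P}}}, \tilde{A}^{\overline{\mathcal{P}}}_{\mathcal{I}})$ is a deterministic function of $A^{\overline{\mathcal{P}}}$ and the fresh randomness of \eqref{random_mat}, and because these three sources are mutually independent, the two pairs are independent --- this is what licenses dropping the cross terms above. A secondary technical point is the ceiling in \eqref{divide_index}: since $\alpha\in\mathbb{Q}$ one may restrict attention to large $m$ for which $\alpha k m/\ell \in \mathbb{N}$, so that $p = \alpha k m/\ell$ exactly, consistently with the asymptotic-in-$m$ analysis used elsewhere; otherwise the $O(1/m)$ discrepancy is absorbed in the limit.
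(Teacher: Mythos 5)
Your proposal is correct and follows essentially the same route as the paper: decompose $H(\mathbf{A}|\tilde{\mathbf{A}}_{\mathcal{I}})$ into $H(A^{\mathcal{P}}|\tilde{A}^{\mathcal{P}}_{\mathcal{I}}) + H(A^{\overline{\mathcal{P}}}|\tilde{A}^{\overline{\mathcal{P}}}_{\mathcal{I}})$ by independence of the two sub-sequences (and their encodings), substitute \eqref{g1_eq} and \eqref{g2_eq}, and use $H(A^{\mathcal{P}}) \to \tfrac{\alpha k}{\ell}H(\mathbf{A})$, $H(A^{\overline{\mathcal{P}}}) \to (1-\tfrac{\alpha k}{\ell})H(\mathbf{A})$ together with $g = g_1 + g_2$. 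Your more explicit justification of the independence step and your handling of the ceiling in \eqref{divide_index} via the large-$m$ limit match what the paper does implicitly.
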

\begin{proof}
We have 
\begin{align}
    H(\mathbf{A}|\tilde{\mathbf{A}}_{\mathcal{I}})
    &\overset{\mathrm{(a)}}{=} H(A^{\mathcal{P}}|\tilde{A}_{\mathcal{I}}^{\mathcal{P}}) + H({A}^{\overline{\mathcal{P}}}|\tilde{A}^{\overline{\mathcal{P}}}_{\mathcal{I}}) \nonumber  \displaybreak[0]\\
    &\overset{\mathrm{(b)}}{=} \left(1 - {\left(\frac{\alpha}{\ell}k\right)}^{-1}g_1(|\mathcal{I}|)\right)H(A^{\mathcal{P}}) \nonumber\\&+ \left(1 - {\left(1 - \frac{\alpha}{\ell}k \right)}^{-1}g_2(|\mathcal{I}|)\right)H({A}^{\overline{\mathcal{P}}}) \nonumber  \displaybreak[0]\\
  &\overset{\mathrm{(c)}}{=} \left(1 - (g_1(|\mathcal{I}|)+ g_2(|\mathcal{I}|))\right)H(\mathbf{A}) \nonumber  \displaybreak[0]\\
		&\overset{\mathrm{(d)}}{=} \left(1 - g(|\mathcal{I}|)\right)H(\mathbf{A}), \nonumber
\end{align}
 where
    \begin{enumerate}
        \item [(a)] holds by mutual independence between $A^{\mathcal{P}}$ and ${A}^{\overline{\mathcal{P}}}$;
    	\item [(b)] holds by \eqref{g1_eq} and \eqref{g2_eq}; 
    	\item [(c)] holds by  \eqref{divide_index}, \eqref{divide_index_2}, and
     \begin{align}	
        H(A^{\mathcal{P}}) &= \frac{\left\lceil\frac{\alpha}{\ell}km\right\rceil}{m} H(\mathbf{A}) \overset{m \to \infty}{\longrightarrow} \left(\frac{\alpha}{\ell}k\right)H(\mathbf{A}),  \displaybreak[0] \nonumber\\
        H({A}^{\overline{\mathcal{P}}})
        &=  \frac{m- \left\lceil\frac{\alpha}{\ell}km\right\rceil}{m}H(\mathbf{A}) \overset{m \to \infty}{\longrightarrow} \left(1- \frac{\alpha}{\ell}k\right)H(\mathbf{A});\nonumber
    \end{align}
        \item [(d)] holds by \eqref{g_def}.
    \end{enumerate}
\end{proof}
Then, for any $\mathcal{L} \in [N]^{\leq \ell}$, we have 
\begin{align}
    I(\mathbf{A}; \tilde{\mathbf{A}}_{\mathcal{L}})
    &\overset{\mathrm{(a)}}{=} g(|\mathcal{L}|)H(\mathbf{A}) \nonumber \\
    &\overset{\mathrm{(b)}}{\leq} \alpha H(\mathbf{A}),\nonumber
\end{align}
where (a) holds by \eqref{lemma2_result}, and (b) holds by \eqref{g_def}, \eqref{g1_def}, and \eqref{g2_def}.

Then, consider the case $\alpha \geq \frac{\ell}{k}$. 
\begin{lem} \label{Lemma0}
For any $\mathcal{I} \subseteq [N],$ we have
    \begin{align}
        g(|\mathcal{I}|) = \frac{I(\mathbf{A}; \tilde{\mathbf{A}}_{\mathcal{I}})}{H(\mathbf{A})}, \label{g0_eq}
    \end{align} 
    where $g$ is defined in \eqref{case_1_g}. 
\end{lem}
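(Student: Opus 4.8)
The plan is to exploit that, in the case $\alpha \geq \ell/k$, the encoding \eqref{A_S_bar_0} uses \emph{no} auxiliary randomness, so $\tilde{\mathbf{A}}_{\mathcal{I}}$ is a deterministic linear image of the uniform source $\mathbf{A}$ and the leakage is determined purely by the rank of the encoding map. Since the $J$ blocks are mutually independent and, within a block $\mathcal{S}_b$, the encoded slice indexed by $s$ depends only on $A_s$, I would first reduce to a single index: the pairs $(A_s,(\tilde A_{(i,s)})_{i\in\mathcal{I}})$, $s\in[m]$, are each a function of $A_s$ alone and hence mutually independent, so $I(\mathbf{A};\tilde{\mathbf{A}}_{\mathcal{I}}) = \sum_{s\in[m]} I\!\left(A_s;(\tilde A_{(i,s)})_{i\in\mathcal{I}}\right)$ and $H(\mathbf{A}\mid\tilde{\mathbf{A}}_{\mathcal{I}}) = \sum_{s\in[m]} H\!\left(A_s\mid(\tilde A_{(i,s)})_{i\in\mathcal{I}}\right)$. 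As in the rest of Section~\ref{Achievability}, I would assume $k\mid m$ (equivalently, argue in the $m\to\infty$ regime) so that every block has exactly $k$ matrices.

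The core step is linear-algebraic. For a fixed block $b$ and $s\in\mathcal{S}_b$, the bijective vertical splitting of $A_s$ into uniform independent pieces $A_{(s,j)}\in\mathbb{F}_q^{\frac Ck\times D}$ gives, by \eqref{A_S_bar_0} and the chain of equalities in \eqref{A_S_0}, $(\tilde A_{(i,s)})_{i\in\mathcal{I}} = \Trans(M_{\mathcal{I}}^{\mathcal{S}_b})\,(A_{(s,j)})_{j\in[k]}$, i.e.\ the pieces are passed through the matrix $M_{\mathcal{I}}^{\mathcal{S}_b}$ of \eqref{M_I_S0}. Factoring $x_i^{(b-1)k}$ out of column $i$ exhibits $M_{\mathcal{I}}^{\mathcal{S}_b}$ as a column-scaled Vandermonde matrix in the distinct nonzero nodes $(x_i)_{i\in\mathcal{I}}$; hence any $\min(|\mathcal{I}|,k)$ of its columns are linearly independent (a Vandermonde-minor argument, exactly as in the recoverability argument of Section~\ref{CSC_1}), so $\Trans(M_{\mathcal{I}}^{\mathcal{S}_b})$ has rank $\min(|\mathcal{I}|,k)$ as a map $\mathbb{F}_q^k\to\mathbb{F}_q^{|\mathcal{I}|}$. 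Applying this entrywise to the $\frac Ck\times D$ pieces, the induced linear map on $A_s$ has rank $\min(|\mathcal{I}|,k)\cdot\frac Ck D$.

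Finally, I would use the elementary fact that a uniform vector over $\mathbb{F}_q^n$ fed through a rank-$\rho$ linear map has conditional entropy $n-\rho$ ($q$-ary units, since the image is uniform on a $\rho$-dimensional support and the joint entropy is $n$). This yields $H\!\left(A_s\mid(\tilde A_{(i,s)})_{i\in\mathcal{I}}\right) = [\,k-|\mathcal{I}|\,]^{+}\frac Ck D$, hence $I\!\left(A_s;(\tilde A_{(i,s)})_{i\in\mathcal{I}}\right) = CD - [\,k-|\mathcal{I}|\,]^{+}\frac Ck D = \min\!\left(\frac{|\mathcal{I}|}{k},1\right)CD$. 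Summing over $s\in[m]$ and dividing by $H(\mathbf{A})=mCD$ gives $I(\mathbf{A};\tilde{\mathbf{A}}_{\mathcal{I}})/H(\mathbf{A}) = \min(|\mathcal{I}|/k,1) = g(|\mathcal{I}|)$, matching \eqref{case_1_g}; for $|\mathcal{I}|\ge k$ one may instead short-circuit via \eqref{recover_all}, which directly gives $H(\mathbf{A}\mid\tilde{\mathbf{A}}_{\mathcal{I}})=0$ and hence ratio $1$. The main obstacle is the bookkeeping around the Vandermonde structure: verifying that \emph{every} subset of at most $k$ columns of $M_{\mathcal{I}}^{\mathcal{S}_b}$ is independent, and correctly propagating that rank through the matrix-valued encoding so that replacing scalars by $\frac Ck\times D$ blocks scales the leaked entropy by exactly $\frac Ck D$ with no cross-terms; the divisibility of $m$ by $k$ is the only other (minor) point, handled as elsewhere in Section~\ref{Achievability}.
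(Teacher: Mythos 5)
Your proof is correct, but it takes a genuinely different route from the paper's. The paper proves Lemma~\ref{Lemma0} by reduction: it observes that \eqref{t_0_0}--\eqref{t_2_0} are the special case $\ell \leftarrow 0$ (hence $L = k$, no randomness) of the general statements \eqref{t_0}, \eqref{t_1_L}, \eqref{t_L_k} established for the case $\alpha < \ell/k$, which are themselves proved by sandwiching $H(A^{\mathcal{S}_b}\mid \tilde{A}^{\mathcal{S}_b}_{\mathcal{I}})$ between an inductive upper bound (Appendix~\ref{sub_appendix_A_1}) and a lower bound obtained through the augmented matrix $G^{\mathcal{S}_b}_{\mathcal{I}}$ and the variable $U_b$ (Appendices~\ref{sub_appendix_A_2} and~\ref{HU_0}). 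You instead exploit the absence of local randomness in Case~1 to compute the conditional entropy exactly in one pass: the per-index encoding is the Kronecker product of $\Trans(M_{\mathcal{I}}^{\mathcal{S}_b})$ with the identity on $\mathbb{F}_q^{\frac{C}{k}\times D}$, the scaled-Vandermonde structure gives rank $\min(|\mathcal{I}|,k)$, and a uniform source through a rank-$\rho$ linear map leaves residual entropy $n-\rho$. Your argument is more elementary and self-contained (no induction, no two-sided bounding, and additivity over $s$ replaces the block-level bookkeeping); what it does not buy is reuse — the paper's machinery is built once and covers both $g_1$, $g_2$, and this lemma, and in particular the randomized case $\alpha<\ell/k$ cannot be handled by your rank argument alone since $\tilde{\mathbf{A}}_{\mathcal{I}}$ is then not a function of $\mathbf{A}$. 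Two minor points to tighten: the linear-independence claim should be phrased for subsets of columns of $M_{\mathcal{I}}^{\mathcal{S}_b}$ (equivalently rows of its transpose), each being $x_i^{(b-1)k}$ times a Vandermonde column with distinct nonzero nodes; and the $k \nmid m$ boundary block should be dispatched the same way the paper does, via the $m\to\infty$ normalization.
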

\begin{proof}
    See Appendix~\ref{appendix_B}.
\end{proof}
Then, for any $\mathcal{L} \in [N]^{\leq \ell}$, we have 
\begin{align}
    I(\mathbf{A}; \tilde{\mathbf{A}}_{\mathcal{L}}) 
    &\overset{\mathrm{(a)}}{=} g(|\mathcal{L}|)H(\mathbf{A}) \nonumber \\
    &\overset{\mathrm{(b)}}{\leq} \alpha H(\mathbf{A}), \nonumber
\end{align}
where (a) holds by \eqref{g0_eq}, and (b) holds by \eqref{case_1_g}.

\section{Concluding remarks}\label{Conclusion}
We studied the trade-offs among privacy, communication rate, and local randomness for the problem of secure distributed batch multiplication of matrices that are independently and uniformly distributed over a finite field. \textcolor{black}{By allowing a controlled information leakage, we extended the perfectly secure setting and found new bounds for communication rate and local randomness. Then,} we characterized optimal trade-offs between privacy, communication rate, and optimal rate of local randomness when matrices are square. Finding the capacity for rectangular matrices remains an open problem. 

\appendices \label{appendix}

\section{Proof of Lemma~\ref{LR-lm2}} \label{appendix-1}
To prove Equation \eqref{randomness_converse_2}, we first show that 
\begin{align}
       \sum_{i \in \mathcal{S}}H(\tilde{\mathbf{A}}_{i}|\tilde{\mathbf{A}}_{\mathcal{V}})\ {\geq}\left(1 - \alpha \frac{H(\mathbf{A})}{H(\mathbf{AB}|\mathbf{B})}\right) H(\mathbf{AB}|\mathbf{B}).\label{randomness_converse_1}
    \end{align}
\begin{proof}
We have 
\begin{align}
    \textstyle \sum_{i \in \mathcal{S}}H(\tilde{\mathbf{A}}_{i}|\tilde{\mathbf{A}}_{\mathcal{V}}) &\overset{\mathrm{(a)}}{\geq} H(\tilde{\mathbf{A}}_{\mathcal{S}}|\tilde{\mathbf{A}}_{\mathcal{V}}) \nonumber  \displaybreak[0]\\
    &\overset{\mathrm{(b)}}{\geq} H(\tilde{\mathbf{A}}_{\mathcal{S}}|\tilde{\mathbf{A}}_{\mathcal{W} \cup \mathcal{V}}, \mathbf{B}) \nonumber  \displaybreak[0]\\
    &\geq I(\tilde{\mathbf{A}}_{\mathcal{S}}; \mathbf{A}|\tilde{\mathbf{A}}_{\mathcal{W} \cup \mathcal{V}}, \mathbf{B}) \nonumber  \displaybreak[0]\\
    &= H(\mathbf{A}|\tilde{\mathbf{A}}_{\mathcal{W} \cup \mathcal{V}}, \mathbf{B}) - H(\mathbf{A}|\tilde{\mathbf{A}}_{\mathcal{W} \cup \mathcal{V} \cup \mathcal{S}}, \mathbf{B}) \nonumber  \displaybreak[0]\\
    &\overset{\mathrm{(c)}}{\geq} H(\mathbf{A}|\tilde{\mathbf{A}}_{\mathcal{W} \cup \mathcal{V}}, \mathbf{B}) - H(\mathbf{A}|{Z}_{\mathcal{W} \cup \mathcal{V} \cup \mathcal{S}}, \mathbf{B}) \nonumber  \displaybreak[0]\\
    &\overset{\mathrm{(d)}}{\geq} H(\mathbf{A}|\tilde{\mathbf{A}}_{\mathcal{W} \cup \mathcal{V}}) + H(\mathbf{AB}|\mathbf{B}) - H(\mathbf{A}) \nonumber  \displaybreak[0]\\
    &= H(\mathbf{AB}|\mathbf{B}) - I(\mathbf{A};\tilde{\mathbf{A}}_{\mathcal{W} \cup \mathcal{V}}) \nonumber  \displaybreak[0]\\
    &\overset{\mathrm{(e)}}{\geq} \left(1 - \alpha \frac{H(\mathbf{A})}{H(\mathbf{AB}|\mathbf{B})}\right) H(\mathbf{AB}|\mathbf{B}), \nonumber
    \end{align}
where
\begin{enumerate}
    \item [(a)] holds by the chain rule;
    \item [(b)] holds because conditioning reduces entropy;
    \item [(c)] holds because from \eqref{converse_proof_lm0-1}, we have
    \begin{align}
        I(\mathbf{A}; \tilde{\mathbf{A}}_{\mathcal{W} \cup \mathcal{V} \cup \mathcal{S}},\mathbf{B}) \geq I(\mathbf{A}; Z_{\mathcal{W} \cup \mathcal{V} \cup \mathcal{S}}|\mathbf{B}). \label{converse_proof_c_1}
    \end{align}
    By expanding both mutual information in \eqref{converse_proof_c_1} and using independence between $\mathbf{A}$ and $\mathbf{B}$, we have $H(\mathbf{A}|\tilde{\mathbf{A}}_{\mathcal{W} \cup \mathcal{V} \cup \mathcal{S}}, \mathbf{B}) \leq H(\mathbf{A}|{Z}_{\mathcal{W} \cup \mathcal{V} \cup \mathcal{S}}, \mathbf{B})$; 
   \item [(d)] holds because by \eqref{converse_proof_lm0-2}, we have
    \begin{align}
        I(\mathbf{A}; Z_{\mathcal{W} \cup \mathcal{V} \cup \mathcal{S}}|\mathbf{B}) \geq I(\mathbf{AB}; Z_{\mathcal{W} \cup \mathcal{V} \cup \mathcal{S}}|\mathbf{B}). \label{converse_proof_lm0-1-2}
    \end{align}
    By expanding both sides of \eqref{converse_proof_lm0-1-2}, we have
    \begin{align}
        &H(\mathbf{A}|\mathbf{B}) - H(\mathbf{A}|{Z}_{\mathcal{W} \cup \mathcal{V} \cup \mathcal{S}}, \mathbf{B}) \nonumber  \displaybreak[0]\\
        &\geq H(\mathbf{AB}|\mathbf{B}) - H(\mathbf{AB}|{Z}_{\mathcal{W} \cup \mathcal{V} \cup \mathcal{S}}, \mathbf{B}) \nonumber  \displaybreak[0]\\
        &= H(\mathbf{AB}|\mathbf{B}), \label{converse_e}
     \end{align}
    where \eqref{converse_e} holds by \eqref{recoverability_constraint} and $|\mathcal{W} \cup \mathcal{V} \cup \mathcal{S}| = k$. Then,  from \eqref{converse_e} and independence of $\mathbf{A}$ and $\mathbf{B}$, we have
    \begin{align}
        H(\mathbf{A}|\tilde{\mathbf{A}}_{\mathcal{W} \cup \mathcal{V}}, \mathbf{B}) - H(\mathbf{A}|{Z}_{\mathcal{W} \cup \mathcal{V} \cup \mathcal{S}}, \mathbf{B}) \geq H(\mathbf{A}|\tilde{\mathbf{A}}_{\mathcal{W} \cup \mathcal{V}}) + H(\mathbf{AB}|\mathbf{B}) - H(\mathbf{A});\nonumber
    \end{align}
    \item [(e)] holds by \eqref{privacy_constraint} because $|\mathcal{W} \cup \mathcal{V}| = \ell$. 
\end{enumerate} 
\end{proof}
Then, with $T \triangleq \frac{1}{k - \ell}\binom{N - v}{\ell - v}^{-1}\binom{N - \ell}{k - \ell}^{-1}$, we have
    \begin{align}
        &\frac{1 -\alpha \frac{H(\mathbf{A})}{H(\mathbf{AB}|\mathbf{B})}}{k - \ell}H(\mathbf{AB}|\mathbf{B}) \displaybreak[0] \nonumber \displaybreak[0]\\
        &= T\sum_{\substack{\mathcal{W} \subseteq [N] \setminus \mathcal{V}\ \nonumber \\ |\mathcal{W}| = \ell - v}}\sum_{\substack{\mathcal{S} \subseteq [N] \setminus \mathcal{W} \cup \mathcal{V}\ \nonumber \\ |\mathcal{S}| = k - \ell}} \left(1 - \alpha \frac{H(\mathbf{A})}{H(\mathbf{AB}|\mathbf{B})} \right){H(\mathbf{AB}|\mathbf{B})} \nonumber \displaybreak[0]\\
        &\overset{(a)}{\leq}  T\sum_{\substack{\mathcal{W} \subseteq [N] \setminus \mathcal{V}\ \nonumber \\ |\mathcal{W}| = \ell - v}}\sum_{\substack{\mathcal{S} \subseteq [N] \setminus (\mathcal{W} \cup \mathcal{V})\ \nonumber \\ |\mathcal{S}| = k - \ell}} \sum_{i \in \mathcal{S}}H(\tilde{\mathbf{A}}_{i}|\tilde{\mathbf{A}}_{\mathcal{V}}) \nonumber \displaybreak[0]\\
        &\overset{(b)}{=} T\sum_{\substack{\mathcal{W} \subseteq [N] \setminus \mathcal{V}\ \nonumber \\ |\mathcal{W}| = \ell - v}} \binom{N - \ell - 1}{k - \ell - 1} \sum_{\substack{i \in [N] \setminus (\mathcal{W} \cup \mathcal{V})}} H(\tilde{\mathbf{A}}_{i}|\tilde{\mathbf{A}}_{\mathcal{V}}) \nonumber \displaybreak[0]\\
        &\overset{(c)}{=} T\binom{N - \ell - 1}{k - \ell - 1}\sum_{\substack{\mathcal{W} \subseteq [N] \setminus \mathcal{V}\ \nonumber \\ |\mathcal{W}| = N - \ell}}\sum_{\substack{i \in \mathcal{W}}} H(\tilde{\mathbf{A}}_{i}|\tilde{\mathbf{A}}_{\mathcal{V}}) \nonumber \displaybreak[0]\\
        &\overset{(d)}{=} T\binom{N - \ell - 1}{k - \ell - 1}\binom{N - v - 1}{N - \ell - 1}\sum_{\substack{i \in [N] \setminus \mathcal{V}}} H(\tilde{\mathbf{A}}_{i}|\tilde{\mathbf{A}}_{\mathcal{V}}) \nonumber \displaybreak[0]\\
        &=\frac{1}{N - v}\sum_{i \in [N] \setminus \mathcal{V}}H(\tilde{\mathbf{A}}_{i}|\tilde{\mathbf{A}}_{\mathcal{V}}) \nonumber \displaybreak[0]\\
        &\overset{(e)}{\leq} \frac{1}{N - v}\sum_{i \in [N] \setminus \mathcal{V}}H(\tilde{\mathbf{A}}_{i^{*}(\mathcal{V})}|\tilde{\mathbf{A}}_{\mathcal{V}}) \nonumber \displaybreak[0]\\
        &= H(\tilde{\mathbf{A}}_{i^{*}(\mathcal{V})}|\tilde{\mathbf{A}}_{\mathcal{V}}) \nonumber \displaybreak[0]\\
        &=H(\tilde{\mathbf{A}}_{[N]}|\tilde{\mathbf{A}}_{\mathcal{V}}) - H(\tilde{\mathbf{A}}_{[N]}|\tilde{\mathbf{A}}_{\mathcal{V} \cup \{i ^{*} (\mathcal{V})\}}), \nonumber
    \end{align}
where
    \begin{enumerate}
        \item [(a)] holds by \eqref{randomness_converse_1};
        \item [(b)] holds because for any $i \in [N] \setminus (\mathcal{W} \cup \mathcal{V})$, the term $H(\tilde{\mathbf{A}}_{i}|\tilde{\mathbf{A}}_{\mathcal{V}})$ appears exactly $\binom{N - \ell - 1}{k - \ell - 1}$ times in the term $\sum_{\substack{\mathcal{S} \subseteq [N] \setminus (\mathcal{W} \cup \mathcal{V})\ \\ |\mathcal{S}| = k - \ell}} \sum_{i \in \mathcal{S}}H(\tilde{\mathbf{A}}_{i}|\tilde{\mathbf{A}}_{\mathcal{V}})$, this argument is similar to \cite[Lemma 3.2]{de1999multiple};
        \item [(c)] holds by a change of variables in the sums;
        \item [(d)] holds because for any $i \in [N] \setminus \mathcal{V}$, $H(\tilde{\mathbf{A}}_{i}|\tilde{\mathbf{A}}_{\mathcal{V}})$ appears exactly $\binom{N - v - 1}{N - \ell - 1}$ times in the term $\sum_{\substack{\mathcal{W} \subseteq [N] \setminus \mathcal{V}\ \\ |\mathcal{W}| = N - \ell}}\sum_{\substack{i \in \mathcal{W}}} H(\tilde{\mathbf{A}}_{i}|\tilde{\mathbf{A}}_{\mathcal{V}})$;
        \item [(e)] holds with $i^{*}(\mathcal{V}) \in \arg\max_{i \in [N] \setminus \mathcal{V}} H(\tilde{\mathbf{A}}_{i}|\tilde{\mathbf{A}}_{\mathcal{V}})$. 
    \end{enumerate}

\section{Proof of Lemma~\ref{Lemma_3}} \label{appendix_A}

Here for any $\mathcal{I} \subseteq [N]$, we prove \eqref{g2_eq}. Given \eqref{g2_def} and $i \triangleq |\mathcal{I}|$, we need to show 
\begin{align}
   & \frac{H({A}^{\overline{\mathcal{P}}}) - H({A}^{\overline{\mathcal{P}}}|\tilde{A}^{\overline{\mathcal{P}}}_{\mathcal{I}})}{H({A}^{\overline{\mathcal{P}}})} \nonumber \\ &= \left\{\begin{matrix}
    & 0,\ & i \in [0: \ell]\\
    		& \frac{ \frac{1 - \alpha}{k -\ell} (i - \ell) + \alpha -   \left(\frac{\alpha}{\ell} i\right) }{1 -  \frac{\alpha}{\ell} k} ,\ & i \in (\ell: k)\\
    		& 1 & i \in [k: N]
    	\end{matrix}\right. \ , \nonumber
\end{align}
i.e.,
\begin{align}
    H({A}^{\overline{\mathcal{P}}}|\tilde{A}^{\overline{\mathcal{P}}}_{\mathcal{I}}) = \left\{\begin{matrix}
    & H({A}^{\overline{\mathcal{P}}}),\ & i \in [0: \ell]\\
    		& \frac{k - i}{k - \ell}H({A}^{\overline{\mathcal{P}}}),\ & i \in (\ell: k)\\
    		& 0 & i \in [k: N]
    	\end{matrix}\right. \ .\label{g_2_to_prove}
\end{align}

Note that to prove \eqref{g1_eq}, given \eqref{g1_def}, we need to show

\begin{align}
   & \frac{H({A}^{{\mathcal{P}}}) - H({A}^{{\mathcal{P}}}|\tilde{A}^{{\mathcal{P}}}_{\mathcal{I}})}{H({A}^{{\mathcal{P}}})} = \left\{\begin{matrix}
    & \frac{i}{k},\ & i \in [0: k)\\
    		& 1, \ & i \in [k: N]
    	\end{matrix}\right. \ , \nonumber
\end{align}
i.e.,
\begin{align}
   H({A}^{{\mathcal{P}}}|\tilde{A}^{{\mathcal{P}}}_{\mathcal{I}}) = \left\{
    \begin{matrix}
        & \left(\frac{k - i}{k}\right){H({A}^{{\mathcal{P}}})},\ & i \in [0: k)\\
    	& 0 & i \in [k: N]
    \end{matrix}\right. \ ,\label{g_0_to_prove}
\end{align}
which is a special case of \eqref{g_2_to_prove} with the substitutions ${{\mathcal{P}}} \leftarrow \overline{\mathcal{P}}$ and $\ell \leftarrow 0$. 

Let $L \triangleq k - \ell$  and $t \triangleq {[k - i]}^{+}$.  We rewrite  \eqref{g_2_to_prove} as follows 
\begin{align}
    t = 0 \Rightarrow H({A}^{\overline{\mathcal{P}}}|\tilde{A}^{\overline{\mathcal{P}}}_{\mathcal{I}}) &= 0, \label{t_0} \\
    t \in [1: L) \Rightarrow H({A}^{\overline{\mathcal{P}}}|\tilde{A}^{\overline{\mathcal{P}}}_{\mathcal{I}}) &=  \frac{t}{L}H({A}^{\overline{\mathcal{P}}}), \label{t_1_L} \\
    t \in [L:k] \Rightarrow H({A}^{\overline{\mathcal{P}}}|\tilde{A}^{\overline{\mathcal{P}}}_{\mathcal{I}}) &=  H({A}^{\overline{\mathcal{P}}}). \label{t_L_k}
\end{align}

When $t=0$, we have $|\mathcal{I}| \geq k$, without loss of generality consider one block $ \mathcal{S}_b$, $b \in [J]$, and  ${\overline{{\mathcal{P}}}} \leftarrow \mathcal{S}_b$ as the blocks are independent. \textcolor{black}{Given $\tilde{A}_{\mathcal{I}}^{\mathcal{S}_b}$, we can compute $\sigma^{\mathcal{S}_b}_{\mathcal{I}}$ from \eqref{A_I_P_bar}. The matrix equation in \eqref{sigma_Ub} has a unique solution since $ M_{\mathcal{I}}^{\mathcal{S}_b}$ defined in \eqref{M_I_Sb} is invertible, and with $U_b$, one can determine ${A}^{\mathcal{S}_b}$ from \eqref{*}}; hence, we deduce $H({A}^{\overline{\mathcal{P}}}|\tilde{A}^{\overline{\mathcal{P}}}_{\mathcal{I}}) = 0$. \textcolor{black}{ Similarly, for \eqref{g_0_to_prove} when $i \geq k$, we have $H({A}^{{\mathcal{P}}}|\tilde{A}^{{\mathcal{P}}}_{\mathcal{I}}) = 0$ as the matrix equation in \eqref{A_S_0} has a unique solution. }

For $t \in [L: k]$, we have $\mathcal{|I|} \leq \ell$. For any $\mathcal{I} \in [N]^{\leq \ell}$, we prove  \eqref{t_L_k} in Appendix \ref{Appendix_A_3_Proof}. 

Finally, for $t \in [1 : L)$, we have $\mathcal{|I|} = k - t$.  For any $ \mathcal{I} \in [N]^{= k - t}$, we prove \eqref{t_1_L} in Appendix \ref{Appendix_A_Proof}.  

\section{Proof of \eqref{t_L_k}} \label{Appendix_A_3_Proof}
With $\mathcal{I} \in [N]^{\leq \ell}$, we provide the proof for one block $S_b$, for any $b \in [J]$, without loss of generality as the blocks are independent. We have 
\begin{align}
H({A}^{{\mathcal{S}_b}}|\tilde{A}^{{\mathcal{S}_b}}_{\mathcal{I}}) \leq H({A}^{{\mathcal{S}_b}}), \label{need_to_prove_1_1}
\end{align}
since conditioning reduces entropy. Also, we have 
\begin{align}
     &H({A}^{{\mathcal{S}_b}}|\tilde{A}^{{\mathcal{S}_b}}_{\mathcal{I}}) \nonumber\\
     &= H({A}^{{\mathcal{S}_b}},\tilde{A}^{{\mathcal{S}_b}}_{\mathcal{I}}) - H(\tilde{A}^{{\mathcal{S}_b}}_{\mathcal{I}}) \nonumber  \displaybreak[0]\\
     &\overset{\mathrm{(a)}}{=}  
      H\left({A}^{{\mathcal{S}_b}}, \textcolor{black}{\sigma^{\mathcal{S}_b}_{\mathcal{I}}}\right) - H(\textcolor{black}{\tilde{A}^{\mathcal{S}_{b}}_{\mathcal{I}}}) \nonumber  \displaybreak[0]\\
     &\overset{\mathrm{(b)}}{=} H\left({A}^{{\mathcal{S}_b}}, R^{\mathcal{S}_{b}}\right) - H(\textcolor{black}{\tilde{A}^{\mathcal{S}_{b}}_{\mathcal{I}}}) \nonumber  \displaybreak[0]\\
     &=H({A}^{{\mathcal{S}_b}}) +  \textcolor{black}{\sum_{s \in \mathcal{S}_b}\sum_{r \in [\ell]} H(R_{(s,r)})} - H(\textcolor{black}{\tilde{A}^{\mathcal{S}_{b}}_{\mathcal{I}}})\nonumber \displaybreak[0]\\
     &\overset{\mathrm{(c)}}{=} \log|\left(\mathbb{F}_q^{C \times D}\right)^{1 \times L}| +  \ell \log|\mathbb{F}_q^{C \times D}| - H(\textcolor{black}{\tilde{A}^{\mathcal{S}_{b}}_{\mathcal{I}}})\nonumber  \displaybreak[0]\\
     &\overset{\mathrm{(d)}}{\geq}  k\log|\mathbb{F}_q^{C \times D}| -  |\mathcal{I}|\log|\mathbb{F}_q^{C \times D}|\nonumber  \displaybreak[0]\\
     &\overset{\mathrm{(e)}}{\geq}  k\log|\mathbb{F}_q^{C \times D}| -  \ell\log|\mathbb{F}_q^{C \times D}|\nonumber  \displaybreak[0]\\
      &=  L\log|\mathbb{F}_q^{C \times D}| \nonumber  \displaybreak[0]\\
      &= H({A}^{{\mathcal{S}_b}}), \label{need_to_prove_1_2}
\end{align}
where
\begin{enumerate}
    \item [(a)] \textcolor{black}{holds by \eqref{A_I_P_bar}};
    \item [(b)] \textcolor{black}{holds by \eqref{U_b_mat}, \eqref{sigma_Ub}};
    \item [(c)] holds by \eqref{A_P_bar} where ${A}^{\mathcal{S}_b} \in \left(\mathbb{F}_q^{C \times D}\right)^{1 \times L}$ \textcolor{black}{and \eqref{random_mat} where we use a sequence of $\ell$ independent matrices $R_{(s,r)} \in \mathbb{F}_q^{\frac{C}{L} \times D}$};
    \item [(d)] holds by $L=k-\ell$ and \textcolor{black}{given  \eqref{A_S_bar} and \eqref{A_P_bar_I_def},} we have $\textcolor{black}{H(\tilde{A}^{\mathcal{S}_{b}}_{\mathcal{I}}) \leq |\mathcal{S}_b|\times |\mathcal{I}| \log\left|\mathbb{F}_{q}^{\frac{C }{L}\times D}\right|= |\mathcal{I}|\log\left|\mathbb{F}_{q}^{C \times D}\right|}$;
    \item [(e)] holds because $|\mathcal{I}| \leq \ell$.
\end{enumerate}
From \eqref{need_to_prove_1_1} and \eqref{need_to_prove_1_2}, we deduce \eqref{t_L_k}.

\section{Proof of \eqref{t_1_L}} \label{Appendix_A_Proof}

We will prove  \eqref{appendix_lemma_1} and \eqref{appendix_lemma_2}  for $\mathcal{I} \in [N]^{= k - t}$
    \begin{align}    &H({A}^{\overline{\mathcal{P}}}|\tilde{A}^{\overline{\mathcal{P}}}_{\mathcal{I}}) \leq \frac{t}{L}H({A}^{\overline{\mathcal{P}}}) \label{appendix_lemma_1},  \displaybreak[0]\\
    &H({A}^{\overline{\mathcal{P}}}|\tilde{A}^{\overline{\mathcal{P}}}_{\mathcal{I}}) \geq \frac{t}{L}H({A}^{\overline{\mathcal{P}}}).\label{appendix_lemma_2}
    \end{align}

Without loss of generality, consider one block  ($\overline{\mathcal{P}} = \mathcal{S}_b$, $b \in [J]$), as the blocks are independent. We prove \eqref{appendix_lemma_1} by induction in Appendix~\ref{sub_appendix_A_1}. We prove \eqref{appendix_lemma_2} in Appendix~\ref{sub_appendix_A_2}.  

\subsection{Proof of \eqref{appendix_lemma_1}} \label{sub_appendix_A_1}
We employ a proof by induction on $t$. 
For $t = 0$, we have $H({A}^{\mathcal{S}_b}|\tilde{A}^{\mathcal{S}_b}_{\mathcal{I}}) = 0$ because  $|\mathcal{I}| = k$, and \textcolor{black}{the matrix equation in \eqref{sigma_Ub} has a unique solution since $ M_{\mathcal{I}}^{\mathcal{S}_b}$ defined in \eqref{M_I_Sb} is invertible}.  For any $\mathcal{I} \in [N]^{= k - t}$, suppose that \eqref{appendix_lemma_1} holds. Then, consider $t^{'} \triangleq t+1$ and $\mathcal{I'} \in [N]^{=k - t^{'}}$. 

For any $\mathcal{I}$ that contains $\mathcal{I}^{'}$, we have 
\begin{align}
    H({A}^{\mathcal{S}_b}|\tilde{A}^{\mathcal{S}_b}_{\mathcal{I}}) &= H({A}^{\mathcal{S}_b}, \tilde{A}^{\mathcal{S}_b}_{\mathcal{I}}) - H(\tilde{A}^{\mathcal{S}_b}_{\mathcal{I}}) \nonumber  \displaybreak[0]\\
    &\overset{\mathrm{(a)}}{\geq} H({A}^{\mathcal{S}_b}, \tilde{A}^{\mathcal{S}_b}_{\mathcal{I'}}) \nonumber  \displaybreak[0]\\ &- \left(H(\tilde{A}^{\mathcal{S}_b}_{\mathcal{I'}}) + H(\tilde{A}^{\mathcal{S}_b}_{\mathcal{I} \setminus \mathcal{I'}}\ |\tilde{A}^{\mathcal{S}_b}_{\mathcal{I'}})\right) \nonumber  \displaybreak[0]\\
     &\overset{\mathrm{(b)}}{=} H({A}^{\mathcal{S}_b} | \tilde{A}^{\mathcal{S}_b}_{\mathcal{I'}}) - H(\tilde{A}^{ \mathcal{S}_b}_{\mathcal{I} \setminus \mathcal{I'}} |\tilde{A}^{\mathcal{S}_b}_{\mathcal{I'}})  \nonumber  \displaybreak[0]\\
    &\overset{\mathrm{(c)}}{\geq} H({A}^{\mathcal{S}_b} |\tilde{A}^{\mathcal{S}_b}_{\mathcal{I'}}) - H(\tilde{A}^{\mathcal{S}_b}_{\mathcal{I} \setminus \mathcal{I'}}) \nonumber  \displaybreak[0]\\
    &\textcolor{black}{\overset{\mathrm{(d)}}{\geq} H({A}^{\mathcal{S}_b} |\tilde{A}^{\mathcal{S}_b}_{\mathcal{I'}}) - \log|\mathbb{F}_q{^{C \times D}}|} \nonumber  \displaybreak[0]\\
	&\overset{\mathrm{(e)}}{=} H({A}^{\mathcal{S}_b} |\tilde{A}^{\mathcal{S}_b}_{\mathcal{I'}}) -\frac{1}{L}H({A}^{\mathcal{S}_b}),  \nonumber  \displaybreak[0]
\end{align}
where
\begin{enumerate}
    \item [(a)] and (b) hold by the chain rule;
    \item [(c)] holds because conditioning reduces entropy;
    \item [(d)] holds because \textcolor{black}{given \eqref{A_S_bar} and \eqref{A_P_bar_I_def},} we have \textcolor{black}{ $\tilde{A}^{\mathcal{S}_b}_{\mathcal{I} \setminus \mathcal{I'}} = (\tilde{A}_{(i,s)})_{s \in \mathcal{S}_b, i \in \mathcal{I} \setminus \mathcal{I'}} \in\ \mathbb{F}_{q}^{C \times D}$ because $|\mathcal{I} \setminus \mathcal{I'}| = 1$}; 
    \item [(e)] holds by \eqref{A_P_bar} where ${A}^{\mathcal{S}_b} \in \left(\mathbb{F}_q{^{{C \times D}}}\right)^{1 \times L}$.
\end{enumerate}
Consequently, we have
\begin{align}
H({A}^{\mathcal{S}_b}|\tilde{A}^{\mathcal{S}_b}_{\mathcal{I'}}) &\leq H({A}^{\mathcal{S}_b}|\tilde{A}^{\mathcal{S}_b}_{\mathcal{I}}) +\frac{1}{L}H({A}^{\mathcal{S}_b}) \nonumber \\  
&\leq \frac{t+1}{L}H({A}^{\mathcal{S}_b} ), \nonumber
\end{align}
where the last inequality holds by the induction hypothesis. 

\subsection{Proof of \eqref{appendix_lemma_2}} \label{sub_appendix_A_2}

\begin{lem} \label{G_I_lem}
Let $\mathcal{I} \in [N]^{= k - L}$, $b \in [J]$, we have
  \begin{align}
      H(U_b|A^{\mathcal{S}_b}, \tilde{A}^{\mathcal{S}_b}_{\mathcal{I}}) = 0. \label{HUSB_0}
  \end{align}
\end{lem}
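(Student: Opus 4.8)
The plan is to show that $U_b$ is a deterministic function of the pair $(A^{\mathcal{S}_b}, \tilde{A}^{\mathcal{S}_b}_{\mathcal{I}})$, which immediately gives $H(U_b \mid A^{\mathcal{S}_b}, \tilde{A}^{\mathcal{S}_b}_{\mathcal{I}}) = 0$. First I would record that $|\mathcal{I}| = k - L = \ell$ by the definition $L \triangleq k - \ell$, and recall from \eqref{sigma_Ub}--\eqref{A_I_P_bar} that $\tilde{A}^{\mathcal{S}_b}_{\mathcal{I}}$ and $\sigma^{\mathcal{S}_b}_{\mathcal{I}} = \Trans(M_{\mathcal{I}}^{\mathcal{S}_b}) \times U_b$ differ only by a deterministic relabeling, so conditioning on $\tilde{A}^{\mathcal{S}_b}_{\mathcal{I}}$ is equivalent to conditioning on $\sigma^{\mathcal{S}_b}_{\mathcal{I}}$.

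Next I would split the linear map in \eqref{sigma_Ub} along the boundary between the $A$-block and the $R$-block of $U_b$. Writing $U_b$ as the stacking of $\Trans([(A_{(s,j)})_{j\in[L]}])_{s\in\mathcal{S}_b}$ on top of $\Trans([(R_{(s,r)})_{r\in[\ell]}])_{s\in\mathcal{S}_b}$, and correspondingly $\Trans(M_{\mathcal{I}}^{\mathcal{S}_b}) = [\,P \mid Q\,]$, where $P$ collects the first $L$ columns and $Q$ the last $\ell$ columns of $\Trans(M_{\mathcal{I}}^{\mathcal{S}_b})$, equation \eqref{sigma_Ub} reads (with all products understood blockwise) $\sigma^{\mathcal{S}_b}_{\mathcal{I}} = P \cdot (A_{(s,j)})_{j\in[L],\,s\in\mathcal{S}_b} + Q \cdot (R_{(s,r)})_{r\in[\ell],\,s\in\mathcal{S}_b}$. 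Given $A^{\mathcal{S}_b}$ one knows the first summand, and given $\tilde{A}^{\mathcal{S}_b}_{\mathcal{I}}$ one knows $\sigma^{\mathcal{S}_b}_{\mathcal{I}}$, hence $Q \cdot (R_{(s,r)})_{r\in[\ell],\,s\in\mathcal{S}_b}$ is determined.

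It then remains to check that $Q$ is invertible. By \eqref{M_i}, the $r$-th of the bottom $\ell$ rows of $M_i^{\mathcal{S}_b}$ carries the exponent $(b-1)L + L + r - 1 = bL + r - 1$, so $Q = [\,x_i^{bL + r - 1}\,]_{i \in \mathcal{I},\, r \in [\ell]} \in \mathbb{F}_q^{\ell \times \ell}$. Factoring $x_i^{bL}$ out of row $i$ gives $\det Q = \big(\prod_{i \in \mathcal{I}} x_i^{bL}\big)\det\big([\,x_i^{r-1}\,]_{i\in\mathcal{I},\,r\in[\ell]}\big)$, where the second factor is an $\ell \times \ell$ Vandermonde determinant in the $\ell$ nodes $(x_i)_{i\in\mathcal{I}}$; since the $(x_i)_{i\in[N]}$ are distinct and non-zero, $\det Q \neq 0$. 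Therefore $(R_{(s,r)})_{r\in[\ell],\,s\in\mathcal{S}_b} = Q^{-1}\big(\sigma^{\mathcal{S}_b}_{\mathcal{I}} - P\cdot(A_{(s,j)})_{j\in[L],\,s\in\mathcal{S}_b}\big)$ is a deterministic function of $(A^{\mathcal{S}_b}, \tilde{A}^{\mathcal{S}_b}_{\mathcal{I}})$, hence so is $U_b$, which proves \eqref{HUSB_0}. The only delicate point is the index bookkeeping needed to align the row partition of $M_{\mathcal{I}}^{\mathcal{S}_b}$ (whose rows carry the exponents $(b-1)L,\dots,(b-1)L+k-1$) with the $A$/$R$ block structure of $U_b$; once that is set up, the Vandermonde invertibility argument is exactly the one already used for recoverability in Section~\ref{CSC_2}.
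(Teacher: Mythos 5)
Your proof is correct and follows essentially the same route as the paper: the paper stacks $\Trans(V)$ on top of $\Trans(M_{\mathcal{I}}^{\mathcal{S}_b})$ to form an invertible $k\times k$ matrix $G^{\mathcal{S}_b}_{\mathcal{I}}$ whose determinant reduces, by the same block-triangular structure you exploit, to the determinant of exactly your matrix $Q$ (the paper's $\Trans(M'')$). Your version just solves for the $R$-block directly instead of inverting the full $G^{\mathcal{S}_b}_{\mathcal{I}}$, which is an equivalent presentation of the same Vandermonde argument.
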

\begin{proof}
    See Appendix~\ref{HU_0}.
\end{proof} 
For any $\mathcal{I} \in [N]^{= k - t}$, we have 
\begin{align} 
    &H(A^{\mathcal{S}_b}|\tilde{A}^{\mathcal{S}_b}_{\mathcal{I}}) \nonumber  \displaybreak[0]\\
    &{\geq} I(A^{\mathcal{S}_b}; U_b|\tilde{A}^{\mathcal{S}_b}_{\mathcal{I}}) \nonumber  \displaybreak[0]\\
    &= H(U_b|\tilde{A}^{\mathcal{S}_b}_{\mathcal{I}}) - H(U_b|A^{\mathcal{S}_b}, \tilde{A}^{\mathcal{S}_b}_{\mathcal{I}}) \nonumber  \displaybreak[0]\\
    &\overset{\mathrm{(a)}}{=} H(U_b|\tilde{A}^{\mathcal{S}_b}_{\mathcal{I}}) \nonumber  \displaybreak[0]\\
    &\overset{\mathrm{(b)}}{\geq} H(U_b) - H(\tilde{A}^{\mathcal{S}_b}_{\mathcal{I}}) \nonumber  \displaybreak[0]\\
    &\overset{\mathrm{(c)}}{\geq} H(U_b) - ({k - t})\log|\mathbb{F}_{q}^{C \times D}| \nonumber \displaybreak[0] \\
    &\overset{\mathrm{(d)}}{=} H(A^{\mathcal{S}_b}) + (t - L)\log|\mathbb{F}_q^{C \times D}| \nonumber \displaybreak[0] \\
    &\overset{\mathrm{(e)}}{=} \frac{t}{L}H(A^{\mathcal{S}_b}), \label{last_step}
\end{align}
where
\begin{enumerate}
    \item [(a)] holds by Lemma~\ref{G_I_lem}; 
    \item [(b)] holds because 
\begin{align}
    H(U_{b}|\tilde{A}^{\mathcal{S}_b}_{\mathcal{I}})  &= H(U_{b}, \tilde{A}^{\mathcal{S}_b}_{\mathcal{I}}) - H(\tilde{A}^{\mathcal{S}_b}_{\mathcal{I}})  \nonumber \\ &\geq H(U_{b}) -H(\tilde{A}^{\mathcal{S}_b}_{\mathcal{I}}); \nonumber
\end{align}
    \item [(c)] holds because $|\mathcal{I}| = k-t$ and \textcolor{black}{given \eqref{A_S_bar} and \eqref{A_P_bar_I_def},}  we have \textcolor{black}{$
    H(\tilde{A}^{\mathcal{S}_b}_{\mathcal{I}})\leq |\mathcal{I}| \times |\mathcal{S}_b| \log\left|\mathbb{F}_{q}^{\frac{C }{L}\times D}\right|= (k-t)\log\left|\mathbb{F}_{q}^{C \times D}\right|$;}
    \item [(d)] holds because 
    \begin{align} 
        H(U_{b})   
        &= H(A^{\mathcal{S}_b}) + H(R^{ \mathcal{S}_b}) \nonumber  \displaybreak[0]\\
        &= H(A^{\mathcal{S}_b}) + (k - L)\log|\mathbb{F}_{q}^{C \times D}|; \label{**}
    \end{align}
    \item [(e)] holds by \eqref{A_P_bar} where $A^{\mathcal{S}_b}$ is uniformly distributed over ${\left(\mathbb{F}_{q}^{C \times D}\right)}^{1 \times L}$. 
\end{enumerate}

\section{Proof of Lemma~\ref{G_I_lem}} \label{HU_0}
For any $ \mathcal{I} \in [N]^{= k - L}$, $b \in [J]$, define
\begin{align}
G^{\mathcal{S}_b}_{\mathcal{I}} &\triangleq\  \begin{bmatrix}
    \Trans(V) \\ \textcolor{black}{\Trans(M_{\mathcal{I}}^{\mathcal{S}_b})}\\
    \end{bmatrix}  \in\ \mathbb{F}_{q}^{k \times k},  \label{G_b_def}
\end{align}
where $V$ and $M_{\mathcal{I}}^{\mathcal{S}_b}$ are defined in \eqref{V_def} and \eqref{M_I_Sb}, respectively. Then, from \eqref{A_I_P_bar}, \eqref{*}, and \eqref{G_b_def}, we have
\begin{align}
    \textcolor{black}{G^{\mathcal{S}_b}_{\mathcal{I}} \times U_b} &= \begin{bmatrix}
    I \ \ \ \ \ \ O \\ \textcolor{black}{\Trans(M_{\mathcal{I}}^{\mathcal{S}_b})}\\
    \end{bmatrix} \times  \textcolor{black}{\left[ \begin{array}{c} \Trans([(A_{(s,j)})_{j \in [L]}]) \\  \Trans([(R_{(s, r)})_{r \in [\ell]}]) \end{array} \right]_{s \in \mathcal{S}_b}} = \begin{bmatrix} A^{{\mathcal{S}_b}} \\ \sigma^{{\mathcal{S}_b}}_{\mathcal{I}}\end{bmatrix}. \label{UG_2}
\end{align}
We now prove that $G^{\mathcal{S}_b}_{\mathcal{I}}$ is invertible. We analyze the first block $b=1$ without loss of generality since the blocks are independent. Let $\mathcal{I} = [k-L]$, we have 
\begin{align}
     G^{\mathcal{S}_1}_{\mathcal{I}} &= \begin{bmatrix} \Trans(V) \\ \textcolor{black}{\Trans(M_{\mathcal{I}}^{\mathcal{S}_1})}\end{bmatrix} = \begin{bmatrix}
I & O\\
\textcolor{black}{\Trans}(M^{'}) & \textcolor{black}{\Trans}( M^{''})
\end{bmatrix} \in\ \mathbb{F}_{q}^{k \times k},\nonumber
\end{align} where \( I \in \mathbb{F}_{q}^{L \times L} \) is the identity matrix,  \( O \in \mathbb{F}_{q}^{(k - L) \times L} \) is the zero matrix, and
\begin{align}
M^{'} &\triangleq \begin{bmatrix}
1 & 1 & \ldots & 1 \\
x_1 & x_2 & \ldots & x_{k-L} \\
\vdots & \vdots & \ldots & \vdots \\
x_1^{L-1} & x_2^{L-1} & \ldots & x_{k-L}^{L-1}
\end{bmatrix} \in {\mathbb{F}_{q}}^{L \times \ ( k-L)},\nonumber\\ \displaybreak[0]  \nonumber  \displaybreak[0]\\
M^{''} &\triangleq \begin{bmatrix}
x_1^L & x_2^L & \ldots & x_{k-L}^L \\
x_1^{L+1} & x_2^{L+1} & \ldots & x_{k-L}^{L+1} \\
\vdots & \vdots & \ddots & \vdots \\
x_1^{k-1} & x_2^{k-1} & \ldots & x_{k-L}^{k-1}
\end{bmatrix} \in {\mathbb{F}_{q}}^{( k-L) \times \ ( k-L)}.\nonumber
\end{align}
Then, we have
\begin{align}
	det (G^{\mathcal{S}_1}_\mathcal{I})  \nonumber  \displaybreak[0] &= \ det\left(\Trans({M^{''}})\right) .  \nonumber  
\end{align}
Similar to the discussion regarding the matrix in \eqref{Z_def_2}, $\Trans({M^{''}})$ is invertible, and thus 
$G^{\mathcal{S}_1}_\mathcal{I}$ is invertible.

\section{Proof of Lemma~\ref{Lemma0}} \label{appendix_B}
To show  \eqref{g0_eq}, it is sufficient,  given \eqref{case_1_g}, to show that, for any $\mathcal{I} \subseteq [N]$ with $i \triangleq |\mathcal{I}|$, we have
\begin{align}
    \frac{H(\mathbf{A}) - H(\mathbf{A}|\tilde{\mathbf{A}}_{\mathcal{I}})}{H(\mathbf{A})} = \left\{\begin{matrix}
    & \frac{i}{k} & i \in [0: k)\\
    & 1 & i \in [k: N]
    	\end{matrix}\right. \ , \nonumber
\end{align}
i.e.,
\begin{align}
    H(\mathbf{A}|\tilde{\mathbf{A}}_{\mathcal{I}}) = \left\{
    \begin{matrix}
        & \left(\frac{k - i}{k}\right)H(\mathbf{A}),\ & i \in [0: k)\\
    	& 0 & i \in [k: N]
    \end{matrix}\right. \ .\label{g_0_0_to_prove}
\end{align}
Let $t \triangleq {[k - i]}^{+}$ and rewrite \eqref{g_0_0_to_prove} as
\begin{align}
    t = 0 \Rightarrow H(\mathbf{A}|\tilde{\mathbf{A}}_{\mathcal{I}}) &= 0, \label{t_0_0}  \displaybreak[0]\\
    t \in [1: k) \Rightarrow H(\mathbf{A}|\tilde{\mathbf{A}}_{\mathcal{I}}) &= \frac{t}{k}H(\mathbf{A}), \label{t_1_0} \displaybreak[0]\\
    t = k \Rightarrow H(\mathbf{A}|\tilde{\mathbf{A}}_{\mathcal{I}}) &= H(\mathbf{A}). \label{t_2_0}
\end{align}

The expressions in \eqref{t_0_0}, \eqref{t_1_0}, and \eqref{t_2_0} are a special case of \eqref{t_0}, \eqref{t_1_L}, and \eqref{t_L_k}, respectively, with the substitutions $\overline{\mathcal{P}} \leftarrow [m]$ and $\ell \leftarrow 0$.   

\bibliography{refs}
\bibliographystyle{ieeetr}

\end{document}